\def\mnotes{0}
\def\final{0}
\def\llncs{0}
\documentclass[11pt]{article}

\usepackage{amsfonts,amsmath,epsfig,fullpage, graphicx, multirow}
\usepackage{color}


\def\colorson{0}

\usepackage{times}
\usepackage{color}
\usepackage{amsfonts,amsmath}
\usepackage{url}

 \ifnum\llncs=0

\usepackage{amsthm}
\usepackage{latexsym,amssymb}
\usepackage{graphicx}
\usepackage{wrapfig}


\usepackage[colorlinks=true,breaklinks=true,linkcolor=black,
citecolor=black,filecolor=black,menucolor=black,
pagecolor=black,urlcolor=black]{hyperref}



\topmargin 0pt

\advance \topmargin by -\headheight

\advance \topmargin by -\headsep


 \textheight 9in



\fi 


\ifnum\mnotes=0
\newcommand{\mnote}[1]{}
\else
\newcounter{mynotes}
\setcounter{mynotes}{0}
\newcommand{\mnote}[1]{\addtocounter{mynotes}{1}{{\bf !}}%
\marginpar{\scriptsize  {\arabic{mynotes}.\ {\sf \textcolor{red}{#1}}}}}
\fi 

\newcommand{\snote}[1]{\mnote{S: #1}}
\newcommand{\anote}[1]{\mnote{A: #1}}
\newcommand{\enote}[1]{\mnote{E: #1}}

\ifnum\colorson=1
\newenvironment{todo}{\noindent
\sf \footnotesize \textcolor{blue}{To go here}:
\begin{CompactItemize}\color{blue}}
{\color{black}\end{CompactItemize}\rm \normalsize}
\else

\fi

\ifnum\llncs=0

\newtheorem{theorem}{Theorem}[section]
\newtheorem{lemma}[theorem]{Lemma}

\newtheorem{proposition}[theorem]{Proposition}
\newtheorem{claim}[theorem]{Claim}

\newtheorem{fact}[theorem]{Fact}
\newtheorem{corollary}[theorem]{Corollary}

\newtheorem{definition}{Definition}[section]


\theoremstyle{definition}


\theoremstyle{remark}

\newtheorem{myremark}{Remark} [section]
\newenvironment{remark}{\begin{myremark}}{$\diamondsuit$\end{myremark}}

\newtheorem{myexample}{Example}

\else
\spnewtheorem{protocol}{Protocol}{\bfseries}{\rmfamily}
\spnewtheorem{algm}{Algorithm}{\bfseries}{\rmfamily}
\spnewtheorem{fact}{Fact}{\bfseries}{\rmfamily}
\spnewtheorem{myclaim}{Claim}{\bfseries}{\itshape}
\fi

\newcommand{\thmref}[1]{Theorem~\ref{thm:#1}}

\newcommand{\lemref}[1]{Lemma~\ref{lem:#1}}

\newcommand{\corref}[1]{Corollary~\ref{cor:#1}}

\newcommand{\defref}[1]{Definition~\ref{def:#1}}

\newcommand{\secref}[1]{Section~\ref{sec:#1}}


\newcommand{\comment}[1]{}
\newcommand{\ignore}[1]{}

\newenvironment{CompactItemize}{
  \vspace{-4pt}
 \begin{list}{$\bullet$}{%
      \setlength{\leftmargin}{12pt}%
      \setlength{\itemsep}{-2pt}
      }}
  {\end{list}}

\newenvironment{proofof}[1]{\begin{proof}[Proof of {#1}]
}{\end{proof}}

\newcommand{\ith}{i^{th}}

\newcommand{\kth}{k^{th}}

\newcommand{\eps}{\epsilon}

\DeclareMathOperator{\polylog}{polylog}

\DeclareMathOperator{\poly}{poly}




\newcommand{\defeq}{\stackrel{{\mbox{\tiny def}}}{=}}

\newcommand{\beq}{\begin{equation}}
\newcommand{\eeq}{\end{equation}}
\newcommand{\bml}{{\begin{multline}}}
\newcommand{\eml}{{\end{multline}}}



\newcommand{\etal}{{\it et~al.\ }}

\newcommand{\splong}[1]{$#1$-transitive-closure-spanner}
\newcommand{\spshort}[1]{$#1$-TC-spanner}
\newcommand{\spshorts}[1]{$#1$-TC-spanners}
\newcommand{\spshorter}{TC-spanner}

\newcommand{\tcspanner}{{\sc $k$-TC-Spanner}}
\newcommand{\directedspanner}{{\sc Directed $k$-Spanner}}
\newcommand{\clientserver}{{\sc Client/Server Directed $k$-Spanner}}
\newcommand{\maxisol}{m}
\newcommand{\Maxisol}{M}

\begin{document}

\title{Transitive-Closure Spanners} 
\author{Arnab Bhattacharyya\thanks{Massachusetts Institute of Technology, USA. Email:{\tt\{abhatt,elena\_g,kmjung\}@mit.edu}.} \and Elena Grigorescu\protect
\footnotemark[1] \and Kyomin Jung\protect
\footnotemark[1] \and
Sofya Raskhodnikova\thanks{Pennsylvania State University, USA. Email: {\tt sofya@cse.psu.edu}. Supported by National Science Foundation (NSF/CCF grant number 0729171).} \and David P. Woodruff\thanks{IBM Almaden Research Center, USA. Email:{ \tt dpwoodru@us.ibm.com}.}
}

\date{}
\maketitle\begin{abstract}
We define 
the notion of a transitive-closure spanner of a directed graph. Given a directed graph $G = (V,E)$ and an integer $k \geq 1$, a {\it $k$-transitive-closure-spanner ($k$-TC-spanner)} of $G$ is a directed graph $H = (V, E_H)$ that has (1) the same transitive-closure as $G$ and (2) diameter at most $k$. These spanners were studied implicitly in access control, property testing, and data structures, and properties of these spanners have been rediscovered over the span of 20 years. 
We bring these areas under the unifying framework of TC-spanners. We abstract the common task implicitly tackled
in these diverse applications as the problem of constructing sparse TC-spanners.

We study the approximability of the size of the sparsest $k$-TC-spanner for a given digraph. Our technical contributions fall into three categories: algorithms for general digraphs, inapproximability results, and structural bounds for a specific graph family which imply an efficient algorithm with a good approximation ratio for that family.

\vspace*{-3mm}
\paragraph{Algorithms.} We present two efficient deterministic algorithms that find $k$-TC-spanners of size approximating the optimum.
 The first algorithm gives an $\tilde{O}(n^{1-1/k})$-approximation for $k>2$. Our method, based on a combination of convex programming and sampling,
yields the first sublinear approximation ratios for (1) {\sc Directed
  $k$-Spanner}, a well-studied generalization of {\sc $k$-TC-Spanner},
and (2) its variants \clientserver, and the {\sc
  $k$-Diameter Spanning Subgraph}. This resolves the main open
question of Elkin and Peleg (IPCO, 2001). 
The second algorithm, specific to the $k$-TC-spanner problem, gives an $\tilde{O}(n/k^2)$-approximation. It shows that for $k =\Omega(\sqrt{n})$, our problem has a provably better approximation ratio than {\sc Directed
  $k$-Spanner} and its variants. This algorithm also resolves an open question of Hesse (SODA, 2003).

\vspace*{-3mm}
\paragraph{Inapproximability.} Our main technical contribution is a pair of strong inapproximability results. We resolve the approximability of
$2$-TC-spanners, showing that it is $\Theta(\log n)$ unless $P = NP$.
For constant $k\geq 3$, we prove that the size of the sparsest $k$-TC-spanner is hard to approximate within $2^{\log^{1-\eps} n}$, for any $\epsilon > 0$, unless NP $\subseteq$ DTIME$(n^{\polylog n})$.
 Our hardness result helps explain the difficulty in designing general efficient solutions for the applications above, and it cannot be improved without resolving a long-standing open question in complexity theory. It uses an involved application of generalized butterfly and broom graphs, as well as noise-resilient transformations of hard problems, which may be of independent interest.

\vspace*{-3mm}
\paragraph{Structural bounds.} Finally, we study the size of the sparsest TC-spanner for $H$-minor-free digraphs, which include planar, bounded genus, and bounded tree-width graphs, explicitly investigated in applications above. We show that every $H$-minor-free digraph has an efficiently constructable $k$-TC-spanner of size $\tilde{O}(n)$, which implies an $\tilde{O}(1)$-approximation algorithm for this family.
Furthermore, using our insight that 2-TC-spanners yield property testers, we obtain a monotonicity tester with $O(\log^2 n /\epsilon)$ queries for any poset whose transitive reduction is an $H$-minor free digraph. This improves and generalizes the previous $\Theta(\sqrt{n} \log n/\epsilon)$-query tester of Fischer {\it et al} (STOC, 2002).

 \ignore{
 Along the way we resolve central open questions of earlier work:
\begin{enumerate}
\item We obtain an $O(n^{1-1/k} \log^{1-1/k}n)$-approximation ratio for the sparsest directed spanner and $k$-diameter spanning subgraph problems for any $k$. This extends the $\tilde{O}(n^{2/3})$ approximation ratio for $k = 3$ of Elkin and Peleg (IPCO, 2001). Previously, no sublinear ratio for $k > 3$ was known.
\item We give a monotonicity tester with $O(\log^2 n /\epsilon)$ queries for any poset whose transitive reduction is an $H$-minor free digraph. This class includes the interesting and previously studied cases of planar, bounded genus, and bounded tree-width graphs. Our result greatly improves the testers of Fischer {\it et al} (STOC, 2002), who require $\Omega(\sqrt{n} \log n/\epsilon)$ queries for such graphs.
\end{enumerate}
}
\end{abstract}

\thispagestyle{empty}
\setcounter{page}{0}
\newpage

\ifnum\final=1
\tableofcontents
\fi

\newpage
\section{Introduction}\label{sec:intro}
\vspace{-1mm}
A {\em spanner} can be thought of as a sparse backbone of a graph that approximately preserves distances
between every pair of vertices. More precisely, a subgraph $H = (V,E_H)$ is a {\em $k$-spanner} of
$G=(V,E)$ if for every pair of vertices $u,v \in V$, the shortest path distance $d_H(u,v)$ from $u$ to $v$ in
$H$ is at most $k \cdot d_G(u,v)$.  \ifnum\final=0 Since
they were introduced by Peleg and Sch\"affer \cite{PelegSchaffer} in the context of distributed
computing, spanners for undirected graphs have been extensively studied.  The tradeoff between the
parameter $k$, called the {\em stretch}, \else Spanners for undirected graphs, introduced by Peleg
and Sch\"affer \cite{PelegSchaffer}, have been extensively studied. The tradeoff between the
parameter $k$ \fi and the number of edges in a spanner is relatively well understood: for every $k
\geq 1$, any undirected graph on $n$ vertices has a $(2k-1)$-spanner with $O(n^{1+1/k})$ edges
\cite{ADDJS93,Pel00, approx-distance-oracles}.  This is known to be tight for $k=1,2,3,5$ and is
conjectured to be tight for all $k$ (see, for example a survey by Zwick \cite{zwick-survey01}).
Undirected spanners have numerous applications, such as efficient routing \cite{ Cowen01, CowenWagner, PelUp89, roundtrip-spanners, TZ01b}, simulating synchronized protocols in unsynchronized
networks \cite{PelegU89}, parallel and distributed algorithms for approximating shortest paths
\cite{coh98, coh00, e01}, and algorithms for distance oracles \cite{BasSen, approx-distance-oracles}.

\ifnum\final=1
Given the importance of spanners for undirected graphs, it is natural to try extending
the notion to the directed setting.
The above definition of spanners generalizes readily to arbitrary directed graphs, but becomes somewhat uninteresting: whenever an edge $(u,v)$ forms a unique path from $u$ to $v$, it has to be present in the $k$-spanner for every $k$. Consider, for example, a directed line
$L_n$ with vertices $\{1,2,\ldots, n\}$ and edges $\{(i, {i+1}) : 1 \leq i < n\}$. No strict
subgraph of $L_n$ is a $k$-spanner for it, for any $k$.
%
Roditty \etal
\cite{roundtrip-spanners}, building on the ideas of Cowen and Wagner \cite{CowenWagner},
 circumvent this problem by considering the {\em roundtrip distance} between the nodes instead of the conventional distance.
 The roundtrip distance from $u$ to $v$ is defined as the sum of the distance from $u$ to $v$ and from $v$ to $u$. They propose a notion of a {\em $k$-roundtrip-spanner}
 which expands {\em roundtrip} distances between all pairs of vertices by at most a factor of~$k$.
 This notion is well-suited for (weighted) graphs with a small number of strongly connected components. However,
 a roundtrip spanner ensures that the distance between $u$ and $v$
is small only when both $u$ is reachable from $v$ {\em and} $v$ is reachable from $u$.  If there is no path from $u$ to
$v$, the distance from $u$ to $v$ is taken to be $\infty$. So the notion of a roundtrip spanner is not interesting, for example, for directed
acyclic graphs (DAGs).
\else
In the directed setting, two notions of spanners have been considered in the literature: the direct generalization of the above definition \cite{PelegSchaffer} and {\em roundtrip spanners} \cite{CowenWagner, roundtrip-spanners}. 
\fi
\ifnum\final=1
In this paper, we consider a different 
definition of spanners which is appropriate for all directed graphs.
Our definition captures the notion that a spanner should have a small diameter but preserve the connectivity of
the original graph.
That is, $u$ is reachable from $v$ in the spanner
if and only if it is reachable in the original graph.
\else
In this paper, we introduce a new definition of directed spanners that captures the notion that a spanner should have a small diameter but preserve the connectivity of
the original graph.
\fi
\ifnum\final=1
Recall that the {\em transitive closure} of a
graph $G=(V,E)$ is a graph $(V,E_{TC})$ where $(u,v)\in E_{TC}$ if and only if there is a directed
path from $u$ to $v$ in $G$.
\fi
\begin{definition}[\spshorter]\label{def:main}
Given a directed graph $G=(V,E)$ and an integer $k \geq 1$, a
\textbf{\splong{k}} (\textbf{\spshort{k}}) is a directed graph $H=(V,E_H)$
with the following properties:
\begin{enumerate}
\item $E_H$ is a subset of the edges in the transitive closure of $G$.
\item For all vertices $u,v\in V$, if $d_G(u,v) < \infty$, then $d_H(u,v) \leq k$.
\end{enumerate}
\end{definition}
Notice that a \spshort{k} of $G$ is just a directed spanner of the transitive-closure of $G$ with stretch~$k$. Nevertheless, a \spshort{k} is interesting in its own right due to the numerous TC-spanner-specific applications we present in Section~\ref{sec:applications}.
\ifnum\final=1
\noindent
The first property ensures that only the vertices that are connected in $G$ will remain connected
in the spanner. The second property guarantees that distances between connected pairs of vertices are
small.
\fi
%

One of the focuses of this paper is the study of the computational problem of finding the size of the sparsest $k$-TC-spanner for a given digraph, referred to as \tcspanner. It is a special case of the problem of finding the size of the sparsest directed spanner, called \directedspanner, that has been previously studied. 
Both problems are NP-hard (see Appendix~\ref{sec:nphard}).
\vspace*{-2mm}
\subsection{Related Work}\label{sec:previous-work}
\vspace*{-1mm}
Thorup \cite{tho92} considered a special case of TC-spanners of graphs $G$ 
that have at most twice as many edges as $G$, and conjectured that for all directed graphs $G$ with $n$ vertices there are such TC-spanners with stretch polylogarithmic in $n$. He proved his conjecture for planar graphs \cite{thorup95}, but later Hesse \cite{h03} gave a counterexample to Thorup's conjecture for general graphs. TC-spanners were also studied for directed trees: implicitly in \cite{AlonSchieber87,AFB05, cha87, DGLRRS99,Yao82} and explicitly in \cite{thorup97}.
For the directed line, \cite{AlonSchieber87} (and later, \cite{AFB05}) showed that the size of the sparsest $k$-TC-spanner is
$\Theta(n\cdot
\lambda_k(n)),$ where $\lambda_k(n)$ is the $\kth$-row inverse Ackermann function. \cite{AlonSchieber87,cha87,thorup97} gave the same bounds for rooted directed trees.

\vspace*{-3mm}
\paragraph{Approximability of directed spanner problems.}

All algorithms for {\sc Directed $k$-Spanner} immediately yield algorithms for {\sc $k$-TC-Spanner} with the same approximation ratio.
Kortsarz and Peleg \cite{KortPel} give an $O(\log n)$-approximation algorithm for {\sc Directed-$2$-Spanner}, and Kortsarz \cite{Kort} shows that this approximation ratio cannot be improved unless P=NP. For $k=3$, Elkin and Peleg \cite{ElkPel} present an $\tilde{O}(n^{2/3})$-approximation algorithm. Their algorithm is complicated, and the $\polylog$ factor hidden in the $\tilde{O}$ notation is not analyzed. 
For $k\geq 4$, sublinear factor approximation algorithms are known only in the undirected setting \cite{PelegSchaffer}. We note that Dodis and Khanna \cite{dk99} and Chekuri \etal \cite{cegs08} study algorithms that might seem relevant to \tcspanner.
In Appendix~\ref{app:previous-work} we explain why these algorithms do not work for \tcspanner.

For any constant $k> 2$ and $0<\eps<1$, it is hard to approximate {\sc Directed $k$-Spanner}
within a factor of $2^{\log^{1-\eps} n},$ assuming  NP$\not \subseteq $DTIME($n^{poly \log n}$) \cite{ElkPel}.
Moreover, \cite{ElkinPeleg07} extend this result to $3\leq k=O(n^{1-\delta})$ for any $0<\delta < 1$. Thus, according to Arora and Lund's classification \cite{ApproxAlgs} of NP-hard problems, {\sc Directed $k$-Spanner} is in class III, for $3\leq k = O(n^{1-\delta})$. Moreover, \cite{ElkinPeleg07} show that proving that {\sc Directed $k$-Spanner} is in class IV, that is, inapproximable within $n^\delta$ for some $0<\delta<1$, would resolve a long standing open question in complexity theory, and cause classes III and IV to collapse into a single class.

\ifnum\final=1
For a directed graph $G$ on $n$ vertices, the size of the sparsest \spshort{k}, denoted by $S_k(G)$, is at most
the size of the transitive closure of $G$, which is $O(n^2)$. This upper bound is tight, in general, as witnessed
by a complete bipartite graph with $n/2$ vertices in each part, and all edges directed from one part to the other.

In addition to the work of Thorup \cite{tho92} and Hesse \cite{h03} described after Definition~\ref{def:main},
the articles mentioned in Section~\ref{sec:applications},
implicitly contain bounds on \spshorter s for simple families of graphs, such as a line $L_n$ and a tree.
Alon and Schieber \cite{AlonSchieber87} implicitly give tight bounds on $S_k(L_n)$. They show that
the size of the sparsest \spshort k for a line, $S_k(L_n)=\Theta(n\cdot \lambda_k(n)),$ where $\lambda_k(n)$ is the $\kth$-row
inverse Ackermann function.
In particular, the sizes of the sparsest \spshort{k}s for a line for small $k$ are: $S_2(L_n)=
\Theta(n\log n)$, $S_3(L_n)=\Theta(n \log \log n)$ and $S_4(L_n)=\Theta(n \log^* n)$.
\cite{AlonSchieber87}
also show that the smallest $k$ for which $S_k(L_n)=O(n)$ is $O(\alpha(n))$. Note that $S_k(L_n)\geq n-1$ for all $k$,
since all edges of the form $(i,i+1)$ have to be present in a \spshorter\ to ensure the same connectivity as in $L_n$.
\ignore{
In our terminology, they proved the following.
\begin{theorem}[\cite{AlonSchieber87}]\label{thm:as87}
For any $k \geq 2$, the size of the smallest \spshort{k} for $P_n$ is $\Theta(n\cdot \lambda(k,n))$,
where $\lambda(k,n)$ is the inverse of a function at the $\lfloor k/2 \rfloor$-th level of the
primitive recursive hierarchy.  Precisely, $\lambda(0,n) = \lceil n/2\rceil$, $\lambda(1,n) = \lceil
\sqrt{n} \rceil$, and for $k \geq 2$, $\lambda(k,n) = \min \{i : \lambda^{(i)}(k-2,n)\leq 1\}$ where
$\lambda^{(1)}(k,n) = \lambda(k,n)$ and $\lambda^{(i)}(k,n) = \lambda(k,\lambda^{(i-1)}(k,n))$.
\end{theorem}
}
The techniques in \cite{AlonSchieber87} can be used to give the same upper bounds on $S_k$ for a more general class of graphs:
rooted trees on $n$ nodes where all edges are directed away from the root. As mentioned in Section~\ref{sec:applications},
\cite{AFB05} also implicitly develop  \spshort{k}s for $k=2, 3$ and $\log\log n$ for this class. In the light of \cite{AlonSchieber87}'s
lower bound for the line, their results are optimal. \snote{However, the $\alpha(n)$-spanner for a tree also has $O(n)$ edges.}
\fi

\ifnum\final=0
%

\vspace*{-2mm}
%
\subsection{Our Contributions}\label{sec:results}
\vspace*{-2mm}
The contributions of this paper are the following: (1) we bring several diverse applications, including property testing, access control and data structures, under the unifying framework of TC-spanners, (2) we obtain strong bounds on the approximability of \tcspanner \ and \directedspanner \ as well as some well-studied variants of these problems, and (3) we characterize the exact size of TC-spanners and obtain better bounds for the family of $H$-minor free graphs, which include planar, bounded-treewidth, and bounded genus graphs. Our results on the approximability of \tcspanner \ are summarized in Table~\ref{table:results}.
\begin{table}
\begin{tabular}{|l||l|l|l|}
  \hline
  Setting of $k$ & Implied by previous work & This paper & Notes \\
  \hline
  \hline
 $k=2$  & $O(\log n)$ \cite{KortPel}  & $\Omega(\log n)$ 
 &\\
  \hline
 constant $k>2$    & &   $\Omega(2^{\log^{1-\eps} n})$ 
 &\\
 \hline
 \hline
 $k=3$  & $O(n^{2/3} \polylog n)$ \cite{ElkPel}  & $O((n\log n)^{2/3})$
 & \multirow{2}{*}{applies to {\sc Directed $k$-Spanner}}\\
 \cline{1-3}
 $k>3$  & $O(n)$ [trivial]  & $O((n\log n)^{1-1/k})$
 &\\
  \hline
 $k=\Omega\left(\frac{\log n}{\log\log n}\right)$  & $O(n)$ [trivial]  & $O \left (\frac{n \log n}{k^2 + k \log n}\right )$ & separation from {\sc Directed $k$-Spanner}\\
  \hline
\end{tabular}
\caption{Summary of Results on Approximability of {\sc $k$-TC-Spanner}}\label{table:results}\end{table}

\vspace*{-3mm}
\paragraph{Algorithms  for {\bf {\sc $k$-TC-Spanner}} and Related Problems.} 
We present two deterministic polynomial time approximation algorithms for \tcspanner. Our first algorithm uses a new combination of convex programming and sampling, and gives an $O((n\log n)^{1-1/k})$-ratio for \tcspanner. Moreover, our method yields the same approximation ratio for \directedspanner \ and its well-studied variants: {\sc Client/Server Directed $k$-Spanner}, and {\sc $k$-Diameter Spanning Subgraph} (see \cite{ElkinPeleg01} for definitions). 
This resolves the open question
of finding a sublinear approximation ratio for these problems for $k > 3$, described as a "challenging direction" for research on directed spanners by Elkin and Peleg \cite{ElkinPeleg05}. Our algorithm for $k=3$ is arguably simpler than the $O(n^{2/3} \polylog n)$-approximation algorithm of \cite{ElkinPeleg05}.

Our second algorithm has an $\tilde{O}(n/k^2)$ ratio for {\sc $k$-TC-Spanner}.  This demonstrates a separation
between {\sc $k$-TC-Spanner} and {\sc Directed $k$-Spanner}:
for $k = \sqrt{n}$, it gives $O(\log n)$-approximation for {\sc $k$-TC-Spanner} while
\cite[Theorem 6.6]{ElkinPeleg07}  showed that {\sc Directed $\sqrt{n}$-Spanner} is $2^{\log^{1-\eps}n}$-inapproximable. Moreover, Hesse \cite{h03} asks for an algorithm to add $O(|G|)$ "shortcuts" to a digraph and reduce its diameter to $\sqrt{n}$. Our second algorithm returns a \spshort{\sqrt{n}} of size $O(|G|+\log n)$, 
answering his question.

%

\vspace*{-3mm}
\paragraph{Inapproximability of {\bf {\sc $k$-TC-Spanner}}.} 
We present two results on the hardness of {\sc $k$-TC-Spanner}. First, we prove for $k=2$ that the $O(\log n)$ ratio of \cite{KortPel} is optimal unless P=NP. Next, we show that for constant $k>2$, {\sc $k$-TC-Spanner} is inapproximable within a factor of $2^{\log^{1-\eps} n}$, for any $\epsilon > 0$, unless  NP$\subseteq $DTIME($n^{\polylog n}$). This 
result is our main technical contribution. Observe that a stronger inapproximability result for $k>2$ would imply the same inaproximability for {\sc Directed-$k$-Spanner}, and as shown in \cite{ElkinPeleg07}, collapse classes III and IV in Arora and Lund's classification.

Our $2^{\log^{1-\eps}n}$-hardness matches the known hardness for {\sc Directed $k$-Spanner}. As is
the case for {\sc Directed $k$-Spanner}, we start by building a directed graph from a well-known
hard problem called {\sc MIN-REP}, which has the same inapproximability as {\sc Symmetric Label
  Cover}. However, 
  as illustrated in Section
\ref{sec:lb}, all known hard instances for {\sc Directed $k$-Spanner} cannot imply anything better
than $\Omega(1)$-hardness for \tcspanner. Intuitively, our lower bound is much harder to prove than the one for
{\sc Directed $k$-Spanner} since our instance must be transitively-closed, and thus, many more
``shortcut'' routes between pairs of vertices exist. Our construction uses a novel application of
the generalized butterfly and broom graphs, together with several transformations of
the {\sc MIN-REP} problem, which make it {\em noise-resilient}. We call a {\sc MIN-REP} instance noise-resilient to indicate that its structure is preserved under small perturbations. The paths in the generalized butterfly are well-structured, which
 allows us to analyze the many different routes possible in the transitive closure.  To realize these ideas, we perform various transformations on a {\sc MIN-REP} instance to coordinate it with multiple copies of butterflies and brooms.

\ignore{
In section \ref{sec:khardness}, we study the complexity of approximating the minimal \spshort{k} size.  We show unless $P = NP$, it is infeasible to approximate the minimal \spshort{k} size to
within a ratio of $O(\frac{1}{k} \log n)$.  Thus, for the important case of $k=2$ (yielding efficient monotonicity testers) the $O(\log n)$ ratio of \cite{KortPel} is optimal. \anote{Hopefully, we will get better results here soon.}
Our hardness is based on an embedding of specialized \textsf{Set Cover} instances
into butterfly graphs. Butterfly graphs were recently used in \cite{woo06} in the context of
spanners, and here again, they play a useful role. We also show NP-hardness of the decision
problem for any $k < n^{1-\epsilon}$ for any $\epsilon > 0$.
}

\vspace*{-3mm}
\paragraph{Structural Results.} Finally, we study the minimum \spshort{k} size for
a specific graph family with sparse \spshorts{k}: $H$-minor-free graphs. 
A graph $H$ is a {\em minor}  of $G$ if $H$ is a subgraph of a graph obtained from $G$ by a sequence of edge contractions and deletions.
For a fixed graph $H$ (e.g., $K_5$), the family of {\em $H$-minor-free graphs} is a minor-closed family that excludes $H$.  
\ignore{Given an edge
$e = \{v,w\}$ in a graph $G$, the contraction of $e$ in $G$ is the
result of identifying vertices $v$ and $w$ in $G$ and removing
all loops and duplicate edges. A graph $H$ obtained by a
sequence of such edge contractions starting from $G$ is said
to be a contraction of $G$. A graph $H$ is a minor of $G$ if
$H$ is a subgraph of some contraction of $G$. A graph family
$F$ is minor-closed if any minor of any graph in $C$ is also a
member of $C$. A minor-closed graph family $F$ is $H$-minor-free
if $H\notin F$.}
Examples of such families include planar graphs, bounded treewidth graphs, and bounded genus graphs, explicitly studied in applications in Section~\ref{sec:applications}.
 For $H$-minor-free graphs, we efficiently construct \spshort{2}s of size $O(n\log^2 n)$, and  \spshort{k}s of size $O(n\cdot \log n \cdot \lambda_k(n))$, where $\lambda_k(\cdot)$ is the $k$-row inverse Ackermann function. The main idea is to use the path separators for undirected $H$-minor free graphs due to Abraham and Gavoille \cite{AbrahamGavoille}. However, although the separators are paths, in our digraph they may be the union of many dipaths, and so we cannot efficiently recurse using the sparse \spshorts{k} for the directed line of Alon and Schieber \cite{AlonSchieber87}. We observe that these separators satisfy a stronger property than claimed in \cite{AbrahamGavoille}, effectively allowing us to encode the direction of edges in a cost function associated with the separators.

\vspace*{-2mm}
\subsection{Applications of \spshorter s}\label{sec:applications}
\vspace*{-1mm}
\ifnum\final=1
For instance, in the {\em bosses and subordinates}\snote{Keep?} problem, the input is a graph of
a large organization, where each node represents an employee,
and an edge $(u,v)$ means that $u$ is $v$'s manager on some project.
Instead of playing broken telephone between top level bosses and low level subordinates, the CEO
would like to ensure that each order makes at most $k$ hops, namely, passes through at most $k-1$ intermediate employees.
However, the intermediate employees have to be on the chain of command: that is, an order from $u$ to $v$ has to be passed via people on
a path from $u$ to $v$. The goal is to minimize the number of boss-subordinate relationships in the organization. The bosses and
subordinates problem directly corresponds to finding a sparsest \spshort{k} for the organization graph.

In some of the applications, it is obvious that \spshorter s are
the right abstraction for the problem; in others, the connection to \spshorter s is more subtle and requires some work to prove. One of the
contributions of this paper is bringing all these diverse applications under the unifying framework of \spshorter s.
\fi
\paragraph{Monotonicity Testing.}
Monotonicity of functions \cite{AilonChazelle, BRW, DGLRRS99, EKKRV, Fischer04, FLNRRS02, GoldreichGLRS00, HalevyK04}
is one of the most studied properties in property testing \cite{GGR, RS}. Fischer \etal \cite{FLNRRS02} prove that testing monotonicity is equivalent to several other testing problems.
Let $V_n$ be a poset of $n$ elements and $G_n=(V_n,E)$ be
the relation graph, i.e., the Hasse diagram, for $V_n$. A function $f: V_n \rightarrow \mathbb{R}$
is called {\em monotone} if $f(x) \leq f(y)$ for all $(x,y)\in E$.
We say $f$ is $\eps$-far from monotone if $f$ has to be changed
on $\geq\eps$ fraction of the domain to become monotone, that is, $\min_{\text{monotone } g}|\{x : f(x) \neq
g(x)\}| \geq \eps n$.
A monotonicity tester on $G_n$ is an algorithm that, given an oracle for a
function $f: V_n \rightarrow \mathbb{R}$, passes if $f$
is monotone but fails with probability $\geq\frac{2}{3}$ if $f$ is {\em $\epsilon$-far} from monotone.
The optimal monotonicity tester for the directed line $L_n$, consisting of vertices $\{1,2,\ldots, n\}$ and edges $\{(i, {i+1}) : 1 \leq i \leq n-1\}$, proposed by Dodis \etal \cite{DGLRRS99},
is based on the sparsest \spshort{2} for that graph. Implicit in the proof of Proposition~9 in \cite{DGLRRS99} is a lemma relating the complexity of a monotonicity tester for $L_n$ to the size of a \spshort{2} for $L_n$. We generalize this by observing that a sparse \spshort{2} for any partial order graph $G_n$ implies an efficient monotonicity tester on $G_n$. In Appendix~\ref{app:monotonicity}, we prove the following lemma.
\begin{lemma}\label{lem:spanners-to-monotonicity-tests}
If a directed acyclic graph $G_n$ has a \spshort{2} with $s(n)$ edges, then
there exists a monotonicity tester on $G_n$ that runs in time $O\left(\frac{s(n)}{\epsilon n}\right)$.
\end{lemma}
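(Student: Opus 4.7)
The plan is to analyze the natural edge-sampling tester associated with the spanner. Given the 2-TC-spanner $H=(V,E_H)$ of $G_n$, the tester repeatedly picks an edge $(u,v)$ uniformly at random from $E_H$, queries $f(u)$ and $f(v)$, and rejects iff $f(u)>f(v)$. Completeness is immediate: because every edge of $H$ lies in the transitive closure of $G_n$, a monotone $f$ produces no violated edge and the tester always accepts. The task then reduces to proving that when $f$ is $\eps$-far from monotone, a uniformly random edge of $H$ is violated with probability at least $\eps n/(2 s(n))$; then $O(s(n)/(\eps n))$ independent samples suffice to push the rejection probability above $2/3$, matching the claimed running time.

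The core step is a structural claim: if $B\subseteq E_H$ denotes the set of violated edges and $V_B$ their endpoints, then $f$ restricted to $V\setminus V_B$ is monotone with respect to the induced suborder of the transitive closure of $G_n$. I would prove this by fixing any comparable pair $u\le v$ with $u,v\in V\setminus V_B$ and using the 2-TC-spanner guarantee: there is either a direct edge $(u,v)\in E_H$, or an intermediate vertex $w$ with both $(u,w),(w,v)\in E_H$. A direct edge cannot lie in $B$, since that would force $u$ or $v$ into $V_B$. In the length-two case the same reasoning shows that neither $(u,w)$ nor $(w,v)$ is violated (otherwise $u$ or $v$ would sit in $V_B$), so $f(u)\le f(w)\le f(v)$. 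Either way $f(u)\le f(v)$.

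Granting the structural claim, I would invoke the standard fact that any partial function on a subset of a poset which is monotone on the induced suborder extends to a monotone function on the whole poset, e.g. via $g(v)=\max\{f(u):u\le v,\, u\in V\setminus V_B\}$ with a suitable default on minimal elements. This exhibits a monotone function differing from $f$ on at most $|V_B|\le 2|B|$ vertices. Since $f$ is $\eps$-far from monotone, $|V_B|\ge \eps n$, hence $|B|\ge \eps n/2$, which yields the required per-sample rejection probability $|B|/|E_H|\ge \eps n/(2 s(n))$.

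The main obstacle I anticipate is not any one deep step but ensuring the monotone-extension fact is applied cleanly, since it is the only place where the gap between ``monotone on a subset of the poset'' and ``globally $\eps$-close to monotone'' is bridged; a careful statement is needed because $G_n$ need not be totally ordered. The rest is essentially the Dodis et al.\ argument for the line lifted to arbitrary DAGs, with the 2-TC-spanner property playing the role their explicit interval decomposition plays on $L_n$.
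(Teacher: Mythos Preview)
Your argument is correct and in fact slightly more direct than the paper's. Both proofs use the same tester and the same extension fact (monotone on a sub-poset implies $\eps$-close after extending), but they reach the bound $|B|\ge \eps n/2$ differently. The paper first invokes a result from Dodis et al.\ that any $\eps$-far function has a \emph{matching} of $\eps n/2$ violated pairs in $TC(G_n)$, and then injects each matching edge into a violated edge of $H$ via the length-$\le 2$ spanner path (the matching property is what makes the injection one-to-one). You instead define $V_B$ as the endpoints of violated edges of $H$ itself and observe directly that $f$ is monotone on $V\setminus V_B$: for any comparable $u,v\notin V_B$, both edges on the spanner path must be unviolated because each is incident to $u$ or to $v$. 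This bypasses the matching lemma entirely and gives $|V_B|\ge \eps n$, hence $|B|\ge \eps n/2$, with no external citation needed beyond the extension fact. Your route is a bit cleaner; the paper's route makes the role of the $2$-TC-spanner more explicit as a device that converts violated pairs in $TC(G_n)$ into violated spanner edges, which may be conceptually useful but is not logically necessary here.
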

\noindent
Therefore, all the \spshort{2} constructions described in this paper yield monotonicity
testers for functions defined on the corresponding posets. Moreover, for $H$-minor free graphs, the resulting tester has much better query complexity than the previously known, due to Fischer \etal \cite{FLNRRS02}. Indeed, we achieve testers with $O(\log^2 n /\epsilon)$ queries, whereas previous testers required $\Theta(\sqrt{n}/\epsilon)$ queries.

\vspace*{-3mm}
\paragraph{Key Management in an Access Hierarchy.} In the problem of key management in an access hierarchy, i.e., access control, there is a
partially ordered set (poset) of access classes and a key associated with each class.
This is modeled by a directed graph $G$ whose nodes are classes and whose edges indicate
an ordering. A user is entitled to access a certain class and all classes reachable from
it. This problem arises in content
distribution, operating systems, and project development (see,
e.g., the references in \cite{AFB05}). One approach to the access control problem \cite{ABF06, AFB05, SFM07} is to associate
public information $P(i,j)$ with each edge $(i,j) \in G$ and a secret key $k_i$ with each node $i$.
There is an efficient algorithm $A$ which takes $k_i$ and $P(i,j)$ and generates
$k_j$. However, for each $(i,j)$ in $G$, it is computationally hard to generate $k_j$ without knowledge of $k_i$. To obtain a key $k_v$ from a key $k_u$, algorithm~$A$ is run $d_G(u,v)$ times.
To speed this up, \cite{AFB05} suggest adding edges to $G$ to increase
connectivity. To preserve the access hierarchy of $G$, new edges must be from the transitive closure of $G$. The number of edges added corresponds to
the space complexity of the scheme, while the shortest-path distances correspond to the time complexity.
%
Implicit in \cite{AFB05} are TC-spanners
for directed trees with $k=3$ and size $O(n \log \log n)$ and also with $k=O(\log \log
n)$ and size $O(n)$. Our results for $H$-minor free graphs extend the known posets for which access control schemes have $O(n\polylog n)$ storage and $O(1)$ key derivation time. Our approximation algorithms yield sparser \spshorts{k} for general posets.

%

\vspace*{-3mm}
\paragraph{Partial Products in a Semigroup.} Yao \cite{Yao82} and Alon and Schieber \cite{AlonSchieber87} study
space-efficient data structures for the following problem:
Preprocess elements $\{s_1,\ldots,s_n\}$ of a semigroup $(S,\circ)$, such as $(\mathbb{R},\min)$, to be able to compute
partial products $s_{i}\circ s_{i+1}\circ \cdots \circ s_j$ for all $1 \leq i <
j \leq n$ with at most $k$ queries to a small database of pre-computed partial products.
This problem reduces to finding a sparsest
\spshort{k} for a directed line
$L_{n+1}$. 
\ifnum\final=0
Chazelle \cite{cha87} and Alon and Schieber also consider a generalization of the above problem, where the input is an (undirected) tree $T$ with
an element $s_i$ of a semigroup associated with each vertex $i$. The goal is to create a space-efficient data structure
that allows one to compute the product of elements associated with all
vertices on the path from $i$ to $j$, for all vertex pairs $i,j$ in $T$.
The generalized problem reduces to finding a sparsest \spshort{k} for a certain directed tree $T'$ obtained from $T$. We describe the reduction in Appendix~\ref{app:partial-products}. 
The same reduction to \spshorts{k} can be used to design space-efficient data structures for any digraph with a unique path between pairs of nodes. Our structural results imply new space-efficient data structures for $H$-minor free graphs with unique paths, and our approximation algorithms yield more space-efficient data structures for general digraphs with unique paths.
\else
 As we describe in Section~\ref{sec:previous-work}, implicit in \cite{AlonSchieber87} are tight bounds on the number of
 edges in a sparsest \spshort k for a line.
\fi
\\\\
{\it Organization.}  Section \ref{sec:ub} contains an overview of our algorithms. In Section \ref{sec:lb}, we give an overview of our lower bounds and the techniques involved. Section \ref{sec:minor} contains an overview of our bounds for minor-free graphs.
%
We defer the details and proofs of our results to the Appendix. In Appendix~\ref{app:intro-details}, we discuss the previous work and applications to monotonicity testing and partial products in a semigroup. Appendix~\ref{app:algorithms} contains our algorithms. In Appendix~\ref{app:khardness}, we give our $2^{\log^{1-\epsilon} n}$-inapproximability for $k > 2$, and in Appendix~\ref{app:2hardness} we give our $\Omega(\log n)$-inapproximability for $k=2$. In Appendix~\ref{app:separable}, we give the proofs of our structural results.
%
\\\\
{\it Notation.} The {\em transitive closure} of a graph $G=(V, E)$, denoted $TC(G)$, is defined as the directed graph $(V, E')$, where $E'=\{(u, v): u\leadsto_G v\}$. Vertices $u$ and $v$ are {\it comparable} if either $(u,v) \in TC(G)$ or $(v,u) \in TC(G)$. The {\em transitive reduction} of $G$, denoted $TR(G)$, is
a digraph $G'$ with the fewest edges for which $TC(G') = TC(G)$. As shown by Aho {\it et al} \cite{agu72}, $TR(G)$ can be computed efficiently via a greedy algorithm. 
For directed acyclic graphs $TR(G)$ is unique, and $G$ is {\em transitively reduced} if $TR(G)=G$.

For $k \geq 1$, we define: $S_k(G)=\min_H \{|H| : H \text{ is a } k \text{-TC-spanner of }G\}$, and we
call  $H$ that achieves this minimum a {\em sparsest \spshort{k}}. Clearly, $S_k(G) \geq |TR(G)|$.
%
%
%
The  {\em Ackermann function} \cite{a28} is defined by: $A(1, j)=2^j$, $A(i+1, 0)=A(i, 1), $ $A(i+1, j+1)=A(i, 2^{2^{A(i+1, j)}})$.
The inverse Ackermann function is $\alpha(n)=\min \{i: A(i, 1)\geq n\}$ and the $\ith$-row inverse is $\lambda_i(n)=\min \{j: A(i, j)\geq n\}$.

\section{Overview of Algorithms for \tcspanner\ and Related Problems}\label{sec:ub}
Our $O((n \log n)^{1-1/k})$-approximation for \tcspanner \ for arbitrary $k$ is based on a new combination of convex programming and sampling. The technique also achieves an $O((n \log n)^{1-1/k})$ ratio for {\sc Directed $k$-Spanner}, {\sc Client/Server Directed $k$-Spanner}, and {\sc $k$-Diameter Spanning Subgraph}. Here we describe the result for {\sc Directed $k$-Spanner}. To achieve the same result for \tcspanner, it suffices to run the algorithm on the transitive-closure of the input digraph. Missing proofs are in Appendix~\ref{app:algorithms}.

\begin{theorem}
For any (not necessarily constant) $k > 2$, there is a deterministic polynomial-time algorithm achieving an $O((n\log n)^{1-1/k})$-approximation for {\sc Directed $k$-Spanner}.
\end{theorem}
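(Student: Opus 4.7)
The plan is to combine a polynomial-size linear-programming relaxation of \directedspanner\ with a threshold-based rounding scheme: LP-guided edge inclusion for demand pairs whose fractional solution is supported on few short paths (``thin''), and random vertex sampling for demand pairs whose fractional solution spreads across many short paths (``thick''). The randomization is then removed by the method of conditional expectations, and running the same algorithm on $TC(G)$ gives the \tcspanner\ guarantee. Concretely, I would formulate an LP on a $(k{+}1)$-layered copy of $G$: each edge $e$ gets a fractional inclusion variable $x_e \in [0,1]$, and for every demand $(u,v) \in E$ the LP requires a unit $u{\to}v$ flow along $k$-hop paths with capacities $x_e$, minimizing $\sum_e x_e$. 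This LP is polynomial-time solvable, and its optimum $\mathrm{LP}^*$ lower-bounds $\mathrm{OPT}$.

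\textbf{Thick/thin rounding.} Fix the threshold $\beta := (n\log n)^{1/k}$. Decompose each demand's optimal LP flow into $\leq k$-hop paths, and call the demand \emph{thin} (resp.\ \emph{thick}) if the decomposition uses at most (resp.\ more than) $\beta$ paths. For every thin demand, including every edge of every supporting path whose LP weight is $\geq 1/(k\beta)$ completes a $\leq k$-path by a standard flow-scaling argument, contributing $\tilde{O}(\beta)\cdot\mathrm{LP}^*$ edges in total. For thick demands I would sample a random set $S \subseteq V$ of size $\tilde{O}(n/\beta) = \tilde{O}(n^{1-1/k})$ and include, for each $s \in S$, a depth-$k$ BFS in-tree and out-tree of $G$ rooted at $s$; pigeonhole on the more than $\beta$ short witness paths per thick demand together with a union bound shows that $S$ hits an internal midpoint of some witness of every thick demand with high probability, so every thick demand is spanned through a sampled midpoint.

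\textbf{Accounting and derandomization.} The sampled trees contribute $\tilde{O}(n^{2-1/k})$ edges, which I would charge against $\mathrm{OPT}$ by arguing that whenever thick demands exist, $\mathrm{OPT} = \Omega(n)$: the edges whose spanning paths must thread the ``midpoint region'' of a thick demand force the optimum to grow linearly in $n$, and when no thick demand exists the tree step is skipped altogether. Combined with the thin-step cost $\tilde{O}((n\log n)^{1/k})\cdot\mathrm{OPT}$, the final spanner has $\tilde{O}((n\log n)^{1-1/k})\cdot\mathrm{OPT}$ edges. The vertex sampling is then derandomized by the method of conditional expectations using a pessimistic estimator for the union-bound failure probability over thick demands, making the full algorithm deterministic and polynomial time.

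\textbf{Main obstacle.} The most delicate step is the thick-demand analysis. Unlike in undirected $k$-spanner, the in-reachability and the out-reachability around each sampled midpoint must be tracked simultaneously, which is what the new combinatorial argument must handle. Designing a clean pessimistic estimator that covers both sides, and rigorously pinning down the regime in which $\mathrm{OPT}$ dominates the additive sampling cost, are the main technical hurdles. Extending the algorithm to \clientserver\ and {\sc $k$-Diameter Spanning Subgraph} requires only modifying the LP's demand constraints (edge-set restrictions for the former, a single global diameter bound for the latter), while the thick/thin rounding template carries over unchanged.
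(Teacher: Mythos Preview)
Your high-level strategy---LP relaxation, threshold rounding for ``thin'' demands, random vertex sampling plus BFS in/out-trees for ``thick'' demands, then derandomization---is exactly the paper's approach. Your flow LP on the $(k{+}1)$-layered copy of $G$ is in fact cleaner than what the paper does: the paper writes the per-demand constraint as $\sum_P \min_{e\in P} x_e \ge 1$, which is not linear, and then solves the resulting convex program via the ellipsoid method with a custom separation oracle that sorts the $x_e$ and counts paths by iterated matrix powering. Your formulation is a polynomial-size LP directly, and the rounding analysis goes through unchanged once the parameters are set correctly.

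There is, however, a real gap in your parameters: the threshold $\beta$ is on the wrong side. From more than $\beta$ length-$\le k$ $u$--$v$ paths you can only conclude that at least $\beta^{1/(k-1)}$ distinct vertices appear on them (since $s$ vertices support at most $s^{k-1}$ such paths); the paths in a flow decomposition are \emph{not} internally vertex-disjoint, so ``pigeonhole on the more than $\beta$ witness paths'' does not give $\beta$ distinct midpoints. With your $\beta=(n\log n)^{1/k}$ this yields only $(n\log n)^{1/(k(k-1))}$ vertices---e.g.\ $(n\log n)^{1/6}$ for $k=3$---and a sample of size $\tilde O(n^{1-1/k})$ misses such a set with probability $1-o(1)$. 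The correct setting is $\beta=(n\log n)^{1-1/k}$: thin demands are covered by including every edge with $x_e\ge 1/\beta$ (cost $\le\beta\cdot\mathrm{LP}^*$), thick demands then have at least $(n\log n)^{1/k}$ distinct internal vertices, and a sample of size $O((n\log n)^{1-1/k})$ hits each with high probability. Finally, your claim that ``thick demands force $\mathrm{OPT}=\Omega(n)$'' is neither needed nor obviously true; the paper simply assumes $G$ is weakly connected (otherwise run on each component), which gives $\mathrm{OPT}\ge n-1$ unconditionally and absorbs the $O\big((n\log n)^{1-1/k}\cdot n\big)$ cost of the sampled BFS trees.
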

We start by formulating the problem as an integer program. We briefly explain the problems with this approach and the ideas required to make it work. One can introduce binary edge variables $x_e$ for each edge $e$ in the transitive closure, and binary path variables $y_P$ for each path $P$ of length $\leq k$ in the transitive closure. One enforces the constraints $y_P \leq x_e$ for each $e \in P$, which allow a path $P$ in the spanner only if all edges along it are present. The final constraint is $\sum_P y_p \geq 1$ for all edges $(u,v) \in G$, where the sum is over paths $P$ of length $\leq k$ from $u$ to $v$. Finally, one can relax the problem to an LP, and try to round the solution.

The first problem 
 is that the integrality gap is huge, which may be why an LP approach had not been considered before. Indeed, if there are $n$ paths of length at most $k$ between $u$ and $v$, the LP might assign each of them a value of $\Theta(1/n)$. However, we observe that if there are $r = n^{1-1/k}$ distinct paths from $u$ to $v$ of length $\leq k$, there must be $\geq r^{1/(k-1)}$ distinct vertices $w$ for which $u \leadsto w \leadsto v$. Let $BFS(v)$ denote a shortest path tree of edges directed away from $v$, together with a shortest path tree of edges directed towards $v$. Then if we sample $\tilde{O}(n/r^{1/(k-1)})$ vertices, and grow $BFS(w)$ of $2(n-1)$ edges around each sample $w$,
  we will sample a $w$ for which $u \leadsto w \leadsto v$, and the path from $u$ to $v$ along the edges in $BFS(w)$ has length at most $k$. We take the spanner $H$ to be the union of the outputs of the LP and sampling-based algorithms.
\begin{center}
\fbox{
\parbox{6in} {
\vspace*{-4mm}
\begin{enumerate}
\addtolength{\itemsep}{-4mm}
\item $H \leftarrow \emptyset$.
\item For each edge $e \in G$, if $x_e \geq \frac{1/2}{(n \log n)^{1-1/k}}$, $H \leftarrow H \cup \{e\}$.
\item Randomly sample $r = O((n \log n)^{1-1/k})$ vertices $z_1, z_2, \ldots, z_r \in G$.
\item $H \leftarrow H \cup \left (\cup_i BFS(z_i) \right )$.
Output $H$.
\end{enumerate}
\vspace*{-5mm}
}}
\end{center}
With high probability, an edge $(u,v)$ is covered by either the LP relaxation or the sampling.
\begin{lemma}
With probability at least $1-1/n$, $H$ is a $k$-TC-spanner of $G$.
\end{lemma}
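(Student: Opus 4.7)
The plan is to show that for each pair $(u,v) \in E(G)$, the graph $H$ contains a $u$-to-$v$ path of length at most $k$ with failure probability at most $n^{-3}$; the lemma then follows by a union bound over the at most $n^2$ such pairs. I fix an edge $(u,v)$ and let $(x,y)$ denote the LP solution used by the algorithm, and split into two cases.

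In the \emph{concentrated} case, some length-$\leq k$ path $P$ from $u$ to $v$ satisfies $x_e \geq \tfrac{1}{2(n\log n)^{1-1/k}}$ on every edge $e \in P$; Step 2 of the algorithm then places every edge of $P$ into $H$, so $d_H(u,v) \leq k$ deterministically. In the \emph{spread} case, every length-$\leq k$ path from $u$ to $v$ contains at least one edge $e$ with $x_e < \tfrac{1}{2(n\log n)^{1-1/k}}$. Since the LP enforces $y_P \leq x_e$ for each $e \in P$, this forces $y_P < \tfrac{1}{2(n\log n)^{1-1/k}}$ for every length-$\leq k$ path $P$ from $u$ to $v$; combined with the covering constraint $\sum_P y_P \geq 1$, this implies that there are at least $r := 2(n\log n)^{1-1/k}$ distinct length-$\leq k$ paths from $u$ to $v$ in $G$.

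The core combinatorial step is to turn these many paths into many sampling witnesses. Let $W \subseteq V \setminus \{u,v\}$ be the set of internal vertices appearing on some length-$\leq k$ path from $u$ to $v$. Any such path of length $\ell \leq k$ is determined by its sequence of $\ell - 1$ internal vertices in $W$, so
\[ r \;\leq\; \sum_{\ell=1}^{k} |W|^{\ell - 1} \;\leq\; k \cdot |W|^{k-1}, \]
yielding $|W| \geq (r/k)^{1/(k-1)} = \Omega\bigl((n \log n)^{1/k}\bigr)$. Every $w \in W$ satisfies $d_G(u,w) + d_G(w,v) \leq k$, so if any $w \in W$ is among the samples $\{z_1,\dots,z_r\}$, then concatenating the $u$-to-$w$ branch of the out-tree with the $w$-to-$v$ branch of the in-tree inside $BFS(w) \subseteq H$ gives a length-$\leq k$ path from $u$ to $v$ in $H$. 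The probability that the $r$ uniform samples all miss $W$ is at most $(1 - |W|/n)^r \leq \exp\bigl(-\Omega(r \cdot (n\log n)^{1/k} / n)\bigr) = \exp(-\Omega(\log n))$, which can be driven below $n^{-3}$ by choosing the constant hidden in $r$ large enough.

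The main subtlety I expect is the pigeonhole step: one must argue uniformly across paths of different lengths $\ell \leq k$ that $r$ distinct paths force $\Omega(r^{1/(k-1)})$ distinct internal vertices. The crude bound $r \leq k |W|^{k-1}$ handles this, and the extraneous factor of $k^{1/(k-1)} = O(1)$ is absorbed into the constants of $r$. Beyond this, the deterministic guarantee in the concentrated case and the sampling estimate in the spread case are routine, and the final union bound over the at most $n^2$ edges of $G$ yields the claimed $1 - 1/n$ success probability.
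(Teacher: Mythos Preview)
Your proof is correct and follows essentially the same approach as the paper: the paper splits directly on whether there are more or fewer than $(n\log n)^{1-1/k}$ length-$\le k$ $u$--$v$ paths, which is logically equivalent to your concentrated/spread dichotomy via the LP constraint, and the pigeonhole bound on $|W|$ together with the sampling tail bound are identical. (One harmless slip: the $u$-to-$w$ path lives in the \emph{in}-tree of $BFS(w)$ and the $w$-to-$v$ path in the \emph{out}-tree, not the other way around.)
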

The spanner has at most $r \cdot OPT + \frac{n^2}{r^{1/(k-1)}}$ edges, where $OPT$ is the optimum of the LP. By observing that any spanner must have size $\min(OPT, n-1)$, one can guarantee that this is an $\tilde{O}(n^{1-1/k})$-approximation. Note that we assume that $G$ is connected, as otherwise we can run the algorithm separately on each component. A more careful analysis gives an $O((n \log n)^{1-1/k})$-approximation, and a simple greedy algorithm derandomizes the sampling.
\begin{lemma}
$|H| = O((n \log n)^{1-1/k}OPT)$.
\end{lemma}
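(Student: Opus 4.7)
The plan is to bound $|H|$ by summing the two contributions---the LP-rounded edges from Step 2 and the BFS-sampled edges from Step 4---and then compare to $OPT$ using the trivial lower bound that a $k$-spanner of a connected digraph must contain at least $n-1$ edges.

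For Step 2, an edge $e$ is added to $H$ only when $x_e \geq \frac{1}{2(n\log n)^{1-1/k}}$. Since $\sum_e x_e = OPT$ at the LP optimum, at most
\[
\frac{OPT}{1/(2(n\log n)^{1-1/k})} \;=\; 2(n\log n)^{1-1/k}\cdot OPT
\]
edges can cross this threshold. For Step 4, each $BFS(z_i)$ is by construction the union of an out-shortest-path tree and an in-shortest-path tree rooted at $z_i$, hence contains at most $2(n-1)$ edges; summing over the $r = O((n\log n)^{1-1/k})$ sampled vertices gives at most $2r(n-1) = O\!\left(n\cdot (n\log n)^{1-1/k}\right)$ edges in total.

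To combine these against $OPT$, I would use the observation that one may assume $G$ is weakly connected (otherwise run the algorithm on each component and sum), so any $k$-spanner of $G$ is in particular a spanning subgraph and hence has at least $n-1$ edges; thus the integer optimum $OPT_{\mathrm{int}} \geq n-1$, and by LP relaxation $OPT_{\mathrm{int}} \geq OPT$. For the purpose of the final approximation ratio $|H|/OPT_{\mathrm{int}}$, we may therefore replace $OPT$ by $\max(OPT, n-1)$; this turns the BFS contribution into $O((n\log n)^{1-1/k}\cdot OPT)$, and summing with the LP-rounding bound yields $|H| = O((n\log n)^{1-1/k}\cdot OPT)$, as required.

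The only delicate point---more a piece of bookkeeping than a real obstacle---is that the LP value alone can a priori be below $n-1$, while the $r$ BFS trees unavoidably pay $\Theta(nr)$ edges. This is precisely the role of the ``$\min(OPT, n-1)$'' remark in the surrounding text: the combinatorial $n-1$ lower bound on any integer spanner is what absorbs the BFS cost into $O((n\log n)^{1-1/k})\cdot OPT$. Beyond this, the argument is straightforward accounting for the rounding threshold and the two shortest-path trees per sampled vertex.
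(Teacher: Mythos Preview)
Your proof is correct and follows essentially the same approach as the paper: bound the Step~2 contribution by the rounding threshold (at most $2(n\log n)^{1-1/k}$ times the LP optimum), bound the Step~4 contribution by $O(rn)=O((n\log n)^{2-1/k})$, and absorb the latter using the assumption that $G$ is connected so that the integer optimum is at least $n-1$. The only cosmetic difference is that the paper writes $OPT'$ for the LP optimum and $OPT$ for the integer optimum, whereas you use $OPT$ and $OPT_{\mathrm{int}}$; the substance is identical.
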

The problem with this approach is that the number of variables and the size of each of the constraints grows exponentially with $k$. We replace the variables $y_P$ with $\min_{e \in P} x_e$, reducing the number of variables to $O(n^2)$. The resulting program is convex, and we use the ellipsoid algorithm with a separation oracle. The oracle, given $\vec{x}$, just needs to find one pair of vertices $(u,v)$ for which the constraint $\sum_{P:u \leadsto v} \min_{e \in P} x_e \geq 1$ is violated. It can do this by sorting the coordinates of $\vec{x}$, and counting the number of $u$-$v$ paths $P$ for which some particular $x_e$ is the minimum edge variable along $P$. For this, it iteratively removes edges $e$ from $G$ for which $x_e$ is smallest, and uses matrix multiplication to count the $u$-$v$ paths that remain in the graph.
\begin{lemma}
For any $k$, there exists a separation oracle which runs in time $\poly(n)$.
\end{lemma}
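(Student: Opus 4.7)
The plan is to compute $C_{u,v}(\vec{x}) := \sum_{P:\,u\leadsto v,\,|P|\le k} \min_{e\in P} x_e$ for every $(u,v) \in E(G)$ in polynomial time and output any pair with $C_{u,v}(\vec{x}) < 1$. The main obstacle is that this sum has up to $|E_{TC}|^k$ terms, so it must be evaluated without enumerating paths. The key device is the layer-cake identity $\min_{e\in P} x_e = \int_0^1 \mathbf{1}[x_e > t \text{ for all } e \in P]\,dt$; swapping the sum with the integral yields
\[
C_{u,v}(\vec{x}) \;=\; \int_0^1 N_{u,v}(t)\,dt,
\]
where $N_{u,v}(t)$ counts $u$-to-$v$ walks of length at most $k$ in the subgraph of $TC(G)$ consisting of edges with $x_e > t$.

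Because $N_{u,v}(t)$ is a step function with breakpoints only at the $m = O(n^2)$ coordinate values of $\vec{x}$, the integral collapses to a finite sum. First I would sort the coordinates as $0 =: x_{e_0} \le x_{e_1} \le \cdots \le x_{e_m}$, obtaining
\[
C_{u,v}(\vec{x}) \;=\; \sum_{j=0}^{m-1} (x_{e_{j+1}} - x_{e_j})\; N^{(j)}_{u,v},
\]
where $N^{(j)}_{u,v}$ is the walk count in the graph $G_j$ formed from $TC(G)$ by deleting the $j$ cheapest edges. The oracle then sweeps $j$ from $0$ to $m-1$: at each stage it maintains the adjacency matrix $A_j$ of $G_j$, assembles $M_j := A_j + A_j^2 + \cdots + A_j^k$ by repeated squaring in $O(n^\omega \log k)$ time, and accumulates a running matrix $S \gets S + (x_{e_{j+1}} - x_{e_j})\,M_j$. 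After the sweep $S_{u,v}$ equals $C_{u,v}(\vec{x})$ for every pair, so the oracle scans $E(G)$ and returns any violated constraint, or certifies feasibility. The total running time is $O(n^{2+\omega}\log k) = \poly(n)$, using the assumption (without loss of generality) that $k \le n-1$.

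The remaining subtlety is that matrix multiplication counts walks rather than simple paths, whereas the convex program is written over simple paths. This is benign: for any integral $\vec{x} \in \{0,1\}^{E_{TC}}$ a length-$\le k$ walk on spanner edges contains a length-$\le k$ simple path as a subsequence, so the walk-based program coincides with the path-based program on integer solutions, and hence its fractional form is still a valid relaxation of \directedspanner\ whose optimum lower-bounds $\mathrm{OPT}$. Consequently, the oracle is correct for the relaxation that the algorithm actually solves, and the preceding rounding and sampling analysis goes through unchanged.
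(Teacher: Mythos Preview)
Your proposal is correct and is essentially the same argument as the paper's: sort the edge values, iteratively delete the cheapest edge, and at each stage count length-$\le k$ walks via matrix powers. Your layer-cake identity $\int_0^1 N_{u,v}(t)\,dt = \sum_j (x_{e_{j+1}}-x_{e_j})N^{(j)}_{u,v}$ is, after Abel summation, exactly the paper's $\sum_j c_j\, x_{e_j}$ with $c_j = N^{(j-1)}_{u,v}-N^{(j)}_{u,v}$ (the number of walks whose minimum edge is $e_j$). Two minor remarks: (i) the paper's program is already stated over \emph{not necessarily simple} paths, so your walks-versus-paths digression, while correct, is unnecessary; (ii) the paper goes on to exhibit the separating hyperplane explicitly, taking $a_{e_j}=c_j$ and $b=1$, which you leave implicit---but these coefficients drop out of your $N^{(j)}$'s immediately, so no real gap.
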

Our $\tilde{O}(n/k^2)$-approximation algorithm, which is specific to \tcspanner, works by sampling $\tilde{O}(n/k)$ vertices and selectively including $O(n/k)$ edges in the transitive closure adjacent to the samples. We also include the edges of $TR(G)$ in the spanner. A simple greedy algorithm derandomizes the sampling.
\begin{theorem}
For any $k$, there exists a deterministic approximation algorithm for the \textsc{$k$-TC-Spanner} problem with approximation ratio $O((n \log n)/(k^2 + k \log n))$.
\end{theorem}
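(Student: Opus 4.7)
The plan is to exhibit a randomized construction that produces a $k$-TC-spanner of size $|TR(G)| + \tilde{O}(n^2/k^2)$ with high probability, and then derandomize via a simple greedy procedure. Two lower bounds on $OPT$ anchor the approximation: $OPT \geq |TR(G)|$, since every $TR(G)$-edge has no alternative path in $TC(G)$ and must appear in any $k$-TC-spanner, and $OPT \geq n-1$ after reducing to weakly connected instances. Combining these with the size of the construction gives the ratio $O((n \log n)/(k^2 + k \log n))$. In the regime where $\tilde{O}(n^2/k^2)$ exceeds $n^2$, I fall back to outputting the full transitive closure.

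First I would set $H \gets TR(G)$ and independently include each $v \in V$ in a sample set $W$ with probability $p = c(\log n)/k$, yielding $|W| = \tilde{\Theta}(n/k)$ with high probability. For every pair $(w_1, w_2) \in W \times W$ with $w_1 \leadsto w_2$ in $G$, I add the direct transitive-closure edge $(w_1, w_2)$ to $H$. The total shortcut count is $O(|W|^2) = \tilde{O}(n^2/k^2)$, which combined with the $|TR(G)|$ edges already present establishes the upper bound on $|H|$.

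To verify the spanner property, I would show that with high probability every $(u,v) \in TC(G)$ satisfies $d_H(u,v) \leq k$. If $d_G(u,v) \leq k$, a shortest $u$-$v$ path already lies in $TR(G) \subseteq H$ and has length at most $k$. Otherwise, fix a shortest $u$-$v$ path $x_0 x_1 \cdots x_L$ with $L > k$ and set $a = \lfloor (k-1)/2 \rfloor$. By independence of sampling, the probability that neither the window $\{x_1, \ldots, x_a\}$ nor $\{x_{L-a}, \ldots, x_{L-1}\}$ meets $W$ is at most $2(1-p)^a = n^{-\Omega(1)}$; a union bound over the $O(n^2)$ pairs suffices. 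Given samples $w_1 = x_i$ with $i \leq a$ and $w_2 = x_j$ with $L - j \leq a$, the route that walks $u \to w_1$ along $TR(G)$ edges, uses the shortcut edge $(w_1, w_2)$, and then walks $w_2 \to v$ along $TR(G)$, has length at most $2a + 1 \leq k$ inside $H$.

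The hard part will be shaving the right logarithmic factors: the direct $O(|W|^2)$ shortcut count carries $\log^2 n$ from $|W|^2$, while the target ratio carries only a single $\log n$ in its numerator. I would address this either by substituting a sparser 2-TC-spanner on $W$ (of size $O(|W| \mathrm{polylog}(n))$) for the complete sample-to-sample shortcut graph, preserving the one-hop routing used in correctness, or by refining the accounting to compare simultaneously against both $|TR(G)|$ and $n-1$. Derandomization is then routine: sequentially pick the sample that maximizes the count of newly covered long pairs and include its shortcut edges, and a standard set-cover-style analysis shows this greedy matches the randomized bound deterministically.
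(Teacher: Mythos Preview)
Your basic construction is sound once you replace $d_G$ by $d_{TR(G)}$ throughout: shortest paths in $G$ need not lie in $TR(G)$, so the walks $u\to w_1$ and $w_2\to v$ along your chosen path are only guaranteed to be in $H$ if that path is a $TR(G)$-path. With that fix you do get a $k$-TC-spanner of size $|TR(G)| + O((n\log n/k)^2)$ and hence ratio $O(n\log^2 n/k^2)$. This is off from the stated bound by the log factor you flag, and neither of your proposed patches closes the gap. Fix~1 fails because the poset that $TC(G)$ induces on a random $W$ can force any 2-TC-spanner to have $\Omega(|W|^2)$ edges: take $G$ to be a complete bipartite digraph with all edges left-to-right; the induced poset on $W$ is again complete bipartite of height~1, every edge is the unique path between its endpoints, and no Steiner vertex can sit strictly between a left and a right vertex. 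Fix~2 is exactly the accounting you already do; the two lower bounds $OPT\ge |TR(G)|$ and $OPT\ge n-1$ carry no further slack.

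The paper avoids the extra logarithm by attaching a different edge set to each sample. For each $w\in W$ it adds edges from $w$ not to the other samples but to a set $S_w$ of only $O(n/k)$ carefully chosen descendants of $w$, where $S_w$ is a net: every descendant of $w$ lies within $TR$-distance $\Theta(k)$ of some point of $S_w$ (such a set exists by simple averaging over long descendant paths). The hitting set $W$ is chosen so that every long pair $(u,v)$ has some $w\in W$ on a $u$--$v$ path with $d(u,w)=O(k)$; greedy set cover gives $|W|=O(n\log n/(k+\log n))$ after capping at $n$. Routing is then $u\to w$ inside $TR(G)$, the single shortcut $(w,w')$ to the appropriate $w'\in S_w$, and $w'\to v$ inside $TR(G)$. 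The shortcut count is $|W|\cdot O(n/k)=O\big(n^2\log n/(k^2+k\log n)\big)$, which against $OPT\ge n-1$ yields exactly the claimed ratio. The key idea you are missing is this asymmetric ``sample-to-net'' shortcut scheme in place of the symmetric ``sample-to-sample'' one.
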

%

\fi
\section{Overview of Hardness Results for \tcspanner}\label{sec:lb} 
This section outlines the proof of Theorem~\ref{thm:khardness}, which is our main technical contribution.  Missing details are
in Appendix~\ref{app:khardness}. At the end we briefly describe the ideas behind the inapproximability result for {\sc 2-TC-Spanner} that appears in Appendix~\ref{app:2hardness}.
\begin{theorem}\label{thm:khardness}
For any fixed $\eps \in (0,1)$, the size of the sparsest \spshort{k} cannot be approximated to within a
factor of  $2^{\log^{1-\eps}n}$ unless $NP \subseteq DTIME(n^{\polylog n})$.
\end{theorem}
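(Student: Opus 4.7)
The plan is to reduce from the \textsc{MIN-REP} problem, which is known to be $2^{\log^{1-\eps} n}$-inapproximable under the stated assumption, and transfer its hardness to \tcspanner{}. Given a \textsc{MIN-REP} instance $\mathcal{I}$, I would construct a digraph $G_\mathcal{I}$ so that a good cover of $\mathcal{I}$ yields a sparse $k$-TC-spanner, while any $k$-TC-spanner of $G_\mathcal{I}$ induces a correspondingly good cover. The two ingredients I would rely on are (a) generalized butterfly graphs as a backbone connecting the two sides of each \textsc{MIN-REP} super-edge, and (b) broom graphs attached at endpoints to force the spanner to commit large blocks of edges per chosen representative. The butterfly is chosen because its layered structure admits essentially one canonical length-$k$ route between any input/output pair, so we can enumerate and classify shortcuts in the transitive closure; the broom blows up costs so that each representative selection is charged $n^{\Omega(1)}$ edges in the spanner.

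First, I would set up the reduction skeleton: associate a butterfly $B_{u,v}$ of appropriate depth to every \textsc{MIN-REP} super-edge $(u,v)$, identify the top level of $B_{u,v}$ with vertices representing elements of super-node $u$ and the bottom level with elements of super-node $v$, and attach brooms at these identified vertices so that any $k$-TC-spanner must contain $\Omega(\mathrm{polylog}\,n)$ many shortcut edges per "used" element. A \textsc{MIN-REP} cover $C$ then yields a $k$-TC-spanner by, for each covered super-edge, installing a single length-$k$ ``tunnel'' through $B_{u,v}$ routed via chosen representatives from $C$, together with all the transitive-reduction edges; the resulting size is $\tilde O(|C|) \cdot \poly(n)$ plus a cheap baseline.

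Next, the hard direction: given a $k$-TC-spanner $H$ of $G_\mathcal{I}$, extract a \textsc{MIN-REP} cover whose size scales with $|H|$. For each super-edge $(u,v)$ and each pair $(x,y)$ of top/bottom leaves of $B_{u,v}$, $H$ must contain a length-$k$ path from $x$ to $y$; if this path stayed inside $B_{u,v}$, the unique-path structure would directly charge an edge use to a specific representative, and we would be done. The \emph{main obstacle}, and the reason the \textsc{Directed $k$-Spanner} lower bound does not transfer as-is, is that in the transitive closure many ``crossing'' shortcut paths exist: a length-$k$ path in $H$ from $x$ to $y$ might leave $B_{u,v}$ through a TC edge, traverse another butterfly $B_{u',v'}$, and return. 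To neutralize these, I would first apply a \emph{noise-resilient transformation} of $\mathcal{I}$: pad \textsc{MIN-REP} with dummy super-nodes, take an $\ell$-wise product so that any small perturbation of a candidate cover is still essentially a cover, and ensure that the inapproximability factor is preserved under this padding (this is where the $2^{\log^{1-\eps} n}$ factor is kept, by picking parameters that only lose sub-polynomial factors). Then argue that any valid cross-butterfly shortcut path must itself certify coverage of some \emph{collection} of super-edges, so either (i) $H$ restricted to $B_{u,v}$ already contains a canonical length-$k$ route (and we charge a representative inside $C$), or (ii) the cross-butterfly path, by the robustness of the padded instance, still identifies good representatives that we can add to $C$ at bounded cost.

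The final step is bookkeeping: sum the charges across all super-edges, bound the total number of extracted representatives by $|H|/\poly(n)$, and verify that a $2^{\log^{1-\eps} n}$-approximation for \tcspanner{} would, composed with the reduction, solve \textsc{MIN-REP} within a factor better than $2^{\log^{1-\eps'} N}$ for some $\eps'>\eps$, contradicting the assumption. The delicate parameter choices are the butterfly depth (which must divide $k$ evenly so that length-$k$ routes are forced to traverse representative vertices) and the broom arity (which controls both the blow-up factor and the inapproximability preservation); I would tune them so that the final graph has size $n^{\polylog n}$ in the \textsc{MIN-REP} size, matching the assumption $NP \subseteq DTIME(n^{\polylog n})$. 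The $k=2$ case, by contrast, admits a much simpler embedding of \textsf{Set Cover} into a single layer of shortcuts, yielding the matching $\Omega(\log n)$ bound without needing the butterfly machinery.
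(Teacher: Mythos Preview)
Your proposal names the right ingredients---a reduction from \textsc{MIN-REP}, generalized butterflies, brooms, and a ``noise-resilient'' preprocessing of the instance---but the way you assemble them does not work, and several essential steps from the actual argument are absent.

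First, the construction. You attach a separate butterfly $B_{u,v}$ to each \emph{super-edge}, identifying its top layer with elements of super-node $u$ and its bottom with elements of $v$. But then every element of $u$ already reaches every element of $v$ in exactly the butterfly's depth, with no reference to the \textsc{MIN-REP} edges at all; there is nothing forcing the spanner to ``choose representatives.'' The paper's construction is structurally different: a butterfly of diameter $k-1$ is attached to each \emph{left cluster} $\mathcal{A}_i$ (more precisely, to each group $A_{i,s}$), the \textsc{MIN-REP} edges themselves sit between layers $V_k$ and $V_{k+1}$, and brooms hang off the right clusters $\mathcal{B}_j$. The total diameter is $k+2$, so a $k$-TC-spanner must save two hops, and the intended way to do this is with shortcuts of type $2\&(k{-}2)$ (into chosen $a\in\mathcal{A}_i$) and type $2\&(k{+}1)$ (out of chosen $b\in\mathcal{B}_j$). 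This is what ties the spanner to a rep-cover.

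Second, the main obstacle is not ``cross-butterfly'' paths in your sense; distinct butterflies are attached to distinct clusters and are incomparable except through the \textsc{MIN-REP} layer. The real difficulty is that a $k$-TC-spanner may use \emph{other} shortcut types---e.g.\ edges jumping from inside a butterfly directly across the \textsc{MIN-REP} layer, or type-$3$ and type-$2$ shortcuts earlier in the butterfly---and these can in principle cover many $(V_1,V_{k+3})$ pairs cheaply. The paper handles this with a Path Analysis Lemma that classifies all alternative shortcut types into three cases and, using the carefully tuned parameters $r,d,m$ of the noise-resilient instance (including the placement of isolated vertices so that each $V_{k-1}$ vertex touches at most $d_*/m$ non-isolated vertices in $V_k$), shows that at most $o(OPT)$ super-edges can be ``deleted'' this way. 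Your proposal has no analogue of this case analysis, and your claim that a cross-path ``must itself certify coverage of some collection of super-edges'' is not justified.

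Third, you are missing the Rerandomization Lemma entirely. After deleting the $o(OPT)$ bad super-edges, one only knows that for each remaining super-edge at least $3/4$ of the $(V_1,V_{k+3})$ pairs are connected by canonical paths. To extract a rep-cover one needs \emph{all} pairs connected; the paper achieves this by taking a union of $O(\log n)$ random relabelings of the butterfly and broom coordinates (and of groups within clusters), turning a $3/4$-good spanner into a $1$-good one at logarithmic cost. Only then does an averaging argument over the $n^{1-\delta}d_*^2$ disjoint ``slices'' of the spanner yield a rep-cover of size $o(OPT)$. Finally, two smaller points: the butterfly depth is exactly $k-1$ (not something that ``divides $k$''), and the reduction is polynomial-size---the $n^{\polylog n}$ appears only in the complexity assumption, not in the instance blow-up.
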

\subsection{The Construction and its Motivation}
Since {\sc $k$-TC-Spanner} is a special case of {\sc Directed $k$-Spanner}, which is $\Theta(\log n)$-inapproximable for $k=2$ and $2^{\log^{1-\epsilon}n}$-inapproximable for $k\geq 3$, it is natural to ask whether the hard instances of {\sc Directed $k$-Spanner} from \cite{Kort,ElkPel,ElkinPeleg07} can be used to prove hardness for {\sc $k$-TC-Spanner}. It turns out that all these instances have very small $k$-TC-spanners. We demonstrate it for the instance used in the proof of $\Omega(\log n)$-hardness for {\sc Directed $k$-Spanner}, 
which works via a reduction from {\sc Set-Cover}.

Let $G$ be a bipartite digraph for {\sc Set-Cover} with $n$ vertices (``sets'') on the left, $n$ vertices (``elements'') on the right, and edges from left to right. Let $I$ be a set of $i$ new independent vertices, for some value $i$, and let $L$ be a directed line on $k-1$ new vertices. Call the first vertex of $L$ the head, and the last vertex the tail. Include directed edges (1) from the tail of $L$ to every set in $G$, (2) from every vertex of $I$ to the head of $L$, and (3) from every vertex of $I$ to the sets and the elements of $G$. Call the constructed digraph~$G'$.

Observe that in $G'$, all directed edges except those from $I$ to $G$ must be included in the directed $k$-spanner, as such edges form the unique path between their endpoints. 
At this point, the only pairs of vertices at distance larger than $k$ are those from a vertex in $I$ to an element of $G$. Since these vertices are adjacent in $G'$, there must be a path of length at most $k$ in the spanner. The only possible path is from the vertex in $I$ to a vertex of $G$. It is easy to see that adding exactly $OPT$ edges from each vertex in $I$ to the sets of $G$ is necessary and sufficient to obtain a spanner, where $OPT$ is the size of the minimum set-cover. By making $i$ sufficiently large, the size of the spanner is easily seen to be $\Theta(i \cdot OPT)$, and thus one can approximate {\sc Set-Cover} by approximating {\sc Directed $k$-Spanner}, so the problem is $\Omega(\log n)$-inapproximable. 

However, there is a trivial \spshort{k} for this instance! Indeed, by transitivity we can simply connect the head of $L$ to each of the elements of $G$. This is a $k$-TC-spanner of size proportional to the number of vertices in $G'$. Thus, the best one could hope for with this instance is to show $\Omega(1)$-hardness for {\sc $k$-TC-Spanner}. For similar reasons, the instance showing $2^{\log^{1-\epsilon} n}$-inapproximability for {\sc Directed $k$-Spanner} also cannot establish anything beyond $\Omega(1)$-hardness for {\sc $k$-TC-Spanner}.

In the example above there are many paths to cover (those from $I$ to elements of $G$), but a few ``shortcut'' edges cover them all. Ideally, we would have many paths to cover, and each shortcut edge could only cover a single path. Hesse's digraph requiring a large number of shortcuts to reduce its diameter \cite{h03} satisfies the desired condition. His idea was to associate vertices with a subset $V$ of vectors in $\mathbb{R}^d$ such that $(u,v) \in E$ iff $u-v$ is an extreme point of the $d$-dimensional ball of integer points. By the properties of an extreme point, a shortcut can cover at most one path from a large family of shortest paths. 

However, to achieve an inapproximability result, we need better structured graphs. We use {\it generalized butterflies} defined in \cite{woo06}. In these digraphs vertices are identified with coordinates $[n^{1/k}]^k \times [k+1]$, and an edge connects $u = (u_1, \ldots, u_k, i)$ to $v = (v_1, \ldots, v_k, i+1)$ iff for all $j \neq i$, $u_j = v_j$. We say a vertex $(u_1, \ldots, u_k, i)$ is in {\it strip} $i$. It is easy to see that there is a unique shortest path of length $k$ from any $u$ in strip $1$ to any $v$ in strip $k+1$. Moreover, any shortcut is on at most $n^{1-2/k}$ such paths because if it connects a vertex in strip $i$ with a vertex in strip $i+\ell$ (where $\ell\geq 2$) it fixes all but $i-1$ coordinates of $u$ and all but $k+1-(i+\ell)$ coordinates of $v$. Thus, $ \geq n^{1+2/k}$ shortcuts are needed to reduce the diameter to $k-1$.

{\bf 
Reduction from {\sc MIN-REP}.}
To get $2^{\log^{1-\epsilon} n}$-inapproximability, we reduce
from the {\sc MIN-REP} \ problem. An $(n, r, d, \maxisol )$-{\sc MIN-REP} instance is a bipartite graph
of maximum degree $d$ in which the left part can be partitioned into sets ${\cal A}_1, \ldots, {\cal
  A}_r$ and the right part into sets ${\cal B}_1, \ldots, {\cal B}_r$, so that $|{\cal A}_i| =
|{\cal B}_i| = n/r$ for all $i\in[r]$. To describe the last parameter $\maxisol$, call a vertex {\em
  isolated} if its degree is $0$, and {\em non-isolated} otherwise. Let $\maxisol({\cal A}_i)$ be the
inverse of the fraction of non-isolated vertices in ${\cal A}_i$.
Then $\maxisol$ is the minimum such $\maxisol({\cal A}_i)$. Define the {\it supergraph} 
to have nodes ${\cal A}_1,
\ldots, {\cal A}_r, {\cal B}_1, \ldots, {\cal B}_r$, with a {\em
  superedge} $({\cal A}_i, {\cal B}_j)$ iff there is a node in
${\cal A}_i$ adjacent to a node in ${\cal B}_j$. A {\it rep-cover}
is a vertex set $S$ in the graph such that whenever $({\cal A}_i,
{\cal B}_j)$ is an edge in the supergraph, there is an edge between some $u, v
\in S$ with $u \in {\cal A}_i$ and $v \in {\cal B}_j$. A solution to
{\sc MIN-REP} is a smallest rep-cover, and its size is denoted by OPT. The problem is
$2^{\log^{1-\epsilon}n}$-inapproximable~\cite{ElkPel}. 


As a first attempt, we construct a graph of diameter $k+2$ as follows. We attach a disjoint copy of a generalized butterfly of diameter $k-1$ to each ${\cal A}_i$ in the {\sc MIN-REP} instance graph; that is, we identify the vertices in ${\cal A}_i$ with the last strip of the butterfly. We call the vertices in the butterfly at distance $x$ from ${\cal A}_i$ the $x$-th {\it shadow} of ${\cal A}_i$. Next, for each ${\cal B}_j$, we attach what we call a {\it broom}. This is a 3-layer graph, where the two leftmost layers form a bipartite clique, and the right layer consists of degree-1 nodes, called {\em broomsticks}, attached to vertices in the middle layer. Each vertex in the middle layer has the same number of broomsticks attached to it. Each ${\cal B}_j$ is identified with the left layer of a disjoint broom.
All edges are directed from the shadows of the ${\cal A}_i$ towards the broomsticks (left to right). Call the resulting digraph $G$.

We would like to argue that the minimum \spshort{k} $H$ of $G$ is formed as follows. Let $S$ be a minimum rep-cover of the underlying MIN-REP instance. For each $s \in S$, if $s$ is in an ${\cal A}_i$, include all shortcuts from the $2$-shadow of ${\cal A}_i$ to $s$ which are in the transitive closure of $G$. Otherwise ($s$ is in a ${\cal B}_j$), include all shortcuts from $s$ to the broomsticks of ${\cal B}_j$. By balancing the number of broomsticks with the size of $2$-shadows, one can show $H$ has size $|S|f(n, k)$, where $f(n, k)$ is an easily computable function. Since $S$ is a rep-cover, $H$ is a \spshort{k}. If $H$ were optimal, then approximating its size within some factor would approximate {\sc MIN-REP} within the same factor.

It turns out that $H$ is not optimal, and so our first attempt does not work. However, by modifying $G$ via the transformations below, and by looking at a related \spshort{k} $H$ of the modified $G$, we can show that any \spshort{k} must have size $\Omega(|H|/\log n)$ for constant $k$. 
Since {\sc MIN-REP} is $2^{\log^{1-\epsilon} n}$-inapproximable, this still gives $2^{\log^{1-\epsilon} n}$-hardness.

To prove this, we need to argue that most vertices $v$ in the $k$-shadows do not ``benefit'' from traversing other shortcuts to reach the broomsticks. This requires a classification of all alternative routes from such $v$ to broomsticks. Given that $v$ is in a generalized butterfly, these routes are well-understood. However, for a generic {\sc MIN-REP} instance, most of these routes do indeed lead to a much smaller \spshort{k}!

To rule out the alternative routes, we ensure that OPT and the
four parameters of the {\sc MIN-REP} instance each lie in a narrow
range. In Theorem~\ref{thm:transformations}, we prove that {\sc
  MIN-REP} with the required parameter restrictions is inapproximable
by giving a reduction from an unrestricted {\sc MIN-REP} instance. It works by
carefully interleaving the following five operations on a ``base''
{\sc MIN-REP} instance with unrestricted parameters: (1) disjoint copies, (2) dummy vertices inside clusters, (3) blowup inside clusters with matching supergraph, (4) blowup inside clusters with complete supergraph, and (5) tensoring. Each operation increases one or several parameters by a prespecified factor, and together they give us five degrees of freedom to control the range of OPT and the four parameters of {\sc MIN-REP}.

\begin{theorem}[{\bf Noise-Resilient MIN-REP is hard}]\label{thm:transformations}
 Fix any $\kappa\in(0, 1)$ and $R, D, M, F \in (0,
  1-\kappa)$ satisfying  $F\in (R, 2R)$ and $D + M + F < 1$. {\em Noise-Resilient {\sc MIN-REP}} is a family of  $(n,r,d,m)$-{\sc
    MIN-REP} instances with $r\in[n^R, n^{R+\kappa}]$, $d\in[n^D, n^{D+\kappa}]$, $m\in[n^M,
  n^{M+\kappa}]$, and $OPT\in[n^F, n^{F+\kappa}]$. This problem is $2^{\log^{1-\eps}n}$-inapproximable for all
  $\eps \in (0,1)$ unless $NP \subseteq DTIME(n^{\polylog n})$.
\end{theorem}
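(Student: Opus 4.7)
The plan is to reduce from unrestricted {\sc MIN-REP}, which is $2^{\log^{1-\eps}n}$-inapproximable under $NP\not\subseteq DTIME(n^{\polylog n})$ by the cited hardness of {\sc Symmetric Label Cover}, and to apply a carefully chosen sequence of the five parameter-shaping operations to a base instance so that the output lands with all four parameters $(r,d,m,OPT)$ simultaneously in the prescribed narrow bands $[n^R,n^{R+\kappa}]$, $[n^D,n^{D+\kappa}]$, $[n^M,n^{M+\kappa}]$, $[n^F,n^{F+\kappa}]$. Each operation will incur only polynomial blowup in the number of vertices, so $\log n'=\Theta(\log n)$ and the $2^{\log^{1-\eps}n}$ hardness gap transfers to the restricted class after replacing $\eps$ by a slightly larger constant.

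First I would formalize each of the five operations as a map on {\sc MIN-REP} instances, tracking its exact effect on the tuple $(n,r,d,m,OPT)$ and verifying approximation preservation in the sense that a rep-cover of the output pulls back to a rep-cover of the input of comparable relative size. At a high level: taking $t$ disjoint copies multiplies $n$, $r$, and $OPT$ by $t$ while leaving $d$ and $m$ unchanged; adding $\beta$ isolated vertices per cluster multiplies $n$ and $m$ by $\beta$ while leaving $r$, $d$, and $OPT$ unchanged; the matching and complete blowups by factor $t$ scale $n$ by $t$ while affecting $d$ and $OPT$ in prescribed ways depending on whether the copies are matched one-to-one or fully connected at the vertex level; and tensoring (iterated $j$ times) raises every parameter to the $2^j$ power. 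This gives five independent ``knobs'' with which to reshape the log-parameter profile of the final instance.

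The heart of the proof is a calculation showing that these five knobs together span enough of the log-parameter space to hit any quadruple $(R,D,M,F)$ in the stated region. Taking logarithms with respect to $n$, the system becomes a linear system in the five operation exponents, with four target equations (one per parameter) plus a normalization tying the output $n'$ back to a power of $n$. The hypotheses on $(R,D,M,F,\kappa)$ precisely match the feasibility region of this system with non-negative solutions: $F>R$ is needed because $OPT\ge r$ in any rep-cover with non-trivial supergraph, $F<2R$ reflects the trivial upper bound $OPT\le r^2$, and $D+M+F<1$ encodes the counting inequality relating degree, isolation fraction, and rep-cover size to the total vertex count. The main obstacle will be this bookkeeping: one must exhibit non-negative values of the five operation exponents that simultaneously satisfy all four target equations across the entire parameter box, check that the slack $\kappa$ absorbs integer rounding when the continuous solution is discretized, and trace the polynomial blowup carefully to confirm that the final hardness ratio is still $2^{\log^{1-\eps}n}$ for some $\eps\in(0,1)$.
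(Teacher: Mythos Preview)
Your high-level plan matches the paper's: start from an unrestricted {\sc MIN-REP} instance on $n^{\kappa'}$ vertices for small $\kappa'$, then compose the five transformations $T_1,\dots,T_5$ with suitably chosen exponents so that the output is an $(n,r,d,m)$-instance with each parameter in its target band, and solve the resulting linear system in the log-exponents. The paper does exactly this and sets $\delta_4=D$, $\delta_2=M$, $\delta_5=F-R$, $\delta_1=2R-F$, with $\delta_3$ absorbing the remaining slack in $n$.

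However, your description of the individual operations is off in ways that would break the linear algebra. First, neither blowup ($T_3$ or $T_4$) changes $OPT$: both leave the supergraph identical to the input's, so a minimum rep-cover has the same size. The matching blowup $T_3$ changes \emph{only} $n$; the complete blowup $T_4$ changes $n$ and $d$. Second, and more seriously, the paper's ``tensoring'' $T_5$ is \emph{not} power-raising. One application of the basic step multiplies $n$ by $4$, $r$ by $2$, $OPT$ by $4$, and leaves $d,m$ fixed; iterating $\log_2 n^{\delta_5}$ times therefore multiplies $(n,r,d,m,OPT)$ by $(n^{2\delta_5},n^{\delta_5},1,1,n^{2\delta_5})$. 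In log-space this \emph{adds} the vector $(2\delta_5,\delta_5,0,0,2\delta_5)$, which is what makes the whole system linear.

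This distinction is exactly what makes the constraint $F\in(R,2R)$ appear. With $T_1,\dots,T_4$ alone, the increments to the $r$-exponent and the $OPT$-exponent are both $\delta_1$, so you could never separate $F$ from $R$. The paper's $T_5$ is the unique operation that moves $OPT$ faster than $r$ (by a factor of two in the exponent), giving $F-R=\delta_5$ and $2R-F=\delta_1$; nonnegativity of $\delta_1,\delta_5$ is precisely $R<F<2R$. Your power-raising version of tensoring would instead rescale all log-parameters uniformly, preserving the ratio $\log OPT/\log r$ from the base instance and leaving you no way to hit an arbitrary $F/R\in(1,2)$. So the plan is right, but you need the correct effect of $T_5$ (and of $T_3,T_4$) to actually solve the system.
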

\newcommand{\parone}{\ensuremath{{\delta_1}}} 
\newcommand{\partwo}{\ensuremath{{\delta_2}}} 
\newcommand{\parthree}{\ensuremath{{\delta_3}}} 
\newcommand{\parfour}{\ensuremath{{\delta_4}}} 
\newcommand{\parfive}{\ensuremath{{\delta_5}}} 
\newcommand{\genpar}{\ensuremath{\delta}} 
\newcommand{\exepsilon}{\ensuremath{\epsilon'}} 

\newcommand{\nrinstance}{\ensuremath{{\cal I}_0}} 
\newcommand{\spinstance}{\ensuremath{{\cal I}}} 
\newcommand{\tcinstance}{\ensuremath{{\cal G}}} 

 The variant of {\sc MIN-REP} in Theorem~\ref{thm:transformations} is called ``noise-resilient''
because even if many vertices in the sets ${\cal A}_i$ and ${\cal
  B}_j$ are adversarially deleted in an instance of this problem, the
minimum rep-cover does not shrink significantly. This property helps
us rule out many alternative routes in the TC-spanner, though we will
need to change our graph $G$. Our reduction from noise-resilient {\sc
  MIN-REP} to \tcspanner\ for $k>2$ consists of two steps: first we
produce a {\em specialized} {\sc MIN-REP} instance $\spinstance$ from an arbitrary instance $\nrinstance$ of noise-resilient {\sc MIN-REP}, and then we construct a \tcspanner\ instance $\tcinstance$ by carefully adjoining generalized butterfly graphs on the left and broom graphs on the right of $\spinstance$.

{\bf From Noise-resilient {\sc MIN-REP} to Specialized {\sc MIN-REP}.}
Set $\delta = \frac{k-1}{k-\frac{1}{4}}$, $\eta = \frac{\delta}{2(4k-4)(4k-2)}$, and $\zeta =
\delta \left(\frac{4k-5}{4k-4} + \frac{1}{4k-2}\right)$.  Let $\kappa$ be a sufficiently small
positive constant which will be chosen in the course of the proof. We start from an $(n_0,r_0,d_0,\maxisol_0)$-instance $\nrinstance$ of
noise-resilient {\sc MIN-REP} with optimum $OPT_0$, where $n_0 = n^\delta, r_0 \in [n^{\delta/2},
n^{\delta/2 + \kappa}], d_0 \in [n^\eta, n^{\eta + \kappa}]$, $m_0 \in [n^{2\eta},
n^{2\eta + \kappa}]$, and $OPT_0\in[n^{\zeta},n^{\zeta + \kappa}]$.  By instantiating \thmref{transformations} with $R =\frac{1}{2}, D
= \frac{\eta}{\delta}, M = \frac{2\eta}{\delta},$ $F = \frac{\zeta}{\delta}$ and $\kappa$, we obtain that the
$(n_0,r_0,d_0,\maxisol_0)$-{\sc MIN-REP} problem is $2^{\log^{1-\eps}  n}$-inapproximable unless $NP
\subseteq DTIME(n^{\polylog n})$.  The conditions on the parameters in \thmref{transformations} are satisfied because $\zeta \in
(\frac{\delta}{2},\delta)$ and $\eta +2\eta + \zeta < \delta$.

We transform $\nrinstance$ to a specialized $(n,r,d,\maxisol)$-{\sc MIN-REP} instance $\spinstance$ by applying
on $\nrinstance$
the transformation $T_4$ defined in the proof of \thmref{transformations} in
Appendix~\ref{app:khardness}.  More precisely, set $\spinstance=T_4(\nrinstance, n^{1-\delta})$.
By definition of $T_4$, graph $\spinstance$ has $n$ vertices, $r = r_0$, $d = d_0n^{1-\delta}$ and $m=m_0$.
The transformation results in a bipartite graph $\spinstance$ with nodes partitioned into clusters
${\cal A}_1,\ldots, {\cal A}_r$
on the left, and ${\cal B}_1, \ldots, {\cal B}_r$ on the right. Each ${\cal A}_i$ and ${\cal B}_j$
is a union of $n^{1-\genpar}$ {\em groups} $A_{i, s}$ and $B_{j, s}$, respectively, with $s\in
[n^{1-\genpar}]$. Each group $A_{i, s}$ and $B_{j, s}$, for $i, j\in [r]$, $s\in [n^{1-\genpar}],$
is a copy of ${\cal A}_i$ and, respectively, ${\cal B}_j$, from the original instance
$\nrinstance$. For each edge $(u, v)$ with $u \in {\cal A}_i$ and $v \in {\cal B}_j$ of
$\nrinstance$, graph $\spinstance$ has edges between  the  copy of $u$ in $A_{i, k_1}$ and the copy
of $v$ in $B_{j, k_2}$, for all $k_1, k_2 \in [n^{1-\genpar}]$. This completes the description of
the specialized {\sc MIN-REP} instance ${\spinstance}$.

\begin{figure}[t]
\centering
\includegraphics[trim = 0mm 50mm 0mm 70mm, clip, width=10cm]{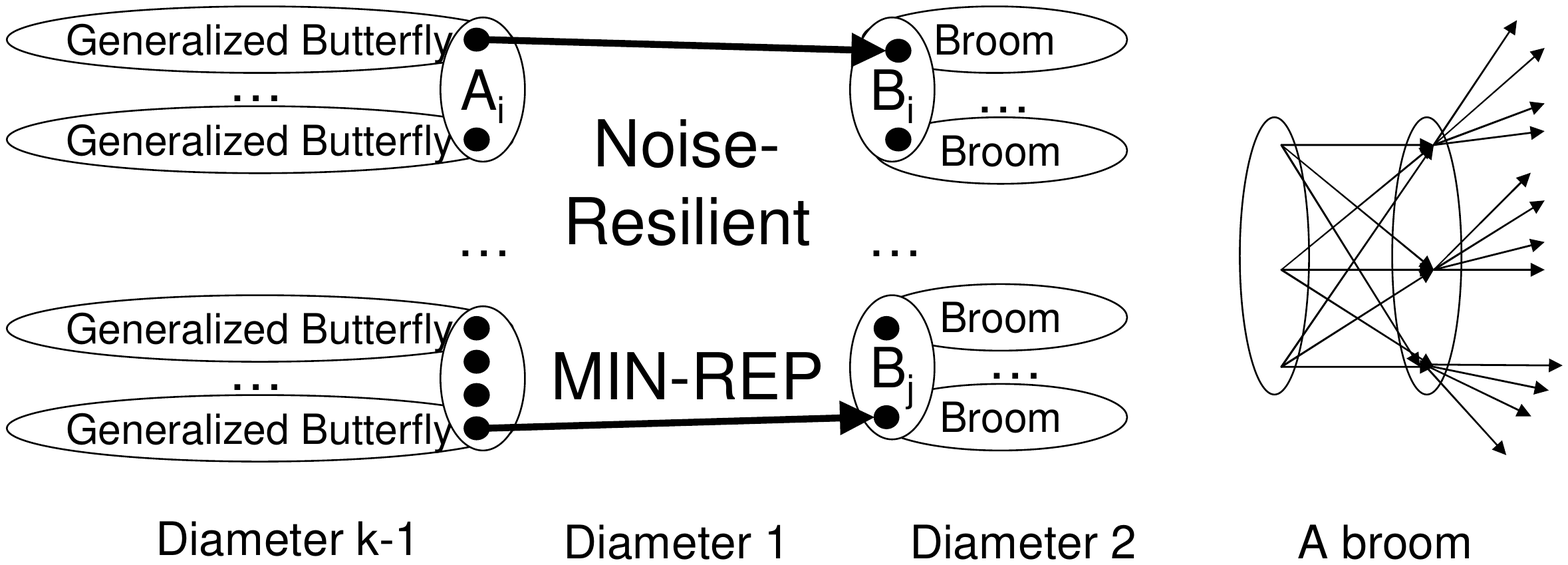}
\caption{The TC-spanner instance $\tcinstance$, and an example of a broom.}
\label{fig:butterfly}
\end{figure}

{\bf From Specialized {\sc MIN-REP} to $k$-{\sc TC-Spanner}.}
From $\spinstance$, we construct a graph $\cal G$ of diameter $k+2$ as follows.
We first attach a disjoint generalized butterfly of diameter $k-1$, denoted $BF(A_{i,s})$, to
each group $A_{i, s}$ in $\spinstance$, for all $i\in [r]$, $s\in
[n^{1-\genpar}]$. That is, we identify vertices in $A_{i,s}$ with the
last strip of $BF(A_{i,s})$ in the way discussed below. Denote by $BF({\cal A}_i)=\cup_{s} BF(A_{i, s})$ the set of all the vertices attached in this manner to the cluster ${\cal A}_i$.
Let $BF^j(A_{i, s})$ be the vertices in strip $j$ of the butterfly $BF(A_{i, s})$, where $BF^k(A_{i, s})=A_{i, s}$, and let $BF^j({\cal A}_i)= \cup_s BF^j(A_{i, s})$. We call the vertices in the butterfly $BF(A_{i, s})$ at distance $x$ from $A_{i, s}$ the $x$-th {\it shadow} of $A_{i,s}$.  Call the in-degree as well as out-degree of the vertices in the butterflies  $d_* \defeq (\frac{n^{\genpar}}{r})^{\frac{1}{k-1}}$.

Next, for each ${\cal B}_{i, s}$, we attach a broom, denoted $BR(B_{i,s})$. \mnote{Are $p$ and $t$ used later?}\enote{ Fixed} More specifically, each vertex in $ B_{i, s}$ is connected to the vertices of a set $BR^{k+2}(B_{i,s})$ of size $d_*$, and each vertex $v\in BR^{k+2}(B_{i,s})$ is connected to a disjoint set 
of nodes, called broomsticks, of size $d_*$. Let $BR^{k+3}(B_{i,s})$ be the set of broomsticks adjacent to $BR^{k+2}(B_{i,s})$.
Let $BR^{k+2}({\cal B}_i)=\cup_s BR^{k+2}(B_{i,s})$ 
and $BR^{k+3}({\cal B}_i)=\cup_s BR^{k+3}(B_{i,s})$. Identify {\em layer} $V_j$ with $\cup_{i, s} BF^j( A_{i,s})$ for $j\in [k]$, layer $V_{k+1}$ with $\cup_{i,s} B_{i,s}$, and layer $V_j$ with $\cup_i BR^{j}({\cal B}_i)$ for $j\in\{k+2,k+3\}$. Direct all the edges from $V_i$ to $V_{i+1}$.  See Figure \ref{fig:butterfly}.


{\bf Attaching butterflies}. Recall that we identify vertices in $A_{i,s}$ with the last strip $BF^k(A_{i,s})$ of a disjoint butterfly, for all $i\in [r]$, $s\in [n^{1-\genpar}]$. The mapping from $A_{i,s}$ to $BF^k(A_{i,s})$ is constructed in Appendix~\ref{app:khardness}. Here we explain the requirements we impose on the mapping.
Recall that each each group $A_{i,s}$ has $\leq\frac{n^{\genpar}}{rm}$ non-isolated vertices.  For our analysis, \textit{each} vertex in $BF^{k-1}(A_{i,s})$ must be adjacent to  $\leq\frac{d_*}{m}$ non-isolated vertices in $A_{i,s}$.
The isolated vertices help us control the number of routes with shortcut edges from the $x$-shadows to the
$(x-2)$-shadows, for some $x > 2$,
since connecting vertices in the $1$-shadow to many isolated vertices decreases the number of comparable
pairs in the first and last layers of $\tcinstance$ connected by a path containing such a shortcut
edge.

{\bf A sparse TC-spanner $\cal H$ for the $k$-{\sc TC Spanner} instance $\tcinstance$. } \snote{Corrected; please check} \enote{Checked} Let $S_0$ be a smallest rep-cover of $\nrinstance$ of size OPT. Recall that each ${\cal A}_i$ and ${\cal B}_j$ is replicated $n^{1-\delta}$ times in $\spinstance$. Let $S$ be the set of all replicas in $\spinstance$ of vertices in $S_0$.
 Consider a $k$-TC-spanner $\cal H$ of $\tcinstance$ that contains shortcuts from the nodes in layer $V_{k-2}$ to their descendants in $S\cap V_k$, and from the nodes in $S\cap V_{k+1}$ to their descendants in $V_{k+3}$.
The  Rep-cover Spanner Lemma (Lemma~\ref{lem:ubspshort}) shows that
 $|{\cal H}|= O(OPT ~n^{1-\genpar} (\frac{n^{\genpar}} {r} )^{\frac{2}{k-1}})$.

\subsection{Path Analysis and Rerandomization}
The next lemma shows that the \spshort{k} $\cal H$ 
defined above and analyzed in \lemref{ubspshort} is nearly optimal.

\begin{lemma}\label{lemma:lowerbound}
Any \spshort{k} ${\cal K}$ of ${\tcinstance}$ has $|{\cal K}|  =
\Omega\left(OPT n^{1-\genpar} \left ( \frac{n^{\genpar}} {r} \right )^{\frac{2}{k-1}}/\log n\right)$.
\end{lemma}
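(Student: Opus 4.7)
The plan is to establish the lower bound through a double-counting argument: identify a large family $\Pi$ of long comparable pairs in $\tcinstance$ that must be covered by short paths in ${\cal K}$, classify the possible covering short paths by the layer-type pattern of their shortcut edges, and then use the structure of generalized butterflies together with the noise-resilience of the underlying MIN-REP instance $\nrinstance$ to lower bound the number of shortcut edges required.

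First I would take $\Pi$ to be the set of comparable pairs $(u,v)$ with $u \in V_1$ and $v \in V_{k+3}$. Each such pair corresponds to a supergraph edge $({\cal A}_i, {\cal B}_j)$ of $\spinstance$ together with a MIN-REP edge between non-isolated vertices of the corresponding clusters, a pair of blowup indices $(s, s')$, a unique path through the butterfly $BF(A_{i,s})$, and a broomstick in $BR^{k+3}(B_{j, s'})$. Because $\tcinstance$ has diameter exactly $k+2$, each pair in $\Pi$ forces ${\cal K}$ to contain a path of length at most $k$ using shortcuts that, altogether, skip at least two layers. There are only $O(\log^{O(1)} n)$ possible layer-type patterns for the shortcut edges along such a covering path (and a constant number for constant $k$), so by pigeonhole some pattern is used for an $\Omega(1/\log n)$ fraction of $\Pi$.

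Next I would bound, for each fixed layer-type pattern, the number of pairs in $\Pi$ that any single shortcut can help cover. Within each butterfly $BF(A_{i, s})$, the uniqueness of short paths established in \cite{woo06} implies that a shortcut from strip $j$ to strip $j + \ell$ fixes all but $k-1-\ell$ butterfly coordinates, and so lies on at most $d_*^{k-1-\ell}$ of the relevant short paths when combined with a canonical continuation through the broom; a matching count holds across the broom because its broomsticks have degree one. Summing over all shortcuts of the chosen pattern therefore yields a lower bound on $|{\cal K}|$ in terms of the size of the set $X \subseteq V_k \cup V_{k+1}$ of ``intermediate'' vertices touched by those shortcuts. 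A rerandomization step then converts this into a bound in terms of $OPT$: because $\spinstance$ contains $n^{1-\genpar}$ interchangeable blowup copies of each cluster, averaging over the symmetric group action on these copies lets us assume without loss that the projection of $X$ down to $\nrinstance$ is a rep-cover of a subinstance obtained by deleting only a small fraction of vertices. By the noise-resilience guaranteed in \thmref{transformations}, this projection has size $\Omega(OPT)$, and multiplying by the $n^{1-\genpar}(n^{\genpar}/r)^{2/(k-1)}$ shortcuts per rep-cover element identified in \lemref{ubspshort} produces the claimed $\Omega(OPT \cdot n^{1-\genpar}(n^{\genpar}/r)^{2/(k-1)} / \log n)$ bound.

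The hard step will be the second one: a TC-spanner ${\cal K}$ is a priori free to use shortcut edges spanning different butterflies $BF(A_{i,s})$ and $BF(A_{i',s'})$ or combining in unexpected ways with broom shortcuts, and in particular to try to bypass the intended MIN-REP structure by exploiting redundancy across the $n^{1-\genpar}$ blowup copies of each cluster. The construction's built-in defenses, namely the isolated-vertex padding inside each group $A_{i, s}$ (ensuring each vertex of $BF^{k-1}(A_{i,s})$ sees at most $d_*/m$ non-isolated vertices) and the careful tuning of the exponents $\delta$, $\eta$, $\zeta$ to land inside the noise-resilient regime of \thmref{transformations}, are exactly the ingredients needed to rule out such cross-cluster alternative routes, either by forcing them to overshoot the $k$-diameter budget or to land on isolated vertices contributing nothing to $\Pi$.
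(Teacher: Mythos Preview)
Your proposal has the right ingredients (butterfly coordinate counting, rerandomization, extraction of a rep-cover, and reliance on the tuned parameters of \thmref{transformations}), but the logical skeleton differs from the paper's and has a genuine gap.

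The paper does \emph{not} pigeonhole on path patterns and then treat the winning pattern uniformly. Instead it isolates one distinguished pattern, the \emph{canonical} one using shortcuts of types $2\&(k{-}2)$ and $2\&(k{+}1)$, and argues a dichotomy. First, the \textbf{Path Analysis Lemma} shows that all \emph{other} patterns together can account for a $\tfrac{1}{4}$-fraction of the pairs in $BF^1({\cal A}_i)\times BR^{k+3}({\cal B}_j)$ for only $o(OPT)$ superedges; those superedges are deleted. What remains is a $\tfrac{3}{4}$-good spanner in which every surviving superedge is served mostly by canonical paths. Second, the \textbf{Rerandomization Lemma} amplifies this to a $1$-good spanner at the cost of an $O(\log n)$ blowup (this is where the $\log n$ in the bound actually comes from, not from the number of patterns, which is $\Theta(k)=O(1)$). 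Third, the \textbf{Rep-cover Extraction Lemma} partitions the canonical shortcuts into $n^{1-\genpar}d_*^2$ disjoint classes and averages to find one class whose shortcut endpoints in $V_k\cup V_{k+1}$ form a rep-cover of ${\spinstance'}$; adding two vertices per deleted superedge yields a rep-cover of $\spinstance$ of size $o(OPT)$, a contradiction.

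Your argument breaks precisely at the sentence ``a lower bound on $|{\cal K}|$ in terms of the size of the set $X\subseteq V_k\cup V_{k+1}$ of intermediate vertices touched by those shortcuts.'' This only makes sense for the canonical pattern. Many alternative patterns, for instance a single shortcut of type $3\&(k{-}1)$ or $2\&k$, let a $V_1$--$V_{k+3}$ path bypass $V_k$ or $V_{k+1}$ (or both) entirely, so no set $X$ in those layers is determined and no rep-cover can be read off. The paper's construction defeats these routes not by forcing them through $X$ but by a separate counting argument (the three cases in the Path Analysis Lemma) that exploits the parameters $m$, $d_0$, and the isolated-vertex padding to show each such shortcut covers too few pairs; this is the step you wave at in your last paragraph but do not carry out. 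Similarly, your ``rerandomization'' is described as averaging over the $n^{1-\genpar}$ blowup copies, whereas the paper's rerandomization also randomly permutes coordinates inside each butterfly and broom; the group-level averaging alone is what happens later, in the Rep-cover Extraction step, and it is that partition into $n^{1-\genpar}d_*^2$ classes, not the number of patterns, that produces the final averaging.
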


We introduce a bit of notation.  A \spshort k for ${\tcinstance}=V_1\cup V_2 \cup \cdots \cup V_{k+3}$  is
  built by adding shortcut edges $(u, v)$ between comparable $u$ and
  $v$, where $u\in V_i, v\in V_{i+\ell}$ and $\ell\geq 2$. For given $\ell,i$, we classify such a shortcut edge as {\em type $\ell\&i$}. Since $\tcinstance$ has diameter $k+2$, a $k$-TC-spanner for $\tcinstance$ remains a $k$-TC-spanner when a type $\ell\&i$
edge $(u, v)$ with $\ell\geq 4$ is replaced by a type $3\& i$ edge $(u, v')$, where $v'$ is a
predecessor of $v$. Therefore, it is enough to consider $k$-TC-spanners with shortcut edges only of
types $2\&i$ for $1 \leq i \leq k+1$ and $3\&i$ for $1\leq i\leq k$.
 Say a path $\pi$ from $V_1$ to $V_{k+3}$ is of {\em type $(\ell\& i)$ } if it uses an edge of type $\ell\&i$ (with $\ell\in \{2, 3\})$, and $\pi$ is of  {\em type
$(2\&i, 2\& j)$ } if it uses edges of types $2\&i$ and $2\&j$, $i<j$. \snote{Almost not used (other than for canonical paths; might be worth removing.}
Notice that the \spshort{k}
constructed in \lemref{ubspshort} contains only edges of type $2 \& (k-2)$ and $2 \& (k+1)$.

\smallskip
\noindent{\em Proof of Lemma~\ref{lemma:lowerbound}.}
Given a \spshort{k} ${\cal K}$ of ${\cal G}$ with
$o\left(\frac{n^{1-\genpar} d_{*}^2}{\log n}\right)OPT$ edges, we show
that we can construct a {\sc MIN-REP} cover for ${\spinstance}$ of
size $o(OPT)$, which is a contradiction (recall that $d_*= (\frac
{n^{\genpar}}{r})^{\frac{1}{k-1}}$). We will accomplish this by a series of transformations which
modify ${\cal K}$ into a \spshort{k} that uses only shortcut edges of the form $2 \& (k-2)$ and
$2 \& (k+1)$.   The process increases the size of the \spshort{k} only by a logarithmic factor.
Finally, we show that from the modified \spshort{k}, one can extract a {\sc MIN-REP} cover of size
$o(OPT)$ for $\spinstance$, the desired contradiction.

We call a superedge $({\cal A}_i,{\cal B}_j),$ where  $i,j\in [r]$, \textit{deletable with
  respect to} ${\cal K}$ if at
least $1/4$ of the vertex pairs $(u,v)  \in BF^1({\cal A}_i) \times BR^{k+3}({\cal B}_j)$ have a
path between them in ${\cal K}$ of length at most $k$ and of type other than $(2\&(k-2),2\&(k+1))$.  Our first
step is to show that such cluster pairs can be essentially ignored.

\begin{lemma}[{\bf Path Analysis Lemma}]\label{lemma:deletion}
The number of deletable superedges with respect to ${\cal K}$ is $o(OPT)$.
\end{lemma}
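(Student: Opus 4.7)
The proof will proceed by a case analysis over the possible ``types'' of short non-canonical paths from $V_1$ to $V_{k+3}$ that ${\cal K}$ can support, and for each type we will charge the pairs $(u,v) \in BF^1({\cal A}_i) \times BR^{k+3}({\cal B}_j)$ it covers to shortcut edges of ${\cal K}$. The goal is a bound of the form ``total pairs covered by non-canonical paths is $O(|{\cal K}|\cdot f(n,k))$,'' so that if the number of deletable superedges were $\Omega(OPT)$, the spanner would have to use more than the assumed $o(n^{1-\delta}d_*^2 OPT/\log n)$ edges.

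First I would enumerate the non-canonical types. Since the graph distance from $V_1$ to $V_{k+3}$ is $k+2$, a path of length $\leq k$ must ``save'' at least two hops. This leaves two skeletal shapes: a single $3\&i$ shortcut plus type-$1$ edges (for some $1\leq i\leq k$), or two type-$2$ shortcuts, which I denote by $(2\&i,2\&j)$ with $i<j$. The canonical shape used by $\cal H$ in Lemma~\ref{lem:ubspshort} is $(2\&(k-2),2\&(k+1))$; all other combinations are ``non-canonical.''

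Next, for each non-canonical type I would bound, per shortcut edge of that type in ${\cal K}$, the number of pairs $(u,v) \in V_1 \times V_{k+3}$ that it can participate in. The key structural inputs are: (i) in a generalized butterfly, a vertex of strip $i$ has a well-controlled number of ancestors in $V_1$ and descendants in $V_k$ because the coordinates of a path are fixed outside the ``crossed'' strips; (ii) in the broom, every vertex in $V_{k+2}$ has exactly $d_*$ broomsticks and every vertex in $B_{i,s}$ is connected to exactly $d_*$ vertices of $V_{k+2}$; and, crucially, (iii) the construction requires that each vertex in $BF^{k-1}({\cal A}_i)$ is adjacent to at most $d_*/m$ \emph{non-isolated} vertices of $A_{i,s}$, and only non-isolated vertices can participate in superedges relevant to MIN-REP coverage. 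Summing the per-shortcut contribution over all $O(k^2)$ non-canonical types, and using $|{\cal K}|=o(n^{1-\delta}d_*^2 OPT/\log n)$, I would conclude that the total number of $(u,v)$ pairs covered by non-canonical short paths is $o(n^{1-\delta}d_*^2 \cdot OPT)$. Dividing by $\tfrac{1}{4}|BF^1({\cal A}_i)|\cdot |BR^{k+3}({\cal B}_j)|$ --- the threshold of pairs required for a single superedge to be deletable --- then yields $o(OPT)$ deletable superedges.

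The main obstacle will be the ``near-canonical'' types, namely $(2\&(k-2), 2\&j)$ with $j \neq k+1$, $(2\&i, 2\&(k+1))$ with $i \neq k-2$, and the $(3\&i)$ types with $i\in\{k-2,k-1,k,k+1\}$. For these, one endpoint of a shortcut lies exactly where the canonical $\cal H$ places its shortcuts, so a naive count would allow a single shortcut to cover as many pairs as a canonical one. To rule this out I would exploit property (iii): the bound of $d_*/m$ non-isolated neighbors in $BF^{k-1}$ per $A_{i,s}$ vertex, together with the $n^{1-\delta}$-fold replication produced by the $T_4$ blowup, forces the ``other'' shortcut endpoint of any near-canonical type to be shared across too few pairs to matter. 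The remaining (``far'') types are handled by the pure butterfly counting of (i) and carry a strictly larger savings factor, so they plug into the same aggregation with room to spare.
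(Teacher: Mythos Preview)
Your high-level plan matches the paper: classify non-canonical paths by the type of shortcut they use, bound the number of $(u,v)$ pairs any single shortcut of that type can serve, multiply by $|{\cal K}|$, and divide by the per-superedge threshold $\tfrac{1}{4|S_B|}\cdot\frac{n}{r}\cdot n^{1-\delta}d_*^2$. The paper organizes the case analysis by \emph{shortcut type} rather than \emph{path type}: any alternative path must contain at least one shortcut from (1) shortcuts crossing the MIN-REP layers (types $2\&(k-1),2\&k,3\&(k-2),3\&(k-1),3\&k$), (2) $3\&\ell$ with $\ell\le k-3$, or (3) $2\&\ell$ with $\ell\le k-3$. This is equivalent to your enumeration but cleaner, since you only need $O(k)$ cases instead of $O(k^2)$.

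Where your plan has a real gap is in the mapping of structural tools to cases. You omit one essential ingredient entirely: the MIN-REP degree bound $d_0\in[n^{\eta},n^{\eta+\kappa}]$ on the out-degree of vertices in $V_k$. Without it, a shortcut entirely inside the butterfly (your ``far'' types, the paper's categories (2) and (3)) can reach a vertex in $V_k$ whose fan-out into $V_{k+1}$ is unbounded from your listed inputs, and the per-shortcut pair count blows up; ``pure butterfly counting'' does not suffice here. Concretely, the paper's bound for a type-$3\&\ell$ shortcut ($\ell\le k-3$) is $d_*^{k-2}\,n^{(1-\delta)+\eta}$, and the $n^{\eta}$ is exactly $d_0$; the conclusion $|Del(S)|=o(OPT)$ then uses $n^{\eta}=o(d_*)$. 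Similarly, category (3) uses \emph{both} property (iii) and the degree bound, together with $d_0/m = n^{\eta}/n^{2\eta}=o(1)$.

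Conversely, your assertion that property (iii) handles the ``near-canonical'' types is wrong for the MIN-REP-crossing shortcuts (category (1)). For, say, a $2\&k$ or $3\&(k-1)$ shortcut, the $d_*/m$ bound on non-isolated neighbors in $V_{k-1}\to V_k$ is irrelevant, because the shortcut bypasses that edge. The paper handles category (1) by a straight butterfly/broom count giving $\le d_*^{k}$ pairs per shortcut, and then invokes the parameter relation $d_*<n^{1-\delta}$ (which is indeed what the $T_4$ blowup buys, so your instinct there is right --- but it is the whole argument for that case, not a companion to property (iii)).
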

\begin{proof}[Proof Sketch]
We call a path {\em canonical} if it contains shortcut edges of types both $2\&(k-2)$ and $2\&(k+1)$; otherwise, a path is {\em alternative}. \snote{Renamed good and bad paths; note: used also for paths NOT between $V_1$ and $V_{k+3}$}
Observe that every alternative path contains a shortcut edge from one of the following three categories: (1) edges that connect vertices in $V_{i}$ and $V_{j}$, where $i\le k$ and $j\ge k+1$; (2) edges of type $3\&i$  where $i\le k-3$; (3) edges of type $2\&i$  where $i\le k-3$.
\ignore{
 Let ${\cal S}$ be the set of all the shortcut types that belong to one of the above three categories. For each shortcut type $S\in {\cal S}$, let $Del(S) = \{(\alpha,\beta) \in [r]^2 | \mbox{ at least }\frac{1}{4|S_B|} \mbox{ fraction of pairs }
(u,v) \in BF^1({\cal A}_\alpha) \times BR^{k+3}({\cal B}_\beta) \mbox{ have an alternative path with a shortcut of
type }S \}.$  By a union bound and the definition of the deletable superedges, the total number of deletable superedges is at most $\sum_{S\in S_B} Del(S)$. Hence it suffices to show that for all $S\in {\cal S}$, $Del(S)=o(1) OPT.$ For $S\in {\cal S}$, let $C(S)=\{ (u,v)\in V_1 \times V_{k+3}
\mid \exists \mbox{ an alternative path between } u \mbox{ and } v \mbox{ using
  a shortcut of type } S\}.$ From the definition of $Del(S)$ and
the facts that  $|BF^1({\cal A}_\alpha)|=\frac{n}{r}$ and $|BR^{k+3}({\cal B}_{\beta})|=n^{1-\genpar} d_{*}^2$, we have
$|C(S)|\ge |Del(S)| \frac{1}{4|S_B|}\frac{n}{r}~ n^{1-\genpar}~ d_{*}^2.$ Then we obtain upper bound of $|C(S)|$ in terms of $OPT$ for each type $S\in {\cal S}$ by considering the three categories separately, which in tern proves that $|Del(S)|=o(1)OPT$.
 }
 %
Let $S_B$ be the set of all shortcut edge types included in the three cases.
 By analyzing the three cases separately, we show that for any $S\in S_B$, the number of superedges
 $(A_i,B_j)$, $(i,j)\in [r]^2$, such that at least a $\frac{1}{4|S_B|}$ fraction of pairs $(u,v) \in
 BF^1({\cal A}_i) \times BR^{k+3}({\cal B}_j)$ have an alternative path containing a shortcut of type $S$, is
 $o(OPT)$. Then by a union bound over $S\in S_B$, we prove the lemma. In the analysis for the first
 type, we use the fact that the degree of each non-isolated vertex of $V_k$ is at least
 $n^{1-\genpar}$ which is bigger than $d_{*}$. When $S$ is of the second type, we need the facts
 that the out-degree of each vertex in $V_k$ is at most $d_0n^{1-\genpar}$ and that
 $n^{\eta}=o(d_{*})$. For the third case, we use the facts that for any vertex $v$ in $V_{k-1}$ the
 number of non-isolated vertices in $V_k$ that $v$ is connected to is at most $\frac{d_{*}}{m}$, and
 that $n^{\eta}=o(n^{2\eta})$.
\ignore{
Let ${\cal S}$ be the set of all the shortcut types that belong to one of the above three categories. For each shortcut type $S\in {\cal S}$, let $Del(S) = \{(\alpha,\beta) \in [r]^2 | \mbox{ at least }\frac{1}{4|S_B|} \mbox{ fraction of pairs }
(u,v) \in BF^1({\cal A}_\alpha) \times BR^{k+3}({\cal B}_\beta) \mbox{ have an alternative path with a shortcut of
type }S \}.$  By a union bound and the definition of the deletable superedges, the total number of deletable superedges is at most $\sum_{S\in S_B} Del(S)$. Hence it suffices to show that for all $S\in {\cal S}$, $Del(S)=o(1) OPT.$ For $S\in {\cal S}$, let $C(S)=\{ (u,v)\in V_1 \times V_{k+3}
\mid \exists \mbox{ an alternative path between } u \mbox{ and } v \mbox{ using
  a shortcut of type } S\}.$ From the definition of $Del(S)$ and
the facts that  $|BF^1({\cal A}_\alpha)|=\frac{n}{r}$ and $|BR^{k+3}({\cal B}_{\beta})|=n^{1-\genpar} d_{*}^2$, we have
$|C(S)|\ge |Del(S)| \frac{1}{4|S_B|}\frac{n}{r}~ n^{1-\genpar}~ d_{*}^2.$ Then we obtain upper bound of $|C(S)|$ in terms of $OPT$ for each type $S\in {\cal S}$ by considering the three categories separately, which in tern proves that $|Del(S)|=o(1)OPT$.}
\end{proof}

Next, form the graph $\tcinstance'$ from $\tcinstance$ by  deleting all edges of $\tcinstance$ connecting ${\cal A}_i$ to ${\cal B}_j$, for all the deletable superedges
$({\cal A}_i,{\cal B}_j)$ with respect to ${\cal K}$. Similarly,
obtain a graph ${\cal K'}$ from ${\cal K}$ as follows: for all deletable superedges $({\cal
  A}_i,{\cal B}_j)$ with respect to ${\cal K}$, delete all edges of $\cal K$ connecting
${\cal A}_i$ to ${\cal B}_j$,  and also delete all shortcuts in $\cal K$ of types other than
$2\&(k-2)$ and $2\&(k+1)$.  Note that for any cluster pair
$({\cal A}_i,{\cal B}_j)$ of $\tcinstance'$, either there are no edges between vertices in ${\cal
  A}_i$ and ${\cal B}_j$ or at least  $\frac{3}{4}$ of the pairs in
$BF^1({\cal A}_i)\times BR^{k+3}({\cal B}_j)$ are connected by a canonical path. Also define a {\sc
  MIN-REP} instance ${\spinstance'}$ from ${\spinstance}$ by deleting all edges in $\spinstance$
corresponding to all the deletable superedges with respect to ${\cal K}$.

For $\mu\in [0,1]$, we say a subgraph of $TC(\mathcal{G})$  is a $\mu$-{\em good \spshort{k} \snote{Rename? E.g., $\mu$-cover} for $\mathcal{G}$} if for every $(i,j) \in [r]^2$
such that $\mathcal{A}_i$ and $\mathcal{B}_j$ are comparable in $\mathcal{G}$, at least a $\mu$ fraction of pairs
$(u,v) \in BF^1(\mathcal{A}_i) \times BR^{k+3}(\mathcal{B}_j)$ are connected by
canonical paths in the subgraph. E.g., the graph $\mathcal{K}'$ is a $\frac{3}{4}$-good \spshort{k} for $\mathcal{G}$.

\begin{lemma}[{\bf Rerandomization Lemma}]
If a $\frac{3}{4}$-good \spshort{k}  $\mathcal{K}'$  for $\mathcal{G}'$ is given, then there exists
$\mathcal{K}''$, a $1$-good \spshort{k} for $\mathcal{G}'$, such that $|\mathcal{K}''| \leq
O(|\mathcal{K}'|\cdot \log n)$.
\end{lemma}
\begin{proof}[Proof Sketch]
To construct $\mathcal{K}''$ from $\mathcal{K}'$, we let $\mathcal{K}''$ be the union of
$O(\log n)$ random transformations of the edges of $\mathcal{K}'$.  Each transformation $\Pi_r$ will
keep the edges of $\mathcal{G}'$ invariant but move the shortcut edges.  Thus, when we let
$\mathcal{K}'' = \cup_{r=1}^{O(\log n)} \Pi_r(\mathcal{K}')$, the edges of $\mathcal{K}''$ are still
a subset of the edges in $TC(\mathcal{G}')$.   The goal of the random transformations is to ensure that in
$\Pi_r(\mathcal{K}')$, with a constant probability, each vertex
in $BF^1(\mathcal{A}_i)$ can reach a vertex in $V_{k-2}$ incident to a shortcut edge, and each vertex in
$BR^{k+3}(\mathcal{B}_j)$ is incident to a shortcut edge from $V_{k+1}$.  We achieve this by randomly
permuting the groups inside the clusters $\mathcal{A}_i$ and $\mathcal{B}_j$ and by randomly
permuting the edges of the butterfly and broom graphs attached to $\mathcal{A}_i$ and
$\mathcal{B}_j$.  After these random transformations, any two vertices $u$ and $v$ in
$BF^1(\mathcal{A}_i)$ and $BR^{k+3}(\mathcal{B}_j)$ are connected by a canonical path
with probability at least $\frac{1}{16}$.  Hence, $\mathcal{K}''$ has such a path
between them with probability $1-\frac{1}{\poly(n)}$.  The union bound over all possible
$(u,v)$ and $(i,j)$ shows that the desired $\mathcal{K}''$ with the claimed size exists.
\end{proof}

\snote{New summary; pls check.} Now that the \spshort{k} is 1-good, it is easier to reason about rep-covers of the underlying {\sc MIN-REP} instance. Recall that ${\cal G}'$ has $n^{1-\delta}$ copies of {\sc MIN-REP} instance $\spinstance'$  embedded in it. Moreover, many pairs of vertices in layers $V_1$ and $V_{k+3}$ rely on each instance to connect.
We partition the shortcut edges of $\mathcal{K}''$ into $n^{1-\delta}d^2_*$ parts, according to which groups of vertex pairs in $V_1\times V_{k+3}$ they can help to connect. By averaging, one of the parts has $o(OPT)$ shortcut edges, and can be used to extract a rep-cover of $\spinstance'$ of size $o(OPT)$. By including two vertices for each of the $o(OPT)$ deleted superedges, we obtain a rep-cover for $\spinstance$ of size $o(OPT)$. This is a contradiction.
%


\begin{lemma}[{\bf Rep-cover Extraction Lemma}]
Given $\mathcal{K}''$, a $1$-good \spshort{k} for $\mathcal{G}'$, of size $o(OPT \cdot n^{1-\genpar}
\cdot d_*^2)$, there exists a {\sc MIN-REP} cover of $\spinstance$ of size $o(OPT)$. \hfill $\Box$
\end{lemma}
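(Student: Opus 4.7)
The plan is to construct the small rep-cover of $\spinstance$ directly from $\mathcal{K}''$ by identifying the $V_k \cup V_{k+1}$ endpoints of a carefully chosen subset of $\mathcal{K}''$'s shortcuts. The key structural observation is that every canonical path from $u \in BF^1({\cal A}_i)$ to $v \in BR^{k+3}({\cal B}_j)$ uses a type-$2\&(k-2)$ shortcut ending at some $w \in A_{i,s}$ and a type-$2\&(k+1)$ shortcut starting at some $w' \in B_{j,t}$, and the edge $(w, w')$ is necessarily a {\sc MIN-REP} edge of $\spinstance$. Hence any subset $P$ of $\mathcal{K}''$'s shortcuts that retains, for every non-deleted superedge, at least one canonical path covering it yields a rep-cover of $\spinstance'$ of size at most $2|P|$.

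To produce a sparse such subset, I would partition the shortcut edges of $\mathcal{K}''$ into $n^{1-\delta}\cdot d_*^2$ parts indexed by $(s, q) \in [n^{1-\delta}] \times [d_*^2]$ in such a way that every part separately covers every non-deleted superedge. Concretely, each type-$2\&(k+1)$ shortcut $(w', v)$ with $v \in BR^{k+3}(B_{j,t})$ and position $q$ among these $d_*^2$ broomsticks is placed in part $(t, q)$, while each type-$2\&(k-2)$ shortcut $(u_{k-2}, w)$ with $w \in A_{i,s}$ is placed in part $(s, q')$, where $q' \in [d_*^2]$ is the position of $w$ among the $d_*^2$ descendants of $u_{k-2}$ in the generalized butterfly (well-defined by its bi-regular structure). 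For every non-deleted superedge $({\cal A}_i, {\cal B}_j)$ and every coordinate $(s, q)$, 1-goodness supplies a canonical path from a chosen $u \in BF^1(A_{i,s})$ to the broomstick with coordinates $(j, s, q)$, and the two shortcuts along it can be forced (after aligning $t=s$ and $q'=q$) into the common part $(s,q)$. By averaging, some part $P^*$ has $|P^*| \leq |\mathcal{K}''|/(n^{1-\delta} d_*^2) = o(\text{OPT})$, so its $V_k \cup V_{k+1}$ endpoints form a rep-cover of $\spinstance'$ of size $o(\text{OPT})$. Adding two vertices per deleted superedge --- of which there are $o(\text{OPT})$ by the Path Analysis Lemma --- extends this to a rep-cover of $\spinstance$ of the same asymptotic size.

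The principal obstacle I expect is certifying that the partition genuinely has every part covering every non-deleted superedge. This requires verifying that for each $({\cal A}_i, {\cal B}_j)$ and each coordinate $(s, q)$, one can match the left-side index $(s, q')$ with the right-side index $(t, q)$ so that a single canonical path contributes shortcuts to the common part $(s, q)$. The alignment relies on the ancestor-descendant regularity of the generalized butterfly together with the construction's rule that each vertex in $BF^{k-1}(A_{i,s})$ is adjacent to at most $d_*/m$ non-isolated vertices of $A_{i,s}$, and its success depends crucially on the parameter choices $\delta, \eta, \zeta$ made earlier in the reduction from noise-resilient {\sc MIN-REP}.
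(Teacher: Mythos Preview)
Your overall strategy is exactly the paper's: partition the shortcut edges of $\mathcal{K}''$ into $n^{1-\delta}\cdot d_*^2$ disjoint classes, pick a lightest class by averaging, read off a rep-cover of $\spinstance'$ from its $V_k\cup V_{k+1}$ endpoints, and then repair the deleted superedges with two vertices each. The final step (rep-cover of $\spinstance'_{\bar s}$ $\Rightarrow$ rep-cover of $\spinstance'$ $\Rightarrow$ rep-cover of $\spinstance$) is also right.

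The gap is in your indexing of the left shortcuts. You place a type-$2\&(k-2)$ edge $(u_{k-2},w)$ into the class determined by the position of $w$ among the $d_*^2$ descendants of $u_{k-2}$, i.e.\ by the last two butterfly coordinates of the $V_k$ endpoint $w$. With this choice the ``alignment'' you flag as the principal obstacle genuinely fails: given $(s,q)$ and a non-deleted superedge $({\cal A}_i,{\cal B}_j)$, you may pick $v\in BR^{k+3}(B_{j,s})$ at position $q$ and any $u\in BF^1(A_{i,s})$, and $1$-goodness hands you a canonical path, but you have no control over which $w\in A_{i,s}$ that path passes through, so its left shortcut can land in a class $(s,q')$ with $q'\neq q$. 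Thus it is not true that every class covers every non-deleted superedge.

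The paper's fix is to index the left shortcut by the last two butterfly coordinates of its $V_{k-2}$ endpoint $u_{k-2}$ rather than of $w$. The point is that in the generalized butterfly a layer-$1$ vertex with last two coordinates $(x,y)$ is comparable only to layer-$(k-2)$ vertices with the same last two coordinates $(x,y)$. Hence if you \emph{choose} $u\in BF^1(A_{i,s})$ with last two coordinates $(x,y)$ and $v\in BR^{k+3}(B_{j,s})$ the broomstick at position $(x,y)$, the canonical path from $u$ to $v$ automatically has its $V_{k-2}$ vertex in the $(x,y)$ slice, so both shortcuts fall into the common class $(s,x,y)$. Disjointness of the classes on the left side then follows because distinct $(x,y)$ give disjoint sets of $V_{k-2}$ vertices (the left endpoints), and on the right side because distinct $(x,y)$ give disjoint broomsticks.

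Finally, the alignment does \emph{not} use the $d_*/m$ bound on non-isolated neighbours or the specific values of $\delta,\eta,\zeta$; those were consumed entirely in the Path Analysis Lemma. The Rep-cover Extraction Lemma needs only the butterfly/broom structure and $1$-goodness.
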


Our $\Omega(\log n)$-inapproximability for {\sc 2-TC-Spanner}, described in Appendix~\ref{app:2hardness}, is based on a reduction from {\sc Set-Cover} instead of {\sc MIN-REP}. Our hard instance is a generalized butterfly of diameter $2$ attached to an instance of transformed {\sc Set-Cover}. We identify strip $3$ of the butterfly with the sets in the 
instance, and using ideas similar to our proof for $k > 2$ for ruling out alternative routes, show that up to a constant factor, the optimal \spshort{2} contains only shortcuts from strip $1$ to a minimum set-cover in strip $3$.

\section{Overview of Structural Results}\label{sec:minor}
In \cite{FLNRRS02}, the authors implicitly give \spshorts{2} for
planar digraphs of size $O(n^{3/2} \log n)$ using Lipton-Tarjan separators. For planar digraphs, our first idea is to instead use Thorup's planar separators \cite{thorup-separators} in conjunction with the efficient \spshorts{k} for the directed line of Alon and Schieber \cite{AlonSchieber87} to recursively construct \spshorts{k} of size $O(n\log^2 n)$.
More generally, for $H$-minor-free graphs, using an idea in \cite{thorup-separators}, we take an
arbitrary rooted spanning tree $T$ of the digraph $G$ and use it to partition $G$ into a union of
edge-disjoint digraphs so that in each part $G_i$, if one undirects the edges of $G_i$, any
undirected root path of $T$ restricted to $G_i$ is the union of at most two dipaths. Next, instead of Thorup's planar separators, we use a result of Abraham and Gavoille \cite{AbrahamGavoille} that provides a ``path separator'' for undirected $H$-minor-free graphs.

However, the Abraham-Gavoille separators cannot be directly applied, since they do not provide enough flexibility in the structure of the separators.  That is, these separators consist of a sequence of unions of minimum cost paths, where the cost function on the edges is arbitrary but specified in advance. We, however, need to adaptively change the cost function during the construction of the separator. Indeed, in the outermost level of recursion we need the path separator to lie on $T$, as otherwise the path separator may be the union of $\Omega(n)$ dipaths in the underlying digraph, and therefore we cannot use the efficient \spshort{k} of Alon and Schieber \cite{AlonSchieber87} for the directed line in order to efficiently recurse. Thus, we specify the cost of an edge in $T$ to be $1$, while outside of $T$ it is $\infty$. However, when we partition $G$ into subgraphs in the recursion, it may be that two vertices in the same subgraph no longer have a path contained in $T$. Since the cost function is fixed and the cost of any path between these two vertices is now $\infty$, a path separator in the recursive step need not be contained in $T$, and so it may not be the union of a small number of dipaths. Thus, we again cannot efficiently recurse. If, however, we could change the cost function in the recursive step, we could define a new rooted tree in each subgraph and base our cost function on that. We observe that the proof of the Abraham-Gavoille separators can be used to show that their path separators satisfy this stronger property. \snote{Is this cost-function discussion reflected in the appendix? } \enote{No, it doesn't seem so. It might be difficult to relate this general description to the actual construction indeed.}

\begin{theorem}
If $G$ is an $H$-minor-free graph, then it has a
$2$-TC-spanner of size $O(n \log^2 n)$ and, more generally, a $k$-TC-spanner of size $O(n\cdot \log
n \cdot \lambda_k(n))$ where $\lambda_k(\cdot)$ is the $k$-row
inverse Ackermann function.
\end{theorem}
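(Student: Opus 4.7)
The plan is to use a recursive path-separator decomposition, instantiating at each level the Abraham--Gavoille separator for $H$-minor-free graphs together with Alon--Schieber's $O(n\lambda_k(n))$-size $k$-TC-spanner for a directed line. First I would compute an arbitrary rooted spanning tree $T$ of the underlying undirected graph of $G$, and define a cost function assigning cost $1$ to edges of $T$ and $\infty$ to edges outside $T$. Because $T$ is rooted, the tree path from the root to any vertex $v$ is a union of at most two maximal dipaths of $G$, a fact I will exploit to pull back any separator lying in $T$ into a small union of dipaths of $G$.

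Next I would invoke Abraham--Gavoille to obtain a path separator $S$ consisting of $O(1)$ minimum-cost shortest paths whose removal splits $G$ into components of size at most $|V(G)|/2$. By the choice of cost, every separator path is contained in $T$, and hence, by the observation above, decomposes into at most $O(1)$ dipaths of $G$. On each such dipath $P$, I would install an Alon--Schieber $k$-TC-spanner of size $O(|P|\cdot\lambda_k(|P|))$ (size $O(|P|\log|P|)$ when $k=2$). To handle comparable pairs separated by $S$, I would add shortcut edges connecting every vertex $v$ outside $S$ to the ``entry/exit'' canonical vertices of the logarithmic segment structure of each dipath spanner that $v$ is comparable to, so that any separator vertex $s$ comparable to $v$ in $G$ is reachable from (or to) $v$ through the $P$-spanner in at most $k$ hops. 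Summed over all out-of-separator vertices this costs $O(n \lambda_k(n))$ edges at this level.

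Each component of $G\setminus S$ is still $H$-minor-free (minor-freeness is preserved under vertex deletion), so I would recurse inside each component, computing a fresh rooted spanning tree and cost function in that component. With $O(\log n)$ levels of recursion each adding $O(n\lambda_k(n))$ edges, the total size is $O(n\log n\cdot\lambda_k(n))$; for $k=2$, $\lambda_2(n)=\Theta(\log n)$ gives $O(n\log^2 n)$. Correctness follows because every comparable pair $(u,v)$ is either separated at some level by $S$ (handled by the separator spanner plus shortcuts) or lies in the same leaf component (trivially handled).

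The main obstacle, as the paper flags, is that the Abraham--Gavoille theorem fixes a single cost function, whereas my recursion demands a different spanning tree, and therefore a different cost function, inside every subproblem. I would need to reopen the proof of Abraham--Gavoille and verify the stronger property that their separator construction succeeds under any cost function supplied at recursion time, so that in every recursive subgraph the separator still lies in the locally chosen $T$ and hence decomposes into $O(1)$ dipaths. Once this strengthened separator statement is in hand, the remaining bookkeeping---verifying that each added shortcut yields a path of length at most $k$ and that $H$-minor-freeness propagates---is routine, and the size bound follows from the level-by-level accounting above.
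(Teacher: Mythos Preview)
Your overall architecture matches the paper's, and you correctly identify the adaptive cost-function issue in Abraham--Gavoille as the delicate point. However, there are two genuine gaps.

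First, the claim that for an \emph{arbitrary} rooted spanning tree $T$ of the underlying undirected graph, every root path is a union of at most two dipaths of $G$, is false. Take the DAG on $a,b,c,d,e$ with edges $a\to b$, $c\to b$, $c\to d$, $e\to d$; its undirected graph is the path $a\!-\!b\!-\!c\!-\!d\!-\!e$, and the root path from $a$ to $e$ alternates direction four times. The paper does not use an arbitrary tree. It first runs a preprocessing step (inspired by Thorup): starting from an arbitrary root $r$, it partitions $V(G)$ into reachability ``levels'' $L_0,L_1,\ldots$ so that any comparable pair lies in two consecutive levels, sets $G_i=L_{i-1}\cup L_i$, and then \emph{constructs} a spanning tree level by level so that its restriction to each $G_i$ has the two-dipath property. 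The TC-spanner is then built separately on each $G_i$ (with $\sum_i |V(G_i)|\le 2n$). Without this level decomposition, a single Abraham--Gavoille separator path can shatter into $\Omega(n)$ dipaths and the Alon--Schieber step blows up.

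Second, your cut-pair argument only guarantees that each endpoint reaches the separator in at most $k$ hops, which yields $2k$ hops for the pair, not $k$. For $k=2$ the paper does \emph{not} install an Alon--Schieber spanner on the dipath; instead, for each vertex $v$ and each dyadic scale $z$, it adds an edge from $v$ to the \emph{first} reachable hub $p_{y\cdot m/2^z}$ and from the \emph{last} such hub to $v$. The point is that for any cut pair $(u,v)$ there is a scale at which $u$ and $v$ connect to the \emph{same} hub, giving a length-$2$ path. For general $k$, the paper uses a $(k-2)$-TC-spanner on a subsampled set of hubs (the \textsc{connect-on} procedure) so that $u\to p_1\leadsto p_2\to v$ has total length at most $k$. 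Your ``entry/exit canonical vertices'' description does not capture this meeting-point mechanism and, as stated, does not give a $k$-TC-spanner.
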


\newcommand{\T}[1]{$\mathcal{T}(#1)$}
\newcommand{\spannersize}[2]{\ensuremath{S_{#1}\left({#2}\right)}}

\paragraph{Acknowledgments.} We would like to thank Michael Elkin, Vitaly Feldman, Cyril Gavoille, Piotr Indyk, T.S. Jayram, Elad Hazan, Ronitt Rubinfeld, and Adam Smith for helpful discussions.

\ifnum\final=0
\bibliographystyle{abbrv}
\else
\addcontentsline{toc}{section}{References}

\bibliographystyle{abbrv}
\fi

\appendix
\section{Missing Details from Section~\ref{sec:intro}}\label{app:intro-details}
\subsection{Previous Work on Other Related Problems}\label{app:previous-work}
 Dodis and Khanna \cite{dk99} study the problem of finding the minimum-cost subset of missing edges that can be added to a (directed) graph $G$, with costs and
lengths associated to the missing edges, so as to ensure that that there is a path of length at most $k$ between every pairs of nodes (not only those connected in $G$). Observe that {\sc $k$-TC-Spanner} is a special case of that problem: we can let $G$ be the transitive reduction (see definitions below)
of the input graph to {\sc $k$-TC-Spanner}, for all edges in the transitive closure of $G$ set the length to 1 and cost to 1, and for the remaining edges set the length to $k$ and cost to 0. Given this instance, the algorithm of Dodis and Khanna will produce a $k$-TC-spanner.
However, the guarantee on the resulting \spshort{k} size is only
 $\leq |G| + O(OPT n \log k)$, where $OPT$ is the number of missing edges that need to be
added. If $|G| = OPT = \Theta(n)$, their algorithm may return a \spshort{k} with $\Omega(n^2)$
edges.
Thus, in general, the resulting approximation ratio is no better than $O(n)$.
Since their problem is more general, their hardness results do not apply to TC-spanners.

Chekuri {\it
et al} \cite{cegs08} give an $O(p^{1/2 + \epsilon})$-approximation algorithm for the directed
Steiner network problem where, given a digraph and node pairs $(s_1, t_1), \ldots, (s_p, t_p)$,
the goal is to connect all pairs with as few edges as possible. We can reduce {\sc $k$-TC-Spanner} to this problem by specifying all comparable pairs of nodes in levels $1$ and $k+2$ in the $k+1$-extension
of $G$ (see definition 5.5 of \cite{dk99}). However, their ratio is only $O(n^{1+\epsilon})$ when $p =
\Omega(n^2)$, and thus the resulting ratio for {\sc $k$-TC-Spanner} is no better than $O(n)$.

\subsection{ Sparse \spshort{2}s Imply Efficient Monotonicity Testers}\label{app:monotonicity}
In this section we restate and prove Lemma~\ref{lem:spanners-to-monotonicity-tests}, referred
to in the introduction. The proof of the lemma
explains how to use \spshort{2}s to obtain efficient monotonicity testers.
\begin{lemma}
If a directed acyclic graph $G_n$ has a \spshort{2} $H$ with $s(n)$ edges, then
there exists a monotonicity tester on $G_n$ that runs in time $O\left(\frac{s(n)}{\epsilon n}\right)$.
\end{lemma}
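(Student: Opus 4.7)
The plan is to exhibit the natural edge-sampling tester on $H$ and analyze it via a ``violated-edge'' argument that exploits the diameter-$2$ property of the TC-spanner. The tester, on input $f$, samples $t = \Theta\!\left(\frac{s(n)}{\epsilon n}\right)$ edges $(u,v)$ of $H$ uniformly and independently, queries $f(u)$ and $f(v)$ for each sample, and rejects if any sampled edge satisfies $f(u) > f(v)$; otherwise it accepts. Each iteration costs $O(1)$ queries and time (assuming $H$ is stored so that a uniformly random edge can be produced in constant time), giving the claimed running time.

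Completeness is immediate: if $f$ is monotone on $G_n$, it is monotone along every edge of $TC(G_n)$ and in particular along every edge of $H$, so the tester always accepts. The heart of the argument is soundness. Call an edge $(u,v)$ of $H$ \emph{violated} if $f(u)>f(v)$, and call a vertex \emph{bad} if it is an endpoint of some violated edge; let $V'$ denote the set of good vertices. The key structural claim is that $f\restriction_{V'}$ is monotone with respect to the partial order induced by $G_n$. Indeed, if $u,v \in V'$ are comparable in $G_n$ (say $u \leadsto v$), then by the $2$-TC-spanner property $H$ contains either the edge $(u,v)$ or a path $u\to w \to v$ of length two; either way, every edge on this short path is incident to a good endpoint, hence is not violated, so $f(u)\leq f(v)$.

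Since $f\restriction_{V'}$ is monotone on the induced sub-poset, it extends to a monotone function on all of $V_n$ by standard topological-sort arguments (e.g.\ assign each bad vertex $v$ the value $\max\{f(u) : u\in V', u\leadsto v\}$ with appropriate tie-breaking), so the resulting function differs from $f$ on at most $|V\setminus V'|$ vertices. If $f$ is $\epsilon$-far from monotone, this forces $|V\setminus V'| \geq \epsilon n$, and since each violated edge contributes at most two bad endpoints, $H$ must contain at least $\epsilon n/2$ violated edges. Therefore a uniformly random edge of $H$ is violated with probability at least $\epsilon n / (2 s(n))$, so after $t = O(s(n)/(\epsilon n))$ independent samples the tester rejects with probability at least $2/3$.

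The main obstacle to watch for is the extension step: one must verify that any monotone partial assignment on a sub-poset of a DAG extends to a monotone function on the whole DAG, and that this can be done by changing only the values on $V\setminus V'$. This is a routine lattice/DAG fact, but it is the place where the $2$-TC-spanner hypothesis (as opposed to a weaker $k$-TC-spanner) is crucial, since with diameter~$2$ the intermediate vertex $w$ on the length-two path is controlled by the ``good vertex'' argument above, while for larger $k$ all intermediate vertices along a path of length $k$ would need to be good simultaneously, breaking the counting.
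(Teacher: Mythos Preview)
Your proof is correct and follows the same overall architecture as the paper's: the same edge-sampling tester, the same extension lemma (a monotone partial function on a sub-poset extends to a monotone total function), and the same target of showing at least $\frac{\epsilon n}{2}$ violated edges in $H$.

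The one genuine difference is in how you count violated edges of $H$. The paper defines a vertex to be \emph{bad} if it is incident to a violated edge in $TC(G_n)$, obtains $\geq \epsilon n$ bad vertices, and then invokes a separate matching lemma from \cite{DGLRRS99} (a matching of $\frac{\epsilon n}{2}$ violated pairs in $TC(G)$) together with an explicit injection into violated edges of $H$. You instead define ``bad'' with respect to $H$, and your key observation is that if $u,v$ are both good in this sense, then on any length-$2$ path $u\to w\to v$ in $H$ the first edge is non-violated because $u$ is good and the second because $v$ is good; this directly gives monotonicity on $V'$ and hence $\geq \epsilon n$ bad vertices \emph{with respect to $H$}, from which the trivial bound $|E_{\text{viol}}| \geq |B|/2$ finishes. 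This is a real, if modest, simplification: it eliminates the need for the matching lemma and the injection step. What the paper's route buys is a sharper structural statement (an actual matching of violated pairs), but for the lemma as stated your direct counting is enough and cleaner.
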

\begin{proof}
The tester selects $\frac{4 s(n)}{\epsilon n}$ edges of the \spshort{2} $H$
uniformly at random. It queries function $f$ on the
endpoints of all the selected edges and rejects if and only if one of the selected edges is
{\em violated} by $f$, that is, $f(x)>f(y)$ for an edge
$(x,y)$.

If the function $f$ is monotone on $G_n$, the algorithm always accepts. The crux of the proof
is to show that functions that are $\eps$-far from monotone are rejected with probability at least
$\frac 2 3$. Let $f: V_n\to\mathbb{R}$ be a function that is $\eps$-far from monotone. It is
enough to demonstrate that $f$ violates at least $\frac{\eps n}2$ edges in $H$. Then each selected
edge is violated with probability $\frac{\eps n}{2s(n)},$ and the lemma follows by elementary
probability theory.

Denote the transitive closure of $G$ by $TC(G)$. We say a vertex $x\in V_n$ is assigned a {\em bad}
label by $f$ if $x$ has an incident violated edge in $TC(G_n)$; otherwise, $x$ has a {\em good} label. Let $V'$ be a set of vertices with
good labels. Observe that $f$ is monotone on the induced subgraph $G'=(V',E')$ of $TC(G)$.
This implies (\cite{FLNRRS02}, Lemma 1) that $f$ can be changed into a monotone function by
modifying it on at most $|V_n-V'|$ vertices. Since $f$ is $\eps$-far from monotone, it shows
that there are at least $\eps n$ vertices with bad labels.

Every function that is $\eps$-far from monotone has a matching $M$ of $\frac{\eps n}2$ violated edges in $TC(G)$ \cite{DGLRRS99}.
We will establish an injection from the set of edges in $M$ to the set of violated edges in $H$.
For each edge $(x,y)$ in the matching, consider the corresponding path from $x$ to $y$ of length at most 2 in
the \spshort{2} $H$. If the path is of length 1, $(x,y)$ is the violated edge in $H$ corresponding
to the matching edge $(x,y)$. Otherwise, let $(x,z,y)$ be a path of length 2 in $H$. At least
one of the edges $(x,z)$ and $(z,y)$ is violated, and we map $(x,y)$ to that edge. Since $M$ is a matching, all edges
in $M$ have distinct endpoints. Therefore, each edge in $M$ is mapped to a unique violated edge in $TC(G)$.
Thus, the \spshort{2} $H$ has at least $\frac {\eps n} 2$ violated edges, as required.
\end{proof}

The fact that $H$ is a \spshort{2} is crucial for the proof. If it was a
\spshort{k} for $k>2$, the path of length $k$ from $x$ to $y$ might not have any violated edges incident to
$x$ or $y$, even if $f(x)>f(y)$. Consider $G_{2n}=(V_{2n},E)$ where
$V_{2n}=\{x_1,\ldots,x_{2n}\},E=\{(x_i,x_{n})\ | \ i< n \}\cup(x_n,x_{n+1})\cup \{(x_{n+1},x_j)\ | \ j>n+1 \}.$
$G_n$ is a \spshort{3} for itself. Now set $f(x_i)=1$ for $i\leq n$ and $f(x_i)=0$ otherwise.
Clearly, this function is $\frac 1 2$-far from monotone, but only one edge, $(x_n,x_{n+1})$ is
violated in the \spshort{3}.

\subsection{Partial Products in a Semigroup}\label{app:partial-products}
Chazelle \cite{cha87} and Alon and Schieber also consider a generalization of the above problem, where the input is an (undirected) tree $T$ with
an element $s_i$ of a semigroup associated with each vertex $i$. The goal is to create a space-efficient data structure
that allows to compute the product of elements associated with all
vertices on the path from $i$ to $j$, for all vertex pairs $i,j$ in $T$.
The generalized problem reduces to finding a sparsest \spshort{k} for a certain directed tree $T'$ obtained from $T$ by appending a new vertex to each leaf, and then selecting an arbitrary root and directing all edges away from it. A \spshort{k} for $T'$
with $s(n)$ edges yields a preprocessing scheme with space complexity $s(n)$ for computing products on $T$ with at most $2k$
queries as follows. The database stores a product $s_{v_1}\circ \cdots \circ s_{v_t}$ for each \spshort{k} edge
$(v_1,v_{t+1})$ if the endpoints of that edge are connected by the path $v_1,\cdots,v_t,v_{t+1}$ in $T'$.
Let $LCA(u,v)$ denote the lowest common ancestor of $u$ and $v$ in $T$. Compute the product corresponding
to a path from $u$ to $v$ in $T$ as follows: (1) if $u$ is an ancestor of $v$ (or vice versa) in $T$,
 query the products corresponding to the \spshort k edges on the shortest path from $u$ to a child of $v$ (from $v$ to a
 child of $u$, respectively); (2) otherwise, make queries corresponding to the \spshort k edges on the shortest path from
 $LCA(u,v)$ to a child of $u$ and on the shortest path from a child of $LCA(u,v)$ nearest to $u$ to a child of $u$. This
 gives a total of at most $2k$ queries.

\section{Approximation Algorithms for \tcspanner\ and Related Problems}\label{app:algorithms}
\subsection{Algorithm for \directedspanner}
We give the algorithm for {\sc Directed $k$-Spanner}, which is a more general problem than {\sc $k$-TC-Spanner}. We then mention the extensions to other problems.
\begin{theorem}\label{thm:lpmain}
For any (not necessarily constant) $k > 2$, there is a deterministic polynomial-time algorithm achieving an $O((n\log n)^{1-1/k})$-approximation for {\sc Directed $k$-Spanner}.
\end{theorem}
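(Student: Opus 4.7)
The plan is to combine a convex relaxation of the natural covering integer program with a vertex-sampling procedure, and then exploit a combinatorial dichotomy: for every edge $(u,v)$ in the input graph, either the number of short $u$-$v$ paths is small (so the LP cannot hide behind many fractional paths and must invest real mass somewhere), or the number of such paths is large (so a random vertex hits the ``middle'' of one of them with good probability). Balancing the threshold at $r = \Theta((n\log n)^{1-1/k})$ will match the two contributions and yield the claimed ratio.

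First I would set up the program with a binary variable $x_e$ for each edge $e$ of $G$ and a variable $y_P$ for each directed path $P$ of length at most $k$ between endpoints of an edge of $G$, with constraints $y_P \le x_e$ for each $e\in P$ and $\sum_{P:u\to v} y_P \ge 1$ for every $(u,v)\in E$; relaxing $x_e,y_P\in[0,1]$ gives a lower bound on $\mathrm{OPT}$. The rounding algorithm is exactly the one displayed in the excerpt: include every edge with $x_e \ge \tfrac12 (n\log n)^{-(1-1/k)}$ in $H$, then sample $r$ vertices $z_1,\dots,z_r$ and add the in- and out-BFS trees $BFS(z_i)$. The size bound is immediate: the LP contributes at most $2r\cdot\mathrm{OPT}$ edges and the sampling contributes at most $2r(n-1) \le O(r\cdot n)$ edges, which using the lower bound $\mathrm{OPT}\ge \Omega(n/r^{1/(k-1)})$ given by the path-counting argument balances out to $O(r\cdot \mathrm{OPT})$.

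The core technical step is proving covering: for any edge $(u,v)\in E$, if $\sum_{P} y_P \ge 1$ is met mostly by paths whose minimum edge variable is at least $\tfrac12(n\log n)^{-(1-1/k)}$, then the rounded $x_e$'s already include a length-$\le k$ $u$-$v$ path; otherwise at least $(n\log n)^{1-1/k}$ distinct paths of length $\le k$ from $u$ to $v$ must exist in $G$ itself, and an easy double-counting shows that these paths share at least $((n\log n)^{1-1/k})^{1/(k-1)}$ distinct ``midpoint'' vertices $w$ with $u\leadsto w\leadsto v$ of total distance at most $k$. A Chernoff/union bound over the $\le n^2$ edges then shows that sampling $r$ vertices hits such a midpoint for every edge with probability $\ge 1-1/n$, and $BFS(w)$ certifies a length-$\le k$ path.

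The main obstacle I foresee is the exponential blowup in the number of path variables $y_P$, and I would handle it precisely as hinted: substitute $y_P = \min_{e\in P} x_e$ to obtain the convex program
\[
\min \sum_e x_e \quad\text{s.t.}\quad \sum_{P:u\to v,|P|\le k} \min_{e\in P} x_e \ge 1\ \ \forall (u,v)\in E,\ \ 0\le x_e\le 1,
\]
which is solvable by the ellipsoid method provided a separation oracle. Given a candidate $\vec x$, the oracle sorts the coordinates, and for each edge $e$ in increasing order of $x_e$ deletes $e$ from $G$ and uses iterated Boolean matrix multiplication (the $(i,j)$-entry of the $\ell$-th power of the adjacency matrix counts paths of length exactly $\ell$) to tally, for each pair $(u,v)\in E$, the contribution $x_e \cdot (\text{number of length-}\le k\text{ paths through }e\text{ with }e\text{ as min-edge})$ to $\sum_P \min_{e\in P} x_e$; any violated inequality supplies a separating hyperplane. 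This runs in $\mathrm{poly}(n,k)$ time. Finally, to derandomize the vertex sampling I would use the standard greedy method of conditional expectations on a potential function counting uncovered edges, which preserves the approximation guarantee and completes the proof of Theorem~\ref{thm:lpmain}.
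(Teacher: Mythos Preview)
Your proposal is essentially the paper's proof: the same convex relaxation obtained by substituting $y_P=\min_{e\in P}x_e$, the same dichotomy between few-path edges (covered by thresholding the LP at $\tfrac12(n\log n)^{-(1-1/k)}$) and many-path edges (covered by sampling $r=\Theta((n\log n)^{1-1/k})$ vertices and taking their in/out BFS trees), the same separation oracle via sorting edge values and counting surviving $u$--$v$ paths with matrix powers, and the same greedy derandomization.

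One correction is needed in your size-bound justification. You write that the sampling cost $O(rn)$ is absorbed into $O(r\cdot\mathrm{OPT})$ via ``the lower bound $\mathrm{OPT}\ge \Omega(n/r^{1/(k-1)})$ given by the path-counting argument,'' but that argument bounds the number of \emph{intermediate vertices} on $u$--$v$ paths, not $\mathrm{OPT}$; it says nothing about the optimal spanner size. The paper's actual argument is simpler: assume without loss of generality that $G$ is connected (otherwise run the algorithm on each component separately), so any spanner has at least $n-1$ edges and hence $\mathrm{OPT}\ge n-1$; then $O(rn)=O(r\cdot\mathrm{OPT})$ directly. With this fix your argument goes through and matches the paper.
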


\begin{proof}
Consider the following integer programming formulation. Let $OPT$ be the size of an optimal $k$-spanner of $G$. For each edge $e$ in the input digraph $G$, we have a variable $x_e$ indicating whether $x_e$ occurs in the $k$-spanner. Also, for each
(not necessarily simple) path $P$ containing at most $k$ edges, we have a variable $y_P$ indicating whether all of the
edges of $P$ occur in the $k$-spanner.
\begin{center}
$\min \sum_{e \in G} x_e$
\begin{eqnarray}
& \textrm{s.t.} & \forall e = (u,v) \in G, \ \sum_{P \textrm{ from }u \textrm{ to }v, \ |P| \leq k}y_P \geq 1\\
& & \forall P = (e_1, e_2, \ldots, e_r), \ y_P \leq e_1, \ y_P \leq e_2, \ldots, y_P \leq e_r\\
& & \forall e \ \forall P, \ x_e, \  y_P \in \{0,1\}
\end{eqnarray}
\end{center}
The first constraint ensures that there is at least one path of length at most $k$ spanning each edge $(u,v)$ in the spanner, while the second constraint only allows a path to be included if each of its edges is also in the spanner. Thus, any solution to this program is a $k$-spanner, and vice versa. Notice, however, that the number of path variables grows exponentially with $k$. We can instead write this as as the following integer program:
\begin{center}
$\min \sum_{e \in G} x_e$
\begin{eqnarray}
& \textrm{s.t.} & \forall e = (u,v) \in G, \sum_{P = (e_1, \ldots, e_r,) \textrm{ from }u \textrm{ to }v, \ |P| \leq k} - \min(x_{e_1}, x_{e_2}, \ldots, x_{e_r}) \leq -1\\
& & \forall e, \ x_e \in \{0, 1\}
\end{eqnarray}
\end{center}
Now the number of variables is $m$, where $m$ is the number of edges of $G$. We relax the constraints $x_e \in \{0,1\}$ to $x_e \in [0,1]$. The resulting set $K$ we optimize over is convex since $\vec{x} \in [0,1]^m$ and the functions $-\min(x_{e_1}, \ldots, x_{e_r})$ are convex (as is their sum). We reduce the problem to a feasibility one by taking the convex set $K' = K \cap \{\vec{x} : \sum_{e \in G}x_e \leq t\}$, for a parameter $t$ which we do binary search over.

The problem is still that we sum over a number of terms which can be exponential in $k$. However, we design a separation oracle $A$ which does the following: given a point $\vec{x} \in \mathbb{R}^m$, $A$ decides whether $\vec{x} \in K'$, and if not, provides an $\vec{a} \in \mathbb{R}^m$ and $b \in \mathbb{R}$ for which $\langle \vec{a}, \vec{x} \rangle < b$ but $\langle \vec{a}, \vec{y} \rangle \geq b$ for all $\vec{y} \in K'$. We will design an oracle for this task running in time $\poly(n)$. Later, we explain the details of algorithm $A$.

There are several folklore polynomial-time algorithms for solving a convex program given a separation oracle $A$. We use the ellipsoid algorithm, which, given $\epsilon > 0$, runs in time $\poly(n)\log \frac{1}{\epsilon}$ and, if the program is feasible, returns an $\vec{x^*}$ for which the $\ell_2$-norm $|\vec{x^*} - \vec{x}|_2$ is at most $\epsilon$, where $\vec{x}$ is a feasible solution. Setting $\epsilon = n^{-\Theta(k)}$ guarantees that $\vec{x^*} \in [0,1]^m$, that $\sum_{v \in G} x^*_e \leq n^{-\Theta(k)} + \sum_{v \in G} x_e$, and for all $e = (u,v) \in G$,
\begin{eqnarray*}
\sum_{P = (e_1, \ldots, e_r) \textrm{ from }u \textrm{ to }v, \ |P| \leq k} - \min(x^*_{e_1}, x^*_{e_2}, \ldots, x^*_{e_r}) & \leq & n^k\epsilon - \sum_{P = (e_1, \ldots, e_r) \textrm{ from }u \textrm{ to }v, \ |P| \leq k} \min(x_{e_1}, \ldots, x_{e_r})\\
& \leq & n^{-\Theta(k)} -1.
\end{eqnarray*}
Assuming we have an oracle $A$ described above, the following is our algorithm {\sc $k$-Spanner Generation} to construct a directed $k$-spanner $H$ of $G$. For a vertex $v \in G$, we use $BFS(v)$ to denote the set of edges along a shortest path tree\footnote{For a directed graph, this means we take a shortest path tree of edges directed away from $v$, together with a shortest path tree of edges directed towards $v$.} rooted at $v$. Clearly $|BFS(v)| = O(n)$.
\begin{center}
\fbox{
\parbox{6.5in} {
\underline{{$k$-Spanner Generation}($G$)}:
\begin{enumerate}
\addtolength{\itemsep}{-2mm}
\item $H \leftarrow \emptyset$.
\item For each edge $e \in G$, if $x_e^* \geq \frac{1/2}{(n \log n)^{1-1/k}}$, $H \leftarrow H \cup \{e\}$.
\item Randomly sample $r = O((n \log n)^{1-1/k})$ vertices $z_1, z_2, \ldots, z_r \in G$.
\item $H \leftarrow H \cup \left (\cup_i BFS(z_i) \right )$.
 Output $H$.
\end{enumerate}
}}
\end{center}

\begin{lemma}\label{lem:spanner}
With probability at least $1-1/n$, $H$ is a $k$-TC-spanner of $G$.
\end{lemma}


\begin{proof}
Consider an edge $(u,v) \in G$. Suppose there are at most $(n \log n)^{1-1/k}$ different $u-v$ paths $P$ of length at most $k$. By constraint (4) of the convex program and the relationship between $\vec{x}$ and $\vec{x^*}$, there exists such a $P = (e_1, \ldots ,e_r)$ for which $\min(x^*_{e_1}, \ldots, x^*_{e_r}) \geq 1/(n \log n)^{1-1/k} - n^{-\Theta(k)} \geq 1/(2(n \log n)^{1-1/k})$, for some $r \leq k$. Thus, this path $P$ is included in $H$ in step 2 of {\sc $k$-Spanner Generation}.

Now suppose there are more than $(n \log n)^{1-1/k}$ different $u-v$ paths $P$ of length at most $k$. Let $W_{u,v} = \{w_1, \ldots, w_s\}$ be the set of vertices lying on at least one such path. The number of $u-v$ paths of length at most $k$ that can be formed from $s$ vertices is at most $s^{k-1}$. So, $s^{k-1} = \Omega((n \log n)^{1-1/k})$, or $s = \Omega((n \log n)^{1/k})$. The probability that $\{z_1, z_2, \ldots, z_r\} \cap W_{u,v} = \emptyset$ is at most $(1-s/n)^r \leq e^{-rs/n} \leq e^{-\Omega(\log n)} \leq 1/n^3$, for an appropriate choice of constants.

By a union bound, with probability at least $1-1/n$, all edges $(u,v) \in G$ for which there are more than $(n \log n)^{1-1/k}$ different $u-v$ paths $P$ of length at most $k$ satisfy $\{z_1, z_2, \ldots ,z_r\} \cap W_{u,v} \neq \emptyset$. Conditioned on this event, for each such $(u,v) \in G$ let $z(u,v)$ be an arbitrary element in $\{z_1, z_2, \ldots, z_r\} \cap W_{u,v}$. Then the path $u \leadsto z(u,v) \leadsto v$ along the edges of $BFS(z(u,v))$ is of length at most $k$. Indeed, there is a path $P$ of length at most $k$ from $u$ to $v$ which contains $z(u,v)$, and the path from $u$ to $v$ along the edges of $BFS(z(u,v))$ cannot be any longer than the length of $P$.
\end{proof}


\begin{lemma}\label{lem:size}
$|H| = O((n \log n)^{1-1/k}OPT)$.
\end{lemma}

\begin{proof}
Let $OPT'$ be the optimum of the convex program. Clearly $OPT' \leq OPT$. In step (2) of {\sc
$k$-Spanner Generation}, at most $2(n \log n)^{1-1/k}OPT' \leq 2(n \log n)^{1-1/k}OPT$ edges are added
to $H$. In step (4), $O(rn) = O((n \log n)^{2-1/k})$ edges are added to $H$. So, $|H| = O((n \log
n)^{1-1/k})(OPT + n)$. We may assume that $OPT \geq n-1$, as otherwise $G$ is not connected and we
can run {\sc $k$-Spanner Generation} on each of its connected components. Therefore, $|H| = O((n \log
n)^{1-1/k}OPT).$
\end{proof}
As $|H| \geq OPT$, these lemmas show that {\sc $k$-Spanner Generation} is a randomized
$O((n\log n)^{1-1/k})$-approximation algorithm for {\sc Directed $k$-Spanner}. The algorithm can
be derandomized by greedily choosing the $z_i$ in step 3.
\begin{lemma}\label{lem:lpderam}
For any constant $k > 2$, {\sc Directed $k$-Spanner} has a deterministic $O((n\log n)^{1-1/k})$-approximation algorithm.
\end{lemma}
\begin{proof}
In step (3) of {\sc $k$-Spanner Generation}, instead of sampling $r$ random vertices, we do the following. For each edge $(u,v) \in G$ with more than $(n \log n)^{1-1/k}$ simple $u-v$ paths $P$ of length at most $k$, find the set $W_{u,v}$ of all vertices lying on such a path between $u$ and $v$. This can be done by computing $BFS(w)$ for each vertex $w \in G$, and checking if $u \leadsto w \leadsto v$ along the edges of $BFS(w)$ is a path of length at most $k$. By averaging, there is a vertex $z_1$ which occurs in an $\Omega((\log n)^{1/k}/n^{1-1/k})$ fraction of the sets $W_{u,v}$. Choose $z_1$, delete the sets $W_{u,v}$ containing $z_1$, and repeat. This greedy algorithm finds $z_1, \ldots, z_r$ with $r = O((n \log n)^{1-1/k})$.
\end{proof}
The technique can also be extended to other spanners variants.
\begin{lemma}\label{lem:lpextend}
For all constant $k > 2$, there are deterministic $O((n\log n)^{1-1/k})$-approximation algorithms for {\sc Client/Server Directed $k$-Spanner}, {\sc $k$-Diameter Spanning Subgraph}, and \tcspanner.
\end{lemma}
\begin{proof}
In the client/server problem, we only wish to span a subset of edges of $G$, called {\it client edges}, and we may only use a subset of edges of $G$ for spanning, called {\it server edges}. To modify our algorithm, we have a constraint in the linear program for each client edge rather than for all edges, and we only consider paths along server edges. In {\sc $k$-Diameter Spanning Subgraph}, all pairs of vertices $(u,v)$ for which $v$ is reachable from $u$ need to be connected by a path of length at most $k$. For this we impose constraint (1) for all pairs rather than just all edges. Finally, {\sc $k$-TC-Spanner} is a special case of {\sc Directed $k$-Spanner} when the input is transitively closed.
\end{proof}
Moreover, {\sc $k$-Spanner Generation} is polynomial time provided that we can find $\vec{x^*}$ in polynomial time. For this, it suffices to show that the running time of the separation oracle is polynomial.

The separation oracle $A$ first checks whether $\vec{x} \in [0,1]^m$ and $\sum_{e \in G} x_e \leq t$ in $\poly(n)$ time, and provides an appropriate hyperplane if any of these constraints are violated. Assume, then, that all of these constraints are satisfied. Let Count$(G,u,v,k)$ be an algorithm which outputs the number of $u-v$ paths of length at most $k$. The number of $u-v$ paths of length exactly $i$ is just the $(u,v)$-th entry of $M^i$, where $M$ is the adjacency matrix of $G$. Thus, we can implement Count$(G, u, v, k)$ in $\poly(n)$ time. For a subset $S$ of edges of $G$, Count$(G \setminus S, u, v, k)$ counts the number of $u-v$ paths of length at most $k$ in $G$ which do not use the edges in $S$.

The oracle sorts the coordinates of $x$, obtaining $x_{e_1} \leq x_{e_2} \leq \cdots \leq x_{e_m}$. Let $S_0 = \emptyset$, and for $i \geq 1$, $S_i = S_{i-1} \cup \{e_i\}$. The oracle computes Count$(G \setminus S_i, u, v, k)$ for all $i \geq 0$. From this information, for each $j \geq 1$ the oracle can extract $c_j$, the number of $u-v$ paths of length at most $k$ whose minimum is achieved by $x_{e_j}$. Indeed, observe that $c_j$ is just the number of $u-v$ paths of length at most $k$ in $G \setminus S_{j-1}$ minus the number of $u-v$ paths of length at most $k$ in $G \setminus S_j$. Algorithm $A$ can now check if the constraint corresponding to $(u,v)$ is satisfied, and it does this for each $(u,v) \in G$. If the constraint for some $(u,v)$ is not satisfied, for all $i$ we set the $i$-th coordinate of the hyperplane $\vec{a}$ to be $c_i$, and the scalar $b$ to be $1$. This $\vec{a},b$ pair satisfy the desired constraints, and are output by $A$. Note that $A$ runs in $\poly(n)$ time for any $k$.
\end{proof}
\subsection{\tcspanner\ Algorithm for Large $k$}
For large $k$, we have the following better approximation, which is specific to the $\textsf{$k$-TC Spanner}$ problem.
\begin{theorem}\label{thm:largek-approx}
For any $k \geq 6$, there exists a deterministic approximation algorithm for the \textsf{$k$-TC
  Spanner} problem with approximation ratio $O((n \log n)/(k^2 + k \log n))$.
\end{theorem}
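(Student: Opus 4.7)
The plan is to describe a concrete algorithm matching the ingredients advertised in the paper's sketch, namely (i) include $TR(G)$, (ii) sample $s = \Theta((n\log n)/k)$ vertices $W \subseteq V$ uniformly at random, and (iii) for each $w \in W$, add $O(n/k)$ shortcut edges from $TC(G)$ incident to $w$, chosen to form a local hub around $w$. Concretely, for each sampled $w$, perform forward and backward BFS in $TR(G)$ up to depth $\lceil k/2 \rceil$, and include shortcut edges $(w,v)$ for $v$ in the forward frontier and $(u,w)$ for $u$ in the backward frontier, selected greedily so that together with the $TR(G)$ edges they give $d_H(u,w)\leq\lceil k/2\rceil$ and $d_H(w,v)\leq\lceil k/2\rceil$ for every $u,v$ that must be relayed through $w$.

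To establish correctness, I will argue that with high probability $H$ is a $k$-TC-spanner. Take any comparable pair $(u,v)$. If $d_{TR(G)}(u,v)\leq k$ then the path in $TR(G)\subseteq H$ already suffices. Otherwise fix a shortest $TR(G)$-path $\pi$ of length $\ell>k$ from $u$ to $v$. Each interior vertex of $\pi$ is in $W$ independently with probability $s/n=\Theta((\log n)/k)$, so by a Chernoff bound the probability that no $w\in W$ lies on $\pi$ within BFS-distance $\lceil k/2\rceil$ of both endpoints is at most $1/n^3$. Such a $w$, together with the shortcuts installed in step (iii), yields a $u\leadsto w\leadsto v$ path of length $\leq k$ in $H$. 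A union bound over the $O(n^2)$ comparable pairs concludes the argument.

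For the approximation ratio, by construction
\[|H| \;\leq\; |TR(G)| \;+\; s\cdot O(n/k) \;=\; |TR(G)| \;+\; O(n^2\log n / k^2).\]
Since any $k$-TC-spanner must have the same transitive closure as $G$, $\mathrm{OPT}\geq |TR(G)|\geq n-1$ (assuming weak connectivity; otherwise run per component). Thus $|H|/\mathrm{OPT} \leq 1 + O(n\log n / k^2)$. Combined with the trivial bound $|H|\leq |TC(G)|\leq n^2$ that gives ratio $O(n)$, we obtain $|H|/\mathrm{OPT}=O(\min(n,\;n\log n/k^2)) = O(n\log n / (k^2+k\log n))$, as claimed. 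Derandomization replaces uniform sampling by a greedy rule that iteratively picks the vertex maximizing the number of not-yet-covered comparable pairs, analyzed by a standard set-cover argument.

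The main obstacle is showing that only $O(n/k)$ edges per sample suffice, since the BFS ball of radius $\lceil k/2\rceil$ around $w$ in $TR(G)$ may contain many more than $n/k$ vertices. The resolution is that $w$ need only cover comparable pairs whose shortest $TR(G)$-path actually passes through $w$ near its midpoint, and because consecutive samples along any long path are spaced $\Theta((\log n)/k)$ apart with high probability, each $w$ is only responsible for a localized neighborhood whose effective coverage can be realized by $O(n/k)$ representative shortcuts. Making the greedy representative selection rigorous, and verifying that the resulting $d_H$-distances stay below $\lceil k/2\rceil$ in each direction from $w$, is the technical crux.
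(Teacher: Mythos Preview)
Your proposal has the right skeleton (hubs + $O(n/k)$ shortcuts per hub), but the instantiation is broken in two linked places, and the missing idea is precisely what makes the paper's proof work.

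\textbf{The frontier shortcuts cannot span long pairs.} You propose to shortcut each hub $w$ only to vertices on the BFS frontier at radius $\lceil k/2\rceil$, and then argue that for a comparable pair $(u,v)$ you can find a sampled $w$ ``within BFS-distance $\lceil k/2\rceil$ of both endpoints''. But if $d_{TR(G)}(u,v)=\ell>k$, no vertex on the path is within $\lceil k/2\rceil$ of both $u$ and $v$ simultaneously, so the probability you bound by $1/n^3$ is actually $1$. More generally, shortcuts of range $k/2$ from a single hub cannot collapse a path of length $\ell$ down to $k$ once $\ell$ exceeds roughly $2k$; and chaining several hubs along the path gives length $\Theta(\ell/k)$, which is still $>k$ for $\ell>k^2$.

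\textbf{The ``$O(n/k)$ representatives'' patch is exactly the hard part, and needs a different construction.} The paper does achieve $O(n/k)$ shortcuts per hub, but not by looking at a fixed-radius frontier. For each hub $w$ it defines a set $S_w$ of at most $4n/k'$ \emph{descendants} of $w$ (with $k'=\Theta(k)$) such that \emph{every} descendant $v$ of $w$, no matter how far, satisfies $d_{TR}(w',v)\le k'/4$ for some $w'\in S_w$. Such an $S_w$ exists by averaging: partition the descendants of $w$ by BFS-depth modulo $\lceil k'/4\rceil$; one residue class has at most $4n/k'$ vertices, and any descendant is within $k'/4$ (along the BFS tree) of that class. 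The hub set $W$ is then chosen (by a greedy hitting-set argument, giving $|W|=O(n\log n/(k+\log n))$) so that every $u$ with a far descendant has some $w\in W$ with $u\leadsto w$ and $d_{TR}(u,w)\le 3k'/8$. The spanner path is $u\to w$ in $TR(G)$ (length $\le 3k'/8$), then the single shortcut $(w,w')$ for the appropriate $w'\in S_w$, then $w'\to v$ in $TR(G)$ (length $\le k'/4$); total $\le 5k'/8+1\le k$. Note the asymmetry: the shortcut from $w$ may jump arbitrarily far forward, while the backward leg $u\to w$ is short by the choice of $W$, not by any backward shortcut.

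A minor point: your final ratio bookkeeping $\min(n,\,n\log n/k^2)=O(n\log n/(k^2+k\log n))$ is off in the regime $\sqrt{\log n}\le k\le \log n$ (the right-hand side is $\Theta(n/k)$ there, not $\Theta(n\log n/k^2)$); the paper obtains the stated bound directly from $|W|=O(n\log n/(k+\log n))$ times $O(n/k)$ edges per hub.
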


\begin{proof}
Let $G$ be the input digraph.  Assume, w.l.o.g., that $G$ is connected. We construct $S$, a $k$-TC-spanner for the graph $G$ such that
$|S|/S_k(G) \leq O((n \log n)/(k^2 + k \log n))$.
Set $k' = ck$ for $c$ to be determined.
Let $G'$ be $G$ with each directed cycle
contracted to a vertex, and let $H = TR(G')$.  For each vertex $v \in V(H)$, define a set $S_v$ of vertices
such that: (i) $|S_v| \leq 4n/k'$, (ii) for each $u \in S_v$, $v \leadsto_H u$, and (ii) for any
vertex $w$ such that $v \leadsto_H w$, there
exists $w' \in S_v$ with $d_H(w,w') \leq k'/4$.  One can easily see such a set exists by averaging
and it can be efficiently constructed. Next, we define another set of vertices $W \subseteq V(H)$
such that for every pair of vertices $u$ and $v$ such that $u \leadsto_H v$, either there is a path
of length at most $3k'/8$ from $u$ to $v$ in $H$ or there is a path from $u$ to $v$ that contains
a vertex in $W$.  The natural greedy algorithm for this problem constructs $W$ to be of size at most
$O(\frac{n \log n}{k+ \log n})$.  To construct the \spshort{k} $S$, add to $S$ the edges in $H$ and
for each vertex $w \in W$, add edges from $w$ to all vertices in $S_w$.   Also, for each contracted
cycle in $G$, add an undirected star $T_v$ centered at one arbitrary vertex of the cycle.  The size
of $S$ is at most $|H| + O((n^2 \log n)/(k^2 + k \log n)) + O(n)$.  Since $S_k(G) = \Omega(n)$ if
$G$ is connected, $|S|/S_k(G) =  O((n \log n)/(k^2 + k \log n))$.  To see that $S$ is a
\spshort{k}, observe that for any pair of $(u,v)$ with $u \leadsto_G v$, there will
be a vertex $w \in W$ within distance $3k'/8$ of $u$ and a vertex $w' \in S_w$ within distance $k'/4$
of $v$; so, if none of the involved vertices are cycles, the distance between $u$ and $v$ in $S$ is
at most $3k'/8 + k'/4 + 1 = 5k'/8 + 1$.  If the vertices $u, v$ correspond to contracted cycles,
it is easy to see that the path length will be at most $5k'/8+5$.  We choose $k' = ck$ to ensure
$5k'/8+5$ is at most $k$; this is always possible because $k \geq 6$.
\end{proof}

\section{$2^{\log^{1-\epsilon} n}$-Hardness of \tcspanner\ for constant $k > 2$}\label{app:khardness}

  \snote{can we have $D+M+F +\kappa<1$ to get rid of $\kappa'$ in the proof?}

\begin{proofof}{\thmref{transformations}}
We give a reduction from {\sc MIN-REP} with unrestricted parameters, considered in~\cite{ElkPel}:
\begin{fact}[\cite{ElkPel}]\label{fact:minrep}
For all $\eps\in(0,1)$, there is no polynomial time algorithm for the {\sc MIN-REP} problem  with
approximation ratio $2^{\log^{1-\eps}n}$ unless $NP \subseteq DTIME(n^{\polylog n})$.
\end{fact}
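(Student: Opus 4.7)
The plan is to derive this inapproximability via the standard pipeline from the PCP theorem through parallel repetition to Label Cover, and then to observe that {\sc MIN-REP} is essentially the minimization (``covering'') variant of Label Cover, so the same hardness transfers. Concretely, I would start from the PCP theorem in the form asserting NP-hardness of distinguishing satisfiable 3SAT instances from those where at most a $1-\eta$ fraction of clauses can be satisfied, for some absolute constant $\eta>0$. Standard gadgets convert this into a 2-prover 1-round {\sc Label Cover} instance with a constant soundness gap: given a bipartite constraint system with projection constraints, it is NP-hard to distinguish whether all constraints are simultaneously satisfiable or whether at most a $1-\eta$ fraction can be satisfied by any labeling.

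The next step is to amplify the gap using Raz's parallel repetition theorem: $\ell$-fold parallel repetition of the Label Cover instance decreases the soundness to $2^{-\Omega(\ell)}$ while blowing the instance size up to $n^{O(\ell)}$. Choosing $\ell = (\log n)^{c}$ for a suitable $c = c(\eps)$ yields a Label Cover instance $\Phi$ of size $N = n^{\polylog n}$ such that it is hard, under the assumption $NP \not\subseteq DTIME(n^{\polylog n})$, to distinguish the case where $\Phi$ is satisfiable from the case where every labeling satisfies at most a $2^{-\log^{1-\eps} N}$ fraction of constraints. The blow-up is the reason the conclusion lives in the weaker complexity class $DTIME(n^{\polylog n})$ rather than $P$.

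To finish, I would reduce the amplified Label Cover instance to {\sc MIN-REP}. The vertex clusters $\mathcal{A}_i,\mathcal{B}_j$ correspond to the two sides of variables/provers in $\Phi$, a vertex inside a cluster corresponds to a possible label, and an edge is placed between two label-vertices iff the corresponding assignment pair satisfies the projection constraint between the two clusters; the superedges of the {\sc MIN-REP} instance are exactly the Label Cover constraints. In the YES case, a satisfying assignment yields a rep-cover of size $2r$ (one representative per cluster). In the NO case, any rep-cover of size $s$ can be converted by a standard probabilistic argument (randomly pick one representative per cluster uniformly from those chosen) into a labeling satisfying at least an $\Omega(1/(s/r)^2)$ fraction of superedges; hence $s \geq \Omega(r \cdot 2^{\log^{1-\eps'} N / 2})$ for a slightly adjusted $\eps'$. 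Combining the two sides gives a multiplicative gap of $2^{\log^{1-\eps''} N}$ between the YES and NO optima of {\sc MIN-REP}, matching the claimed bound after renaming $\eps''\mapsto\eps$.

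The main obstacle is the tight calibration of parameters: one must choose the number of parallel repetitions $\ell$ large enough that the Label Cover soundness $2^{-\Omega(\ell)}$ translates, after the quadratic loss in the rep-cover-to-labeling reduction, into a {\sc MIN-REP} gap of the form $2^{\log^{1-\eps} N}$, while keeping the instance size polynomial in $n^{\polylog n}$ so that the conclusion is exactly $NP \subseteq DTIME(n^{\polylog n})$ and not something weaker. Everything else (the PCP theorem, Raz's theorem, and the clause-to-Label-Cover gadget) can be invoked as black boxes.
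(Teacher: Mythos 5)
Your pipeline (PCP theorem $\to$ constant-gap Label Cover $\to$ Raz parallel repetition with $\ell=\polylog n$ rounds $\to$ the covering/{\sc MIN-REP} reduction with the quadratic random-representative loss in the soundness direction) is the standard derivation of this bound and is correct, including the calibration that makes the instance size $n^{\polylog n}$ and hence places the conclusion in $DTIME(n^{\polylog n})$ rather than $P$. The paper itself offers no proof of this statement---it is imported verbatim as a Fact from Elkin and Peleg---so your argument is essentially a faithful reconstruction of the cited proof rather than a departure from anything in the paper.
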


We reduce an arbitrary {\sc MIN-REP} instance on
$n^{\kappa'}$ vertices to a {\sc MIN-REP} instance on $n$ vertices with parameters in the desired
range (where $\kappa'$ is a suitably small constant).  Since
{\sc MIN-REP} with unrestricted parameters is $2^{\log^{1-\eps}n}$-inapproximable and
the reduction is polynomial time, the theorem follows.  The reduction consists of a sequence of five
transformations on the original instance. We describe each of the transformations and specify how
the parameters of the input and output {\sc MIN-REP} instances are related.
\begin{itemize}
\item[1.]
\textit{(Disjoint copies)}

Given an $(n_0,r_0,d_0,\maxisol_0)$-{\sc MIN-REP} instance $G_0$ with $OPT_0$ as the solution value,
$T_1(G_0, n^\parone)$ is defined to be the {\sc MIN-REP} instance $G_1$ with $n^\parone$ disjoint copies of
$G_0$.  $G_1$ is a $(n_1,r_1,d_1,\maxisol_1)$-{\sc MIN-REP} instance with $n_1 = n^\parone n_0$, $r_1 =
n^\parone r_0$, $d_1 = d_0$, and $\maxisol_1 = \maxisol_0$. The solution value of $G_1$ is
$OPT_1 = n^\parone OPT_0$ because if $OPT_1 < n^\parone OPT_0$, one could, by averaging over the
$n^\parone$ copies of $G_0$, extract a {\sc MIN-REP} cover for $G_0$ of size smaller than $OPT$.

\item[2.]
\textit{(Dummy vertices inside clusters)}

Given an $(n_1,r_1,d_1,\maxisol_1)$-{\sc MIN-REP} instance $G_1$ with $OPT_1$ as the solution value,
$T_2(G_1, n^\partwo)$ is defined to be the {\sc MIN-REP} instance $G_2$ obtained by increasing the
size of each cluster by a factor of $n^\partwo$ and not attaching any edges to the new vertices.  $G_2$
is a $(n_2,r_2,d_2,\maxisol_2)$-{\sc MIN-REP} instance with $n_2 = n^\partwo n_1$, $r_2 = r_1$, $d_2 =
d_1$, and $\maxisol_2 = n^\partwo \maxisol_1$.  The solution value of $G_2$ remains $OPT_2 = OPT_1$ because the
minimum cover of $G_2$ does not include any isolated vertices.

\item[3.]
\textit{(Blowup inside clusters with matching supergraph)}

Given an $(n_2,r_2,d_2,\maxisol_2)$-{\sc MIN-REP} instance $G_2$ with $OPT_2$ as the solution value,
$T_3(G_2, n^\parthree)$ is defined to be the {\sc MIN-REP} instance $G_3$ obtained as follows.  For
each cluster ${\cal A}_i$ in $G_2$, construct a cluster ${\cal A}_i'$ in $G_3$ consisting of $n^\parthree$ copies of
${\cal A}_i$.  Let $({\cal A}_i')_k$ denote the $k$th copy of ${\cal A}_i$ inside ${\cal A}_i'$.  Whenever there is an edge in $G_2$
between $u \in {\cal A}_i$ and $v \in {\cal B}_j$, for each $1\leq k \leq n^\parthree$, add an edge between the
copy of $u$ in $({\cal A}_i')_k$ and the copy of $v$ in $({\cal B}_j')_k$.  This procedure yields a $(n_3,
r_3,d_3, \maxisol_3)$-{\sc MIN-REP} instance $G_3$ where $n_3 = n^\parthree n_2$, $r_3 = r_2$, $d_3 =
d_2$, and $\maxisol_3 = \maxisol_2$.  The solution value of $G_3$ remains $OPT_3 = OPT_2$ because the
supergraph corresponding to $G_3$ and $G_2$ are identical.

\item[4.]
\textit{(Blowup inside clusters with complete supergraph)}

Given an $(n_3,r_3,d_3,\maxisol_3)$-{\sc MIN-REP} instance $G_3$ with $OPT_3$ as the solution value,
$T_4(G_3, n^\parfour)$ is defined to be the {\sc MIN-REP} instance $G_4$ obtained as follows.  For
each cluster ${\cal A}_i$ in $G_3$, construct a cluster ${\cal A}_i'$ in $G_4$ consisting of $n^\parfour$ copies of
${\cal A}_i$.  Let $({\cal A}_i')_k$ denote the $k$th copy of ${\cal A}_i$ inside ${\cal A}_i'$.  Whenever there is an edge in $G_3$
between $u \in {\cal A}_i$ and $v \in {\cal B}_j$, for each $1\leq k_1,k_2 \leq n^\parfour$, add an edge between the
copy of $u$ in $({\cal A}_i')_{k_1}$ and the copy of $v$ in $({\cal B}_j')_{k_2}$.  This procedure yields a $(n_4,
r_4,d_4, \maxisol_4)$-{\sc MIN-REP} instance $G_4$ where $n_4 = n^\parfour n_3$, $r_4 = r_3$, $d_4 =
n^\parfour d_3$, and $\maxisol_4 = \maxisol_3$.  The solution value of $G_4$ remains $OPT_4 = OPT_3$ because the
supergraph corresponding to $G_3$ and $G_4$ are identical.

\item[5.]
\textit{(Tensoring)}

Given an $(n_4,r_4,d_4,\maxisol_4)$-{\sc MIN-REP} instance $G_4$ with $OPT_4$ as the solution value,
$T_5(G_4, n^\parfive)$ is defined to be the {\sc MIN-REP} instance $G_5$ obtained by repeating the
following construction $\log_2 n^\parfive$ times\footnote{For simplicity, we assume $n^\parfive$ is a power of $2$.}. For each cluster ${\cal A}_i$ in $G_4$, construct two clusters ${\cal A}_i'$ and ${\cal A}_i''$ in $G_5$.
Furthermore, ${\cal A}_i'$ contains two copies of ${\cal A}_i$ and ${\cal A}_i''$ contains two copies of ${\cal A}_i$.  Denote the
two copies inside ${\cal A}_i'$ as $({\cal A}_i')_1$ and $({\cal A}_i')_2$ and similarly the two copies inside ${\cal A}_i''$ as
$({\cal A}_i'')_1$ and $({\cal A}_i'')_2$.  For each edge $(u,v)$ in $G_4$ with $u \in {\cal A}_i$ and $v \in {\cal B}_j$, add
the following four edges in $G_5$: between the copy of $u$ in $({\cal A}_i')_1$ and copy of $v$ in
$({\cal B}_j')_1$, between the copy of $u$ in $({\cal A}_i')_2$ and copy of $v$ in $({\cal B}_j'')_2$, between the copy
of $u$ in $({\cal A}_i'')_1$ and copy of $v$ in $({\cal B}_j'')_1$, and between the copy of $u$ in $({\cal A}_i'')_2$ and
copy of $v$ in $({\cal B}_j')_2$.

The procedure yields a $(n_5,r_5,d_5,\maxisol_5)$-{\sc MIN-REP} instance $G_5$ where $n_5 = n^{2\parfive}
n_4$, $r_5 = n^\parfive r_4$, $d_5 = d_4$, and $\maxisol_5 = \maxisol_4$.  Also, we argue that $OPT_5 =
n^{2\parfive} OPT_4$.  Clearly, $OPT_5 \leq n^{2\parfive} OPT_4$ because one could choose copies of the
vertices in the cover for $G_4$ in each of the $n^\parfive$ copies of the clusters of $G_4$.  For the
other direction, notice that $G_5$ contains $n^{2\parfive}$ vertex disjoint copies of $G_4$, and so, if
$OPT_5 < n^{2\parfive} OPT_4$, then by averaging, there would be a copy of $G_4$ covered using less than
$OPT$ vertices, a contradiction.
\end{itemize}

For some positive $\kappa'$ sufficiently smaller than $\kappa$, consider an arbitrary
$(n^{\kappa'},r_0,d_0,\maxisol_0)$-{\sc MIN-REP} instance $G_0$ with optimum $OPT_0$, where the
only constraints on the parameters are nontriviality conditions: $r_0 \in [1, n^{\kappa'}]$, $d_0\in[1, n^{\kappa'}]$,  $\maxisol_0 \in[1, n^{\kappa'}]$, and $OPT_0 \in [1, 2 n^{\kappa'}]$.  Let $G =
T_5(T_4(T_3(T_2(T_1(G_0, n^\parone), n^\partwo),\linebreak n^\parthree), n^\parfour), n^\parfive)$.  We choose
$\parone, \partwo, \parthree, \parfour, \parfive$ such that $G$ is a $(n,r,d,\maxisol)$-{\sc MIN-REP} instance with
$r \in [n^{R}, n^{R + \kappa'}]$, $d \in [n^{D},n^{D + \kappa'}]$, $\maxisol \in [n^{M}, n^{M +
  \kappa'}]$ and $OPT \in  [n^{F},n^{F + \kappa'}]$.  By
definitions of transformations, $n = n^{\kappa' + \parone +  \partwo + \parthree + \parfour + 2\parfive}$, $r \in
[n^{\parone +  \parfive}, n^{\kappa' + \parone + \parfive}]$, $d \in [n^\parfour, n^{\parfour +
  \kappa'}]$, $\maxisol \in [n^\partwo, n^{\kappa' + \partwo}]$, and $OPT \in [n^{\parone + 2 \parfive}, n^{\kappa' + \parone +
 2\parfive}]$.  Therefore, choose $\parfour = D$, $\partwo = \Maxisol$, $\parfive = F - R$, and $\parone = 2R-F$.  All
of these values are in $(0,1)$ by restriction of the parameters in the theorem statement.  Now,
since $\kappa' + \parone +  \partwo + \parthree + \parfour
+ 2\parfive = D + \Maxisol + F + \parthree + \kappa'$ and since $D+\Maxisol+F < 1$ and $\kappa'$ can be made as
small as we want, we can  choose $\parthree \in (0,1)$ such that $n = n^{\kappa' + \parone +  \partwo + \parthree
  + \parfour + 2\parfive}$.  Therefore, $G$ is a {\sc MIN-REP} instance with parameters in the desired range.
\end{proofof}

{\bf Attaching butterflies in the construction of a hard \tcspanner\ instance $\tcinstance$.} \snote{Move to the main body?}
 We further discuss the way the butterflies are attached to the
 groups.  Recall that $\nrinstance$ is a $(n_0, r, d_0, m)$-{\sc MIN-REP} instance with $n_0 =
 n^\delta, r \in [n^{\delta/2},n^{\delta/2 + \kappa}], d_0 \in [n^\eta, n^{\eta + \kappa}]$ and
 $m \in [n^{2\eta},n^{2\eta + \kappa}]$. Thus, for each group $A_{i, j}$ there are at most $\frac{n^{\genpar}}{rm}$
non-isolated vertices. We will attach the butterfly $BF(A_{i, j})$ in such a way that each vertex in $BF^{k-1}(A_{i, j})$ is adjacent to at most $\frac{d_*}{m}$ non-isolated vertices in $A_{i, j}$, out of a total out-degree of size $d_*$.
This is the crucial property exploited  later in the proof. We can achieve this property in the following way. Recall that \snote{Make notation consistent with the definition of butterflies}
each vertex of $BF^s(A_{i, j})$ is labeled $(a_1, \ldots, a_{k-1}, s)$, where $a_l\in [d_*]$ for all $l\in[k-1]$, $s\in [k]$, and
each vertex $v=(a_1, \ldots, a_{k-1}, k)$ connects to $v'=(a_1, \ldots, a_{k-2},a'_{k-1}, k-1)$. Thus, for a fixed prefix $b=(b_1, b_2, \ldots, b_{k-2})$ all
vertices $(b_1, \ldots, b_{k-2}, b_{k-1}, k-1)$ connect to the same set $A_b$ of vertices in $A_{i,
  j}$, and $|A_b|=d_*$. Choose the set $A_b$ to contain at most $d_*/m$ non-isolated vertices, which
is possible since the total fraction of non-isolated vertices in $A_{i, j}$ is $\leq \frac{1}{m}$.

\begin{lemma}[{\bf Rep-cover Spanner Lemma}]\label{lem:ubspshort}
There is a \spshort k $\cal H$ s.t. $|{\cal H}|= O(OPT ~n^{1-\genpar} (\frac{n^{\genpar}} {r} )^{\frac{2}{k-1}})$, where $OPT$ is the minimum rep-cover of the underlying $\nrinstance$ (and of $\spinstance$ as well) . Moreover, ${\cal H}$ contains only paths of type $(2\& (k-2), 2\& (k+1))$.
\end{lemma}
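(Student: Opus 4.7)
The plan is to construct $\mathcal{H}$ explicitly from a minimum rep-cover of $\mathcal{I}_0$ and to verify the spanner property by exhibiting a canonical length-$k$ route for every comparable pair whose $\mathcal{G}$-distance exceeds $k$. Let $S_0 \subseteq V(\mathcal{I}_0)$ be a minimum rep-cover of the noise-resilient {\sc MIN-REP} instance $\mathcal{I}_0$, so $|S_0| = OPT$. Recall that the transformation $T_4$ partitions each cluster $\mathcal{A}_i$ (resp.\ $\mathcal{B}_j$) of $\mathcal{I}$ into $n^{1-\delta}$ isomorphic groups $A_{i,s}$ (resp.\ $B_{j,s}$) and blows up each edge of $\mathcal{I}_0$ into a complete bipartite edge set between all pairs of copies of its endpoints. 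Let $S$ denote the set of all copies in $\mathcal{I}$ of vertices of $S_0$; then $|S| = OPT \cdot n^{1-\delta}$, and the crucial inherited property is that whenever $(\mathcal{A}_i, \mathcal{B}_j)$ is a superedge of $\mathcal{I}_0$, for every $s, s'$ there exist $a' \in S \cap A_{i,s}$ and $b' \in S \cap B_{j,s'}$ joined by an edge of $\mathcal{I}$. I then define
\[
\mathcal{H} \;=\; E(\mathcal{G}) \;\cup\; \bigl\{(w, a) : w \in V_{k-2},\ a \in S \cap V_k,\ w \leadsto_{\mathcal{G}} a\bigr\} \;\cup\; \bigl\{(b, v) : b \in S \cap V_{k+1},\ v \in V_{k+3},\ b \leadsto_{\mathcal{G}} v\bigr\},
\]
whose two added families are exactly the type $2\&(k-2)$ and type $2\&(k+1)$ shortcuts, establishing the final clause of the lemma.

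Next I verify that $\mathcal{H}$ is a $k$-TC-spanner. Because $\mathcal{G}$ is layered with diameter $k+2$, the only comparable pairs whose $\mathcal{G}$-distance exceeds $k$ lie in $V_1 \times V_{k+2}$, $V_2 \times V_{k+3}$, or $V_1 \times V_{k+3}$. The hardest case is $V_1 \times V_{k+3}$: given $u \in BF^1(A_{i,s})$ and $v \in BR^{k+3}(B_{j,s'})$ with $u \leadsto_{\mathcal{G}} v$, the reachability forces the existence of an $\mathcal{I}$-edge between $A_{i,s}$ and $B_{j,s'}$, and the blow-up property of $T_4$ therefore implies that $(\mathcal{A}_i, \mathcal{B}_j)$ is a superedge of $\mathcal{I}_0$. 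The rep-cover property then yields $a' \in S \cap A_{i,s}$ and $b' \in S \cap B_{j,s'}$ joined by an $\mathcal{I}$-edge. Since the butterfly $BF(A_{i,s})$ contains a shortest path of length $k-1$ from $u$ to $a'$, let $w$ be the vertex of this path lying in strip $V_{k-2}$; then $u \leadsto w \to a' \to b' \to v$, using the edge $(w,a')$ from the first added family and $(b',v)$ from the second, has length $(k-3)+1+1+1 = k$ and is of type $(2\&(k-2), 2\&(k+1))$. The remaining cases $V_1 \times V_{k+2}$ and $V_2 \times V_{k+3}$ admit analogous routes using only one of the two added shortcut families and a native $\mathcal{G}$-edge in place of the other, again of length at most $k$.

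Finally, I bound $|\mathcal{H}|$. By $d_*$-regularity of the butterfly across each strip transition, every vertex in $V_{k-2}$ has exactly $d_*^2$ descendants in $V_k$, so by double counting the number of type $2\&(k-2)$ shortcuts equals $|S \cap V_k| \cdot d_*^2 \le OPT \cdot n^{1-\delta} \cdot d_*^2$. Symmetrically, each $b' \in V_{k+1}$ has exactly $d_*^2$ descendants in $V_{k+3}$ through the broom, bounding the type $2\&(k+1)$ shortcuts by the same quantity. The remaining contribution from $E(\mathcal{G})$ consists of $O(nd_*)$ butterfly edges, $O(nd_* + rn^{1-\delta}d_*^2)$ broom edges, and at most $n \cdot d_0 \cdot n^{1-\delta}$ bipartite $\mathcal{I}$-edges. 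The one piece of non-routine work---and the part I expect to be the main bookkeeping obstacle---is checking that the parameter regime $r \in [n^{\delta/2}, n^{\delta/2+\kappa}]$, $d_0 \le n^{\eta+\kappa}$, $OPT \ge n^\zeta$, with $\zeta = \delta((4k-5)/(4k-4) + 1/(4k-2))$, $\eta = \delta/(2(4k-4)(4k-2))$, and $\kappa$ sufficiently small, forces each of these contributions to be absorbed by $OPT \cdot n^{1-\delta} \cdot d_*^2$. Once that is verified, summing and using $d_*^2 = (n^\delta/r)^{2/(k-1)}$ yields the claimed bound $|\mathcal{H}| = O\bigl(OPT \cdot n^{1-\delta} \cdot (n^\delta/r)^{2/(k-1)}\bigr)$.
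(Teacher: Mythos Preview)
Your proposal is correct and follows essentially the same approach as the paper's own proof: build $\mathcal{H}$ from $E(\mathcal{G})$ together with the type-$2\&(k-2)$ shortcuts into copies of the rep-cover in $V_k$ and the type-$2\&(k+1)$ shortcuts out of copies of the rep-cover in $V_{k+1}$, verify the canonical length-$k$ route for comparable pairs in $V_1\times V_{k+3}$ (and the two easier boundary cases), and then check that $|E(\mathcal{G})|$ is dominated by $OPT\cdot n^{1-\delta}\cdot d_*^2$ via the parameter inequalities. The paper carries out the last step by explicitly listing and verifying four exponent inequalities (one per term of $|\mathcal{G}|$), which is exactly the ``bookkeeping obstacle'' you flag; your edge counts for the butterflies, brooms, and $\mathcal{I}$-edges match the paper's.
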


\begin{proof}
 \snote{Also need to argue that nodes at distance $k+1$ are connected by a length-$k$ path.} \enote{Done}
 We construct the graph $\mathcal{H}$ by adding some shortcut edges to $\mathcal{G}$. Let $S_0$ be a minimum rep-cover of $\nrinstance$ of size OPT. Recall that each ${\cal A}_i$ and ${\cal B}_j$ is replicated $n^{1-\delta}$ times in $\spinstance$. Let $S$ be the set of all replicas in $\spinstance$ of vertices in $S_0$.\snote{In the main body $S$ is used instead of $O$}\anote{Fixed}
 Let $A_{i,j}$ and $B_{k, l}$ be two comparable groups of vertices.
 Recall that $d_*=(\frac{n^{\genpar}} {r})^{\frac{1}{k-1}}$.
  To get a \spshort k on $BF(A_{i,j}) \cup  BR(B_{k, l})$ connect each vertex $v$ from the restriction of $S$ to $A_{i,j}$ with all its $d_*^2$ comparable vertices in $BF^{k-2}(A_{i, j})$.\enote{corrected from $BF^{k-1}(A_{i, j})$}
   Similarly, connect each vertex in the restriction of $S$ to $B_{k, l}$ to its 
  $d_*^2$ comparable vertices in $BR^{k+3}(B_{k, l})$.
 Since every vertex $u\in BF^1(A_{i, j})$ is comparable to every vertex $v\in A_{i, j}$, it follows that there is a vertex $w\in BF^{k-2}(A_{i, j})$ comparable to both $u$ and $v$. Thus, between any such $u$ and $v$ there is a path using an edge of type $2\&(k-2)$. Similarly, every vertex in $BR^{k+3}(B_{k, l})$ is comparable to every vertex of $B_{k, l}$. By our construction, any pair of vertices $(u_1, u_{k+3})\in BF^1(A_{i, j})\times BR^{k+3}(B_{k, l})$ is connected by a path of type $(2\& (k-2), 2\& (k+1))$. In addition, any pair of vertices $(u_1, u_{k+2})\in BF^1(A_{i, j})\times BR^{k+2}(B_{k, l})$, as well as $(u_2, u_{k+3})\in BF^2(A_{i, j})\times BR^{k+3}(B_{k, l})$ are connected by a path of length at most $k$ using shortcut edges of types $2\&(k-2)$ and $2\&(k+1)$, respectively.
  By connecting all the comparable groups $A_{i, j}$ and $B_{k, l}$ in this manner, we obtain a \spshort k on $\cal G$.

 Since there are $n^{1-\genpar}$ copies of each $A_{i, j}$ and $B_{k, l}$ the total number of shortcut edges added  is $OPT n^{1-\genpar} d_*^2 = OPT ~n^{1-\genpar} (\frac{n^{\genpar}} {r} )^{\frac{2}{k-1}}$ and we only used shortcut edges of types $2\&(k-2)$ and $2\& (k+1)$. In addition, since $\mathcal{G}$ is transitively reduced, $\mathcal{H}$ must include all the edges of $\mathcal{G}$. We bound the size of $\cal G$ by inspecting the total number of edges in the butterflies ($knd_*$),  the  {\sc MIN-REP} instance ($\leq nd$), and the brooms ($nd_*+n^{1-\delta}rd_*^2$).
Thus, $|{\cal G}| \leq k~ r~ n^{1-\genpar}~ (\frac{n^{\genpar}}{r}) ^{1+ \frac{1}{k-1}}~+
~n^{2+\eta+\kappa-\genpar}~+~n ~(\frac{n^{\genpar}}{r})^{\frac{1}{k-1}}~+ ~n^{1-\genpar}~ r~
(\frac{n^{\genpar}}{r})^{\frac{2}{k-1}}$.  The following conditions, satisfied by the
parameters of our construction, suffice to show that each term of the preceding sum is
respectively $o(|{\cal H}|)$. (The parameter $\kappa$ is omitted from the conditions, since if the inequalities are satisfied without $\kappa$ then $\kappa$ can be made sufficiently small to ensure that they are satisfied with $\kappa$.)\snote{As a compromise to our margin notes discussion with Arnab, I added the remark in (), but it came out lengthy.}\anote{This is better now.  Thanks!}
\begin{align}
&\zeta + (1 - \delta) + \frac{2}{k-1}\left(\delta - \frac{\delta}{2}\right) >
\frac{\delta}{2} + (1-\delta) + \frac{k}{k-1}\left(\delta - \frac{\delta}{2}\right) &\text{, or }
\zeta > \delta \frac{2k-3}{2(k-1)}\\
&\zeta + (1 - \delta) + \frac{2}{k-1}\left(\delta - \frac{\delta}{2}\right) > 2 + \eta - \delta
&\text{, or } \zeta > 1 + \eta  - \frac{\delta}{k-1} \\
&\zeta + (1 - \delta) + \frac{2}{k-1}\left(\delta - \frac{\delta}{2}\right) > 1 + \frac{1}{k-1}\left(\delta -
  \frac{\delta}{2} \right) &\text{, or } \zeta > \delta \frac{2k-3}{2(k-1)}\\
&\zeta + (1 - \delta) + \frac{2}{k-1}\left(\delta - \frac{\delta}{2}\right) > (1 - \delta) +
\frac{\delta}{2} + \left(\delta - \frac{\delta}{2}\right)\frac{2}{k-1} &\text{, or } \zeta > \frac{\delta}{2}
\end{align}
\end{proof}

\begin{lemma}[{\bf Path Analysis Lemma}]
There are $o(OPT)$ deletable superedges $({\cal A}_i,{\cal B}_j),$ where $i,j \in [r]^2$.
\end{lemma}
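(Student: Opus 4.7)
The plan is to argue by a union bound over a constant-sized family $S_B$ of ``bad'' shortcut types, bounding for each $S \in S_B$ the number $|Del(S)|$ of superedges for which at least a $\tfrac{1}{4|S_B|}$ fraction of the pairs $(u,v) \in BF^1(\mathcal{A}_i) \times BR^{k+3}(\mathcal{B}_j)$ have an alternative path in $\mathcal{K}$ using a shortcut of type $S$. First I would observe that a canonical path must use exactly one shortcut of type $2\&(k-2)$ and one of type $2\&(k+1)$ and that all its remaining edges lie in $\mathcal{G}$; consequently, any alternative length-$\le k$ path from $V_1$ to $V_{k+3}$ must contain at least one shortcut belonging to one of the three categories described in the sketch: (i) a shortcut crossing the {\sc MIN-REP} instance, i.e., with left endpoint in $V_i$ for $i\le k$ and right endpoint in $V_j$ for $j\ge k+1$; (ii) a length-$3$ shortcut of type $3\&i$ with $i\le k-3$; (iii) a length-$2$ shortcut of type $2\&i$ with $i\le k-3$. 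Since $k$ is constant, $|S_B|=O(1)$, and it suffices to prove $|Del(S)|=o(OPT)$ for each $S \in S_B$.

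The main tool is a double count of $C(S) \subseteq V_1 \times V_{k+3}$, the set of pairs with an alternative path through some type-$S$ shortcut. On one side, the defining property of $Del(S)$ gives
\[
|C(S)| \;\ge\; \frac{|Del(S)|}{4|S_B|}\cdot|BF^1(\mathcal{A}_i)|\cdot|BR^{k+3}(\mathcal{B}_j)| \;=\; \Omega\!\left(|Del(S)|\cdot\frac{n}{r}\cdot n^{1-\delta}d_*^2\right).
\]
On the other side, I would upper-bound $|C(S)|$ by summing, over the type-$S$ shortcut edges $e=(x,y)\in\mathcal{K}$, the product (number of $V_1$-ancestors of $x$ in $\mathcal{G}$)$\,\cdot\,$(number of $V_{k+3}$-descendants of $y$ in $\mathcal{G}$). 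These counts are exactly determined by the highly regular structure of the generalized butterflies and brooms: a strip-$i$ butterfly vertex has $d_*^{i-1}$ ancestors in $V_1$, and a vertex in $V_{k+1}$, $V_{k+2}$, $V_{k+3}$ has $d_*^2$, $d_*$, $1$ descendants in $V_{k+3}$ respectively. Combined with the assumed bound $|\mathcal{K}|=o\bigl(OPT\cdot n^{1-\delta} d_*^2/\log n\bigr)$, these yield an upper bound on $|C(S)|$, and comparing the two bounds gives $|Del(S)|=o(OPT)$.

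The case analysis then amounts to verifying that the coverage-per-shortcut is small enough in each category, which is where the parameter choices $\delta=\tfrac{k-1}{k-1/4}$, $\eta=\tfrac{\delta}{2(4k-4)(4k-2)}$, $\zeta=\delta\bigl(\tfrac{4k-5}{4k-4}+\tfrac{1}{4k-2}\bigr)$ come in. For category (i), I would use that each non-isolated vertex in $V_k$ has degree at least $n^{1-\delta}$ toward $V_{k+1}$; thus a single crossing shortcut replaces a star of $\Omega(n^{1-\delta})$ edges, and the ``coverage credit'' of such a shortcut is small relative to $d_*$. For category (ii), the out-degree of any $V_k$ vertex is at most $d_0 n^{1-\delta}\le n^{\eta+\kappa}n^{1-\delta}$, and since $n^{\eta+\kappa}=o(d_*)$ for our choice of $\eta$, the descendant count dominated by traversing through $V_k$ is too small to accumulate $|C(S)|$ up to the required level. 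For category (iii), the construction of the butterflies gives the crucial sparsity property that each vertex of $V_{k-1}$ is adjacent to at most $d_*/m$ non-isolated vertices in $V_k$, and $n^\eta=o(n^{2\eta})=o(m)$ translates this into the desired bound.

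The main obstacle will be the careful bookkeeping in category (i), which is the most delicate because it is the only case where shortcuts directly interact with the {\sc MIN-REP} edges; here I will need to split further according to the strip of the left endpoint and the layer of the right endpoint (giving subcases indexed by $i\le k$ and $j\in\{k{+}1,k{+}2,k{+}3\}$) and verify that the non-isolated-degree lower bound, combined with the parameter inequalities $\zeta>\delta/2$, $\zeta>1+\eta-\delta/(k-1)$, and $D+M+F<1$ from the noise-resilient {\sc MIN-REP} parameters, kills each subcase. Once all three categories yield $|Del(S)|=o(OPT)$, the final union bound $\sum_{S\in S_B}|Del(S)|=o(OPT)$ gives the claim.
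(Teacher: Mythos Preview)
Your overall strategy---union bound over a constant-sized $S_B$, double-counting $|C(S)|$ via the $|Del(S)|$-based lower bound against the (ancestors of $x$)$\times$(descendants of $y$) upper bound summed over type-$S$ shortcuts in $\mathcal{K}$---is exactly the paper's approach, and your handling of categories (ii) and (iii) cites precisely the structural facts the paper uses (the $V_k$ out-degree bound $d_0 n^{1-\delta}$ and the $\le d_*/m$ non-isolated neighbors in $V_k$ from each $V_{k-1}$ vertex).

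Where you drift is category (i), which in the paper is the \emph{easiest} case, not the ``main obstacle.'' For a crossing shortcut of type $\ell_1\&(k-\ell_2)$ with $\ell_1-\ell_2\ge 1$, the coverage is simply $d_*^{\,k-1-\ell_2}\cdot d_*^{\,3+\ell_2-\ell_1}=d_*^{\,k+2-\ell_1}\le d_*^{\,k}$, read off directly from the butterfly/broom counts you already stated---no {\sc MIN-REP} structure enters, and no subcase split by $j\in\{k{+}1,k{+}2,k{+}3\}$ is needed. Plugging in $|\mathcal{K}|$ and $n/r=n^{1-\delta}d_*^{\,k-1}$ gives $|Del(S)|=O\bigl(d_*/(n^{1-\delta}\log n)\bigr)\cdot OPT$, and the only inequality required is $d_*=o(n^{1-\delta})$, i.e.\ $\delta<(k-1)/(k-\tfrac12)$, which holds since $\delta=(k-1)/(k-\tfrac14)$. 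The non-isolated degree lower bound is not used as a structural step here (the main-body sketch invokes it only as shorthand for the inequality $n^{1-\delta}>d_*$); your ``replaces a star'' intuition is not an actual argument. Similarly, the inequalities $\zeta>\delta/2$, $\zeta>1+\eta-\delta/(k-1)$, $D+M+F<1$ you plan to verify belong to the Rep-cover Spanner Lemma (bounding $|\mathcal G|$ against $|\mathcal H|$), not here; the three categories of the Path Analysis need only $\delta<(k-1)/(k-\tfrac12)$, $\eta<\delta/(2(k-1))$, and $\eta>0$, respectively.
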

\begin{proof}
We call a path {\em canonical} if it contains shortcut edges of types both $2\&(k-2)$ and $2\&(k+1)$; otherwise, a path is {\em alternative}. Observe that any alternative path contains at least one shortcut edge from among the following three cases:
(1)  shortcut edges crossing both $V_k$ and $V_{k+1}$, i.e. one of the shortcut edge types:
$3\&(k-2)$, $3\&(k-1)$, $2\&(k-1)$, $2\&k$, and $3\&k$; (2) shortcut edges of type  $3\&\ell$
where  $\ell\le k-3$; (3) shortcut edges of type $2\&\ell$  where $\ell\le k-3$. Let $S_B$ be the
set of all the shortcut edge types contained in the above three cases. Then $|S_B|=\Theta(k)$. Now, for each  shortcut edge type
$S\in S_B$, let $Del(S) = \{(i,j) \in [r]^2 | \mbox{ at least }\frac{1}{4|S_B|} \mbox{ fraction of pairs }
(u,v) \in BF^1({\cal A}_i) \times BR^{k+3}({\cal B}_j) \mbox{ have an alternative path containing a shortcut edge of
type }S \}.$  By a union bound, the total number of deletable superedges is at most $\sum_{S\in S_B}
Del(S)$. Hence it suffices to show that for all $S\in S_B$, $Del(S)=o(OPT).$

Let $C(S)=\{ (u,v)\in BF^1({\cal A}_i) \times BR^{k+3}({\cal B}_j)
\mid \exists \mbox{ an alternative path between } u \mbox{ and } v$ containing
  a shortcut edge of type  $S\}$. By the definition of $Del(S)$, since for all $i\in [r]$,
$|BF^1({\cal A}_i)|=\frac{n}{r}$ and $|BR^{k+3}(B_{i})|=n^{1-\genpar} d_{*}^2$, we have
\begin{equation}\label{eq:case0}
|C(S)|\ge |Del(S)| \frac{1}{4|S_B|}\frac{n}{r}~ n^{1-\genpar}~ d_{*}^2.
\end{equation}

Now we will obtain upper bounds of $|C(S)|$ in terms of $OPT$ for each of three cases of shortcut
edges, thus obtaining upper bounds on $Del(S)$.  Recall that $\delta =
\frac{k-1}{k-\frac{1}{4}}$, $\eta = \frac{\delta}{2(4k-4)(4k-2)}$, and $\zeta = \delta
\left(\frac{4k-5}{4k-4} + \frac{1}{4k-2}\right)$.  Also, recall $r \in [n^\frac{\delta}{2},
n^{\frac{\delta}{2} + \kappa}]$, $d \in [n^{(1-\delta) + \eta}, n^{(1-\delta) + \eta + \kappa}]$, and $m \in [n^{2\eta},
n^{2\eta + \kappa}]$ for some small enough constant $\kappa$ and $d_* =
\left(\frac{n^\delta}{r}\right)^{1/(k-1)}$. We mostly ignore $\kappa$ below since we can make it as small a
constant as we like.

Suppose that $S$ is a shortcut edge from the first case.
Then $S$ is a shortcut of type $\ell_1 \& (k-\ell_2)$, where $ 2\le\ell_1\le 3, 0\le \ell_2$, and $\ell_1-\ell_2\ge 1$.
Now we obtain that for any shortcut of type $S$, the shortcut can be used for at most
$d_{*}^{k-1-\ell_2} d_{*}^{3+\ell_2-\ell_1} = d_*^{k+2-\ell_1}$ many pairs $(u,v)\in C(S)$. Hence,
$|C(S)|\le d_{*}^{k+2-\ell_1} \cdot OPT \frac{n^{1-\genpar} d_*^2}{\log n}  \le d_{*}^{k} \cdot OPT
\frac{n^{1-\genpar}~d_{*}^2}{\log n}$.
From (\ref{eq:case0}), we obtain that

$$|Del(S)|\le 4|S_B|\frac{ d_{*}^{k} OPT~ n^{1-\genpar}~ d_{*}^2}{\frac{n}{r}~  n^{1-\genpar}~ d_{*}^{2}\log n}=O\left(\frac{d_{*}}{n^{1-\genpar}\log n}\right) OPT.$$
Then because $\genpar < \frac{k-1}{k-\frac{1}{2}}$, $1 - \delta >
\frac{\delta}{2(k-1)}$, and so we obtain that $n^{1-\genpar}$ is a polynomial factor larger than
$d_{*} = \left(\frac{n^{\genpar}}{r}\right)^{1/(k-1)}$, which  proves that $|Del(S)|=o(OPT)$.

Now suppose that $S$ is a shortcut type of the second case.
Let $S$ be type $3\&\ell$, where $1\le\ell\le k-3$.
Now, from the fact that out-degree of each vertex in $V_k$ is at most $n^{1-\genpar + \eta + \kappa}$, we
obtain that for any shortcut of type $S$, the shortcut can be used for at most $d_{*}^{\ell-1}
d_{*}^{k-3-\ell} n^{(1-\genpar) + \eta} d_{*}^2 = d_{*}^{k-4} n^{(1-\genpar)+\eta}
d_{*}^2$ many pairs  $(u,v)\in C(S)$ (ignoring $\kappa$ as mentioned above). Hence, upto small
polynomial factors,
\begin{equation}|C(S)| \le d_{*}^{k-4}
  n^{(1-\genpar)+\eta} d_{*}^2 \cdot OPT \frac{n^{1-\genpar}d_{*}^2}{\log n} =\frac{d_{*}^k
    n^{2-2\genpar+\eta}}{\log n} OPT\label{eq:case02}\end{equation}
From (\ref{eq:case0}) and (\ref{eq:case02}), we obtain $|Del(S)|\le 4|S_B|\frac{n^{\eta}}{d_{*}\log
  n} OPT$.  Now, $\eta < \frac{\delta}{2(k-1)}$, and so, $|Del(S)|=o(OPT).$

Now suppose that $S$ is a shortcut type of the third case.
Let $S$ be type  $2\&\ell$, where $\ell\le k-3$. Note that for any vertex $v$ in $V_{k-1}$ the
number of non-isolated vertices \snote{Strange terminology} in $V_k$ that $v$ is connected to is $\frac{d_{*}}{m}$. Hence,
together with the fact that out-degree of each vertex in $V_k$ is at most $n^{1-\genpar + \eta + \kappa}$, we obtain that for any shortcut of type $S$, the shortcut can be used for at most
$\frac{d_{*}^{k-3}}{n^{2\eta}} n^{(1-\genpar)+\eta} d_{*}^2$ many pairs $(u,v)$ in $C(S)$ (upto
small polynomial factors). Then
\begin{equation}|C(S)|\le  \frac{d_{*}^{k-3}}{n^{2\eta}} n^{(1-\genpar)+\eta} d_{*}^2 \cdot OPT \frac{n^{1-\genpar}d_{*}^2}{\log n}=\frac{d_{*}^{k+1}n^{2-2\genpar+\eta}}{n^{2\eta}\log n}OPT.\label{eq:case2}\end{equation}
From (\ref{eq:case0}), (\ref{eq:case2}), and the fact that $n^{\eta}=o(n^{2\eta}\log n)$, we get $|Del(S)|\le 4|S_B|\frac{n^{\eta}}{n^{2\eta}\log n} OPT=o(OPT).$
\end{proof}

\begin{lemma}[{\bf Rerandomization Lemma}]
If a $\frac{3}{4}$-good \spshort{k}  $\mathcal{K}'$  for $\mathcal{G}'$ is given, then there exists
$\mathcal{K}''$, a $1$-good \spshort{k} for $\mathcal{G}'$, such that $|\mathcal{K}''| \leq
O(|\mathcal{K}'|\cdot \log n)$.
\end{lemma}

\begin{proof}
First, we fix some notation.
Consider some $(i,j) \in [r]^2$ such that there is an edge between a vertex in $\mathcal{A}_i$ and
a vertex in $\mathcal{B}_j$ in $\mathcal{G}'$.  Let $S_{i,j}$ be the set of vertices in $\mathcal{A}_i$ that are
adjacent to $\mathcal{B}_j$, and let $T_{i,j}$ be the set of vertices in $\mathcal{B}_j$ that are
adjacent to $\mathcal{A}_i$.  We know that at least $\frac{3}{4}$ of the vertices in
$BF^1(\mathcal{A}_i)$ have a path of type $(2\&(k-2))$ to $S_{i,j}$ and at least
$\frac{3}{4}$ of the vertices in $BR^{k+3}({\cal B}_j)$ have a path of length $1$ from $T_{i,j}$.  By a Markov
argument, for at least $\frac{1}{2}$ of the groups $A_{i,s}$ in $\mathcal{A}_i$, at least
$\frac{1}{2}$ of the vertices in $BF^1(A_{i,s})$ must have a path of type $(2\&(k-2))$ to $S_{i,j}$.
Call the butterfly attached to such a group $A_{i,s}$ an $(i,j)$\textit{-good butterfly}, and call the set of
vertices in $BF^{k-2}(A_{i,s})$ that have shortcut edges to $S_{i,j}$ $(i,j)$\textit{-helpful vertices}.
Similarly, for at least $\frac{1}{2}$ of the groups $B_{j,t}$, at least $\frac{1}{2}$
of the vertices in $BR^{k+2}(B_{j,t})$ have shortcut edges to $T_{i,j}$.  We call the brooms
attached to such groups $B_{j,t}$ $(i,j)$\textit{-good brooms} and we again call the vertices in
$BR^{k+2}(B_{j,t})$ that have shortcut edges to $T_{i,j}$ $(i,j)$-\textit{helpful vertices}.  It will be
clear from context whether a helpful vertex is to the left or right of the {\sc MIN-REP} instance.

Our construction of $\mathcal{K}''$ ensures that in $\mathcal{K}''$, for any two comparable
clusters $({\cal A}_i,{\cal B}_j)$, each vertex in $BF^1({\cal A}_i)$ is comparable to a helpful vertex in
$BF^{k-2}({\cal A}_i)$ and each vertex in $BR^{k+3}({\cal B}_j)$ is a helpful vertex.  This is enough to ensure
that $\mathcal{K}''$ is $1$-good \spshort{k} for $\mathcal{G}'$.  We will construct
$\mathcal{K}''$ to be equal to $\bigcup_{r=1}^{O(\log n)} \Pi_r(\mathcal{K}')$ where each
$\Pi_r$ is a random transformation of $\mathcal{K}'$ that moves the shortcut edges.

Each $\Pi_r$ will be the composition of several transformations on the edges of $\mathcal{K}'$. The
transformations move only shortcut edges, but not transitive reduction edges, in $\mathcal{K}'$.  Informally, the first transformation randomly permutes the groups in each cluster
on the left side of the {\sc MIN-REP} instance, the second randomly permutes the groups in each
cluster of the right side of the {\sc MIN-REP} instance, the third randomly permutes the edges of
the butterfly graph, and the fourth randomly permutes the broomsticks.  Formally:

\begin{itemize}
\item
\textit{Left Group permutations: $\Pi^{lg}$}

For each $i \in [r]$, independently choose a random permutation $\pi_i : [n^{1-\genpar}] \to
[n^{1-\genpar}]$.  For each cluster $\mathcal{A}_i$, if $(u,v)$ is an edge in $\mathcal{K}'$ with $u,
v \in BF(A_{i,s})$, then there is an edge $(u',v')$ in $\Pi^{lg}(\mathcal{K}')$, where $u'$ and $v'$
are the copies of $u$ and $v$ respectively in $BF(A_{i,\pi_i(s)})$.

\item
\textit{Right Group permutations: $\Pi^{rg}$}

For each $j \in [r]$, independently choose a random permutation $\pi_j : [n^{1-\genpar}] \to
[n^{1-\genpar}]$.  For each cluster $\mathcal{B}_j$, if $(u,v)$ is an edge in $\mathcal{K}'$ with $u,
v \in BR(B_{j,s'})$, then there is an edge $(u',v')$ in $\Pi^{rg}(\mathcal{K}')$, where $u'$ and $v'$
are the copies of $u$ and $v$ respectively in  $BR(B_{j,\pi_j(s')})$.

\item
\textit{Butterfly permutations: $\Pi^{bf}$}

For each $i \in [r]$ and $s \in [n^{1-\genpar}]$, label a vertex $u$ in $BF(A_{i,s})$ as
$(a_1,a_2,\dots, a_{k-1}, m) \in [d_*]^{k-1} \times [k]$, where $u \in V_m$ and
$(a_1,a_2,\dots,a_{k-1})$ is the usual vertex labelling that defines a generalized butterfly graph.
Now, for every $(i,s)$ and every $(a_1,\dots,a_{k-3}) \in [d_*]^{k-3}$, independently choose two random permutations
$\pi_{i,s}^{(a_1,\dots,a_{k-3})} : [d_*] \to [d_*]$  and $\sigma_{i,s}^{(a_1,\dots,a_{k-3})} : [d_*]
\to [d_*]$.  For any edge $(u,w) \in BF^{k-2}(A_{i,s}) \times BF^k(A_{i,s})$ where $u =
(a_1,\dots,a_{k-3}, a_{k-2},a_{k-1}, k-2)$ and $w = (a_1, \dots, a_{k-3}, a'_{k-2}, a'_{k-1}, k)$,
there exists the edge $(u',w)$ in $\Pi^{bf}(\mathcal{K}')$ where $u' =
\left(a_1,\dots,a_{k-3},\right.$ $\pi_{i,s}^{(a_1,\dots,a_{k-3})}(a_{k-2}),$
$\sigma_{i,s}^{(a_1,\dots,a_{k-3})}(a_{k-1}),$ $\left.k-2 \right)$.   All other edges in the
butterfly stay fixed.


\item
\textit{Broom permutations: $\Pi^{br}$}

For each $j \in [r]$ and $s' \in [n^{1-\genpar}]$, independently choose a random permutation
$\pi_{j,s'} : [p] \to [p]$ and $\sigma_{j,s'} : [t] \to [t]$.  Label a vertex $v \in
BR^{k+2}(B_{j,s'})$ as an element of $[p]$ and label a vertex $w \in BR^{k+3}(B_{j,s'})$ as an element
of $[p] \times [t]$ in the natural way.  If $(u,w) \in BR^{k+1}(B_{j,s'}) \times BR^{k+3}(B_{j,s'})$
is an edge in $\mathcal{K}'$,  then $(u,w') \in BR^{k+1}(B_{j,s'}) \times BR^{k+3}(B_{j,s'})$ is an edge in
$\Pi^{br}(\mathcal{K}')$, where $w' = (\pi_{j,s'}(w_1), \sigma_{j,s'}(w_2))$ if the label of $w$ is
$(w_1,w_2)$.  All other edges in the broom stay fixed.
\end{itemize}

Now, for each $r = 1,\dots,O(\log n)$, define $\Pi_r$ to be the composition of $\Pi^{lg}$,
$\Pi^{rg}$, $\Pi^{bf}$, and $\Pi^{br}$.  For each $r$, choose all the permutations independently.
As we said before, we set $\mathcal{K}'' = \cup_r \Pi_r(\mathcal{K}')$.

\begin{claim}
For each $(i,j) \in [r]^2$ such that $\mathcal{A}_i$ and $\mathcal{B}_j$ are comparable,
for any $u \in BF^1({\cal A}_i)$ and $v \in BR^{k+3}({\cal B}_j)$,
\begin{align*}
&\Pr_{\Pi_r} [u \text{ is in a }(i,j)\text{-good
  butterfly in }\Pi_r(\mathcal{K'})] \geq \frac{1}{2},
 &\Pr_{\Pi_r}[v \text{ is in a }(i,j)\text{-good broom in } \Pi_r(\mathcal{K}')] \geq \frac{1}{2}
\end{align*}
\end{claim}
\begin{proof}
At least half the butterflies attached to $\mathcal{A}_i$ are good from above, and hence, for every
vertex $u \in BF^1({\cal A}_i)$, the left
group permutations ensure that with probability at least $\frac{1}{2}$, the edges of a good
butterfly are mapped to the butterfly that $u$ belongs to.  The right group permutations provide the
same function for a $v \in BR^{k+3}({\cal B}_j)$.
\end{proof}

\begin{claim}
For each $(i,j) \in [r]^2$ such that $\mathcal{A}_i$ and $\mathcal{B}_j$ are comparable, then for
any $v \in BR^{k+3}({\cal B}_j)$:
\begin{align*}
\Pr_{\Pi_r} [v \text{ is a }(i,j)\text{-helpful vertex } | v \text{ is in a }(i,j)\text{-good broom}] \geq \frac{1}{2}
\end{align*}
\end{claim}
\begin{proof}
At least half of the broomsticks of a good broom are helpful (i.e., incident to a shortcut edge), and
hence for every vertex $v \in BR^{k+3}({\cal B}_j)$, the broom permutations ensure that with
probability at least $\frac{1}{2}$, $v$ is incident to a shortcut edge.
\end{proof}

\begin{claim}
For each $(i,j) \in [r]^2$ such that $\mathcal{A}_i$ and $\mathcal{B}_j$ are comparable, then for
any $u \in BR^1({\cal A}_i)$:
\begin{align*}
\Pr_{\Pi_r} [u \text{ is comparable to a }(i,j)\text{-helpful vertex } | u \text{ is in a
}(i,j)\text{-good butterfly}] \geq \frac{1}{2}
\end{align*}
\end{claim}
\begin{proof}
Suppose $BF(A_{i,s})$ is a $(i,j)$-good butterfly.  For any $(a_{k-2},a_{k-1}) \in [d_*]^2$, let
$S_{(a_{k-2},a_{k-1})}$ be the set of vertices $u$ in $BF^1(A_{i,s})$ such that $u$ is labelled as
$(a_1, \dots, a_{k-3}, a_{k-2}, a_{k-1}, 1)$ where $(a_1,\dots,a_{k-3})$ are arbitrary elements of
$[d_*]^{k-3}$.  Note that all the vertices in a given $S_{(a_{k-2},a_{k-1})}$ are comparable to the
same set of vertices in $BF^{k-2}(A_{i,s})$ and hence, either they are all comparable to an
$(i,j)$-helpful vertex or none of them are.  Hence, at least $\frac{1}{2}$ of the $S_{(a_{k-2},a_{k-1})}$'s must
have every vertex comparable to a helpful vertex in $BF^{k-2}(A_{i,s})$.  Now, because of the random
butterfly permutations, a given vertex $v \in BF^1(A_{i,s})$ falls in such a $S_{(a_{k-2},a_{k-1})}$
  with probability at least $\frac{1}{2}$.
\end{proof}

Thus, for two vertices $u$ and $v$ in comparable $\mathcal{A}_i$ and
$\mathcal{B}_j$ respectively, the probability that $u$ and $v$ are connected by a canonical
path  in $\Pi_r(\mathcal{K}')$ is at least $\frac{1}{16}$.  Since we take $O(\log n)$ independent random
transformations $\Pi_r$, the probability that $u$ and $v$ will be connected by a canonical
path in at least one $\Pi_r(\mathcal{K}')$ is at least $1-\frac{1}{\poly(n)}$.  Taking a union bound
over all vertex pairs in $\mathcal{A}_i$ and $\mathcal{B}_j$ as well as all possible $i$ and $j$, we
find that with probability at least $\frac{1}{2}$, $\mathcal{K}''$ has a canonical path between any comparable $u\in V_1$ and $v \in V_{k+3}$.  Therefore, the desired
$\mathcal{K}''$ exists and is of size at most $O(|\mathcal{K}'| \cdot \log n)$.
\end{proof}

\begin{lemma}[{\bf Rep-cover Extraction Lemma}]
Given $\mathcal{K}''$, a $1$-good \spshort{k} for $\mathcal{G}'$, of size $o(OPT \cdot n^{1-\genpar}
\cdot d_*^2)$, there exists a {\sc MIN-REP} cover of $\spinstance$ of size $o(OPT)$.
\end{lemma}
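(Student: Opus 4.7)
The plan is to construct a {\sc MIN-REP} cover of $\spinstance$ of size $o(OPT)$ from $\mathcal{K}''$ in two pieces. The first piece handles the $o(OPT)$ superedges removed in passing from $\spinstance$ to $\spinstance'$: by the Path Analysis Lemma only $o(OPT)$ superedges are deletable, so including two endpoints of a surviving edge in $\spinstance$ for each contributes only $o(OPT)$ vertices. The second piece must cover every superedge of $\spinstance'$ using $o(OPT)$ additional vertices drawn from the endpoints of shortcuts in $\mathcal{K}''$.

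For the second piece I would partition the shortcuts of $\mathcal{K}''$ into $n^{1-\genpar}d_*^2$ parts indexed by a broomstick position $\gamma=(t,p,q)\in[n^{1-\genpar}]\times[d_*]\times[d_*]$; this $\gamma$ selects exactly one broomstick $v_{\gamma,j}$ per cluster $\mathcal{B}_j$, and the part $\gamma$ is designed to collect the shortcuts used by canonical paths from $V_1$ to $V^\gamma_{k+3}:=\{v_{\gamma,j}\}_j$. Since $|\mathcal{K}''|=o(OPT\cdot n^{1-\genpar}d_*^2)$, if the total size summed over parts is $O(|\mathcal{K}''|)$ then averaging produces some $\gamma^*=(t^*,p^*,q^*)$ receiving only $o(OPT)$ shortcuts. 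From $\gamma^*$ I would extract the set $X^*\subseteq V_k$ of endpoints of its $2\&(k-2)$-shortcuts and the set $Y^*\subseteq V_{k+1}$ of originators of its $2\&(k+1)$-shortcuts, yielding $|X^*\cup Y^*|=o(OPT)$. Because $\mathcal{K}''$ is $1$-good, for every superedge $(\mathcal{A}_i,\mathcal{B}_j)$ of $\spinstance'$ and any $u\in BF^1(\mathcal{A}_i)$ the canonical path from $u$ to $v_{\gamma^*,j}$ is structurally forced to consist of a $2\&(k-2)$-shortcut into some $x\in\mathcal{A}_i$, a {\sc MIN-REP} edge from $x$ to some $y\in B_{j,t^*}\subseteq\mathcal{B}_j$, and a $2\&(k+1)$-shortcut from $y$ to $v_{\gamma^*,j}$. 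Both shortcuts lie in part $\gamma^*$ by construction, so $(x,y)\in X^*\times Y^*$ witnesses this superedge; hence $X^*\cup Y^*$, combined with the deletable-superedge endpoints, is the desired rep-cover of $\spinstance$.

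The main obstacle is controlling the multiplicity of the charging of $2\&(k-2)$-shortcuts to parts. A single $(a,x)$ can participate in canonical paths to many broomstick positions $\gamma$ — potentially one for every pair $(y,v)$ with $(x,y)\in\spinstance$ and $(y,v)\in\mathcal{K}''$ — so a naive multi-charging would cost a factor of $d\cdot d_*^2$ per shortcut and swamp the averaging. Overcoming this requires exploiting the rigid coordinate structure of the generalized butterfly at level $k-2$, the tight matching between the butterfly width $d_*^2$ and the broomstick count $d_*^2$, and the bounded out-degree $d_0n^{1-\genpar}$ inherited from noise-resilient {\sc MIN-REP} through $T_4$; these are precisely the numerical balances that the choices of $\genpar,\eta,\zeta$ in the construction of $\spinstance$ are calibrated to provide.
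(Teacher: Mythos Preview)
Your overall plan is right: two-stage averaging over $n^{1-\delta}\cdot d_*^2$ ``parts'' to locate $o(OPT)$ shortcuts whose $V_k$/$V_{k+1}$ endpoints form a rep-cover of $\spinstance'$, then add two vertices per deletable superedge. The gap is exactly the one you flag, and your proposed remedy does not close it.

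Partitioning only by broomstick position $\gamma=(t,p,q)$ cleanly separates the $2\&(k+1)$ shortcuts (each lands on a unique broomstick) but does \emph{nothing} to separate the $2\&(k-2)$ shortcuts: a single edge $(a,x)$ with $a\in V_{k-2}$ can serve canonical paths to essentially all broomstick positions, so charging it to the $\gamma$'s it serves over-counts by a factor you cannot absorb. The ``numerical balances'' you invoke (the choice of $\delta,\eta,\zeta$, the bounded degree from $T_4$) are not what rescues this step; no arithmetic cancellation is needed or used.

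The actual fix is structural and uses the butterfly symmetrically to the brooms. Index the parts not only by a broomstick coordinate but \emph{also} by the last two coordinates $(x,y)\in[d_*]^2$ of a $V_1$ vertex (together with the group index $s\in[n^{1-\delta}]$, the same $s$ on both sides). The key butterfly fact is that a $V_1$ vertex with final coordinates $(x,y)$ has its $V_{k-2}$ descendants determined by $(x,y)$ alone, and these descendant sets are pairwise disjoint over $(x,y)$. Hence the $2\&(k-2)$ shortcuts, grouped by the $(x,y)$ of their $V_{k-2}$ endpoint, are \emph{disjoint} across parts---each shortcut lies in exactly one part, with no multiplicity to control. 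Averaging now legitimately yields some $(\bar s,\bar x,\bar y)$ with $o(OPT)$ shortcuts in its part.

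For the rep-cover verification you must then start from $u\in BF^1(A_{i,\bar s})$ \emph{with final coordinates $(\bar x,\bar y)$} (not an arbitrary $u$) and target the broomstick $v\in BR^{k+3}(B_{j,\bar s})$ with coordinates $(\bar x,\bar y)$. The $1$-goodness of $\mathcal{K}''$ gives a canonical path; its $2\&(k-2)$ shortcut has $V_{k-2}$ endpoint among the descendants of $u$, hence in the $(\bar x,\bar y)$ part, and its $2\&(k+1)$ shortcut lands on $v$, also in the part. Thus both shortcuts are among the $o(OPT)$ selected ones, and their $V_k,V_{k+1}$ endpoints witness the superedge $(\mathcal{A}_i,\mathcal{B}_j)$ as desired.
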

\begin{proof}
For $s \in [n^{1-\genpar}]$, define $\mathcal{K}''_s$ to be the subgraph of $\mathcal{K}''$ induced by
$\cup_{i=1}^r \big(BF(A_{i,s}) \cup BR(B_{i,s})\big)$.  The $\mathcal{K}''_s$ are clearly disjoint.  By
averaging, there exists an $\bar s$ such that $|\mathcal{K}''_{\bar s}| \leq o(OPT \cdot d_*^2)$.

We further partition the shortcut edges in $\mathcal{K}''_{\bar s}$ into $d_*^2$ parts. For each $x,y\in[d_*]$, let $U_{x,y}$ denote the set of all the nodes in $\cup_{i=1}^r BF^1(A_{i,\bar s})$ with butterfly coordinates $(u_1,\dots, u_{k-2},x,y,1)$, where $u_1,\dots,u_{k-2}\in[d_*]$. To partition the corresponding broomsticks, identify the nodes in $BR^{k+2}(B_{i,s})$ with $[d_*]$, and for each such node $x\in [d_*]$, identify its descendants in $BR^{k+3}(B_{i,s})$ with $(x,1),\dots, (x,d_*)$.
For each $x,y\in[d_*]$, let $U'_{x,y}$ denote the set of all the broomsticks $\cup_{i=1}^r BR^{k+3}(B_{i,\bar s})$ with coordinates $(x,y)$. Define $\mathcal{K}''_{\bar s,x,y}$ to be the subgraph of $\mathcal{K}''_{\bar s}$ induced  by the nodes comparable to the nodes in $U_{x,y}\cup U'_{x,y}$.

Observe that the shortcut edges in different $\mathcal{K}''_{\bar s,x,y}$ are disjoint because (a) different $U'_{x,y}$ are disjoint and (b) the descendants in $V_{k-2}$ of different $U_{x,y}$ are also disjoint. Thus, by averaging, there exist  $\bar x, \bar y$ such that $\mathcal{K}''_{\bar s,\bar x, \bar y} $ contains $o(OPT)$ shortcut edges.

Let $S$ be the set of vertices in $V_k$ and $V_{k+1}$ that are incident to shortcut edges in $\mathcal{K}''_{\bar s,\bar x, \bar y}$.  Then $|S| \leq o(OPT)$.
Observe that $S$ is a rep-cover for the {\sc MIN-REP} instance $\spinstance'_{\bar s}$ obtained
by restricting $\spinstance'$ to the edges between $A_{i,\bar s}$ and $B_{j,\bar s}$. This holds because in $\mathcal{K}''_{\bar s}$, each comparable pair of nodes  in $U_{\bar x,\bar y}\times U'_{\bar x,\bar y}$ is connected by a canonical path. But a {\sc MIN-REP}
cover for $\spinstance'_{\bar s}$ is also a a {\sc MIN-REP}
cover for $\spinstance'$ by definition of $\spinstance$.  Finally, given a rep-cover $S$
of $\spinstance'$, we can get a rep-cover of $\spinstance$ by adding at most $2$ vertices per
super-edge deleted from $\spinstance$ to obtain $\spinstance'$.  Since $o(OPT)$
super-edges were deleted and since $|S| \leq o(OPT)$, we obtain a {\sc MIN-REP} cover for
$\spinstance$ of size $o(OPT)$.
\end{proof}

\section{ $\Omega(\log n)$-Hardness of {\sf 2-TC-Spanner}}\label{app:2hardness}
\begin{theorem}\label{thm:khard}
For any $k \geq 2$, it is \textsc{NP}-hard to approximate the size of the sparsest \spshort k within a ratio of $O( \frac1k ~\log n )$. In particular, {\sc $2$-TC-Spanner} is $\Omega(\log n)$-inapproximable.
\end{theorem}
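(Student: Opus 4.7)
The plan is to adapt the framework of the $k>2$ hardness proof (generalized butterfly plus noise-resilient hard-problem instance plus path analysis plus re-randomization plus extraction), but starting from \textsc{Set-Cover} instead of \textsc{MIN-REP} in order to exploit Feige's $(1-o(1))\ln n$-inapproximability for \textsc{Set-Cover}. I focus on the $k=2$ case first; the $\Omega(\log n/k)$ bound for larger $k$ will follow by replacing the diameter-$2$ butterfly with a diameter-$k$ butterfly and repeating the path analysis over $k+2$ layers.

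Construction. I start from a \textsc{Set-Cover} instance with $n$ sets and $n$ elements, and first apply a family of padding and blow-up transformations analogous to the five operations in the proof of Theorem~\ref{thm:transformations} (disjoint copies, dummy elements, set-blowup with matching incidence, set-blowup with complete incidence, and tensoring). This produces a \emph{noise-resilient} \textsc{Set-Cover} instance whose set size, element frequency, number of sets, and optimum $OPT$ all lie in prescribed narrow ranges, giving five degrees of freedom for later calibration. I then build the TC-spanner instance $\mathcal{G}$ by taking a generalized butterfly of diameter $2$ on $[n^{1/2}]^2\times[3]$, identifying its third strip $V_3$ with the sets, and appending set-to-element edges into a final layer $V_4$. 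The resulting digraph has diameter $3$, and the only comparable pairs that require new shortcut edges are those in $V_1\times V_4$.

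Upper bound. Given a minimum cover $S$ of the underlying \textsc{Set-Cover} instance, I build a $2$-TC-spanner $\mathcal{H}$ by including all edges of $\mathcal{G}$ together with shortcuts from every vertex in $V_1$ to every copy of $S$ sitting inside $V_3$. Every comparable pair $(u,e)\in V_1\times V_4$ is then covered by the length-$2$ canonical path $u\to s\to e$ with $s\in S$ a covering set, yielding $|\mathcal{H}|=\Theta(|V_1|\cdot OPT)$ after balancing $|V_1|$ against $|V_3|$ via the noise-resilience parameters.

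Lower bound. This is the main technical step and mirrors the Path Analysis, Rerandomization, and Rep-cover Extraction Lemmas above. I classify shortcut types in any $2$-TC-spanner $\mathcal{K}$ of $\mathcal{G}$: the canonical type $V_1\to V_3$ corresponds to choosing a covering set, while every other type ($V_1\to V_4$ directly, or $V_2\to V_4$, or longer cross-layer shortcuts) is alternative. Using the tight parameter control in the noise-resilient instance, a counting argument bounds the number of comparable $V_1\times V_4$ pairs that any alternative shortcut can serve, so that only $o(OPT)$ super-edges of the set-element incidence can be covered predominantly by alternative routes. After deleting these deletable super-edges, I rerandomize by composing $O(\log n)$ random permutations of groups and of butterfly coordinates, turning a $\tfrac{3}{4}$-good spanner into a $1$-good spanner at an $O(\log n)$-factor blow-up in size. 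An averaging argument then extracts a set-cover of size $o(OPT\cdot \log n\cdot |\mathcal{K}|/|V_1|)$, so that unless $|\mathcal{K}|=\Omega(|V_1|\cdot OPT/\log n)$, the noise-resilient \textsc{Set-Cover} instance would admit a cover of size $o(OPT)$, a contradiction.

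Combining the bounds, the optimal $2$-TC-spanner has size $\Theta(|V_1|\cdot OPT)$ up to a $\log n$ factor, so any polynomial-time $o(\log n)$-approximation for $2$-TC-Spanner would yield an $o(\log n)$-approximation for the pre-transformed \textsc{Set-Cover}, violating Feige's theorem. The statement for general $k$ uses a butterfly of diameter $k$; the number of alternative shortcut types grows with $k$, and that is exactly what weakens the inapproximability factor to $\Omega(\log n/k)$. The main obstacle, as in the $k>2$ proof, is calibrating the five noise-resilience parameters so that each alternative shortcut type saturates its counting inequality, ensuring that the Path Analysis Lemma loses only $o(OPT)$ super-edges and that the subsequent rerandomization and extraction steps go through.
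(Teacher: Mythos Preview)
Your high-level skeleton---attach a generalized butterfly to a \textsc{Set-Cover} instance, show the optimal $k$-TC-spanner has size $\Theta(n^{2/k}\cdot OPT_{SC})$, and deduce $\Omega(\tfrac{1}{k}\log n)$-hardness---matches the paper. But the machinery you import from the $k>2$ \textsc{MIN-REP} proof is both unnecessary and not obviously sound for \textsc{Set-Cover}.

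The paper's actual proof is much lighter. Instead of five noise-resilience transformations, it uses a single structured variant called $(n,n^{\alpha},n^{\beta})$-\textsc{Nice Set Cover} (elements come in $n^{\alpha-1}$-sized blocks, each set hits a block entirely or not at all, and each set touches at most $n^\beta$ blocks), whose hardness follows by a straightforward reduction from bounded-degree balanced \textsc{Set-Cover}. Instead of rerandomization and $\mu$-good spanners, the lower bound is a two-step Markov argument: if the spanner has fewer than $\tfrac14 n^{2/k}OPT$ shortcut edges, then more than half the vertices in $V_1$ reach fewer than $\tfrac12 OPT$ sets via ``short'' paths, and more than half reach fewer than $\tfrac12 n^{1/k}OPT$ type-$k$ shortcut edges; any vertex in the intersection can therefore cover only $\tfrac12 OPT$ sets plus $o(OPT)$ individual elements (here the block structure of \textsc{Nice Set Cover} is used), contradicting that a cover needs $(1-o(1))OPT$ sets. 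No permutations, no $\mu$-goodness, no rep-cover extraction.

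Two specific issues with your plan. First, the rerandomization lemma hinges on the group structure created by transformation $T_4$ in \textsc{MIN-REP}: permuting groups within a cluster preserves all instance edges because $T_4$ makes the inter-cluster bipartite graph ``group-complete.'' You assert analogous transformations for \textsc{Set-Cover} but do not say what the groups are or why permuting them preserves the set-element incidences; without this, the rerandomization step has no content. Second, your explanation of the $\Omega(\log n/k)$ loss---that the number of alternative shortcut types grows with $k$---is not where the $1/k$ actually comes from. In the paper it arises because the bounded-degree parameter of the \textsc{Nice Set Cover} instance must be $n^{\beta}$ with $\beta=\Theta(1/k)$ (to keep $|G|$ dominated by $n^{2/k}OPT$), and bounded-degree \textsc{Set-Cover} is only $\Theta(\beta\log n)$-hard. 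The path-analysis losses are constant, not $\Theta(k)$.
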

 Our proof uses a reduction from a variant of \textsf{Set Cover}, called $(a, b, c)$-\textsf{Nice Set Cover}. Before defining this problem we define other variants of \textsf{Set Cover}. An instance of $(a, b)$-\textsf{Set Cover}, consists of a bipartite graph $G=A\cup B$, with $|A|= a$ and $ |B|= b$. An instance of $a$-\textsf{Balanced Set Cover} consists of a bipartite graph $G=A\cup B$, with $|A|= |B|= a$. An instance of $(a, c)$-\textsf{Balanced Bounded Set Cover} consists of a bipartite graph $G=A\cup B$, with $|A|= |B|= a$ and such that the degrees of the vertices in $A$ are at most $c$. Finally, an instance of $(a, b, c)$-\textsf{Nice Set Cover} consists of a bipartite graph $G=A\cup B$, with $|A|=a, |B|=b$.
$B$ can be partitioned into disjoint sets $B_i$ such that $B=\cup_{i=1}^{a} B_i$, $|B_i|= \frac{b}{a}$, assuming $\frac{b}{a}$ is an integer. $G$ must satisfy the property that if $v\in A$ is adjacent to $w \in B_i$, for some $1\leq i\leq a$,  then $v$ is adjacent to every element of $B_i$. Moreover, $v$ is adjacent to at most $c$ sets $B_i$. A solution to all these \textsf{Set Cover} variants is a minimum number of vertices in $A$ that cover all the vertices in $B$.

\begin{lemma}\label{lem:hardnsc}
It is \textsc{NP}-hard to approximate a solution to $(n^a, n^b, n^c)$-\textsf{Nice Set Cover} to
within a ratio of $\gamma~ a~c \log n $ for some constant $\gamma$, where $0< c\leq a\leq b$.
\end{lemma}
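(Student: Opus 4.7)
The plan is to prove the lemma by a sequence of approximation-preserving reductions, starting from a classical hardness-of-approximation result for Set Cover and progressively imposing the structural constraints required by the Nice Set Cover formulation. The key observation is that, modulo a uniform blow-up of each right-hand vertex into a block of size $n^{b-a}$, a $(n^a, n^b, n^c)$-Nice Set Cover instance is equivalent to a balanced $(n^a, n^a)$-Set Cover in which every set has degree at most $n^c$; so it suffices to exhibit hardness for the latter and then apply the blow-up.

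First I would invoke the $\Omega(\log m)$-inapproximability of Set Cover on $m$ elements, which follows from the construction of Lund--Yannakakis~\cite{ly93} sharpened by the PCP machinery of Raz--Safra~\cite{rs97} and Feige~\cite{f96}, and holds under $\mathrm{P} \neq \mathrm{NP}$. Instantiating this with $m = n^c$ yields a Set Cover instance that is $\Omega(c \log n)$-inapproximable; because the standard construction produces sets of size at most $m = n^c$, this instance is automatically $n^c$-degree-bounded. To inflate it while preserving both the degree bound and the hardness ratio, I would take $n^{a-c}$ vertex-disjoint copies (and pad trivially if balance is not already achieved), yielding a $(n^a, n^a)$-Balanced Bounded Set Cover with per-set degree at most $n^c$ whose optimum is exactly $n^{a-c}$ times the original optimum, so the hardness ratio $\Omega(c \log n)$ is preserved.

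The final step transports this hardness to Nice Set Cover via element blow-up. Given the $(n^a, n^a)$-Balanced Bounded Set Cover instance $H = (A \cup B', E')$, I would construct $H' = (A \cup B, E)$ by replacing each $v \in B'$ with a block $B_v$ of $n^{b-a}$ fresh vertices, setting $B = \bigsqcup_{v \in B'} B_v$, and adding an edge from $u \in A$ to every vertex of $B_v$ whenever $(u,v) \in E'$. The blocks $\{B_v\}_{v \in B'}$ witness the required partition of $B$, each $u \in A$ is incident to at most $n^c$ blocks, and a subset $S \subseteq A$ covers $B$ in $H'$ if and only if $S$ covers $B'$ in $H$; hence the optimum is preserved exactly, and the $\Omega(c \log n)$ gap survives. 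For the intended parameter regime in which $a, b, c \in (0, 1]$, this dominates the claimed bound $\gamma \cdot a \cdot c \log n$ for a sufficiently small constant $\gamma$.

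The main obstacle I expect is that \emph{a posteriori} degree-bounding of a generic Set Cover instance typically costs a logarithmic factor in the approximation ratio; to avoid such a loss, it is essential to begin from a hard instance whose sets already have size at most $n^c$, and only then inflate it via operations (disjoint copies, balancing pads, block blow-ups) that scale the YES- and NO-case optima by the same multiplicative factor, so that the inapproximability gap is transported through the chain of reductions without degradation.
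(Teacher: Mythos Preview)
Your chain of reductions—hard Set Cover on $n^c$ elements (so the $n^c$-degree bound is free), inflate by $n^{a-c}$ disjoint copies, then block blow-up to Nice Set Cover—is essentially the paper's proof, which goes Set Cover $\to$ Balanced (Claim~\ref{claim:1}) $\to$ Balanced Bounded (via Lemma~2.3 of~\cite{k01}) $\to$ Nice. Your last step is exactly the paper's observation that compressing each block $B_i$ to a single vertex recovers $(n^a,n^c)$-Balanced Bounded Set Cover with the same optimum.

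One step is not as trivial as you suggest. After taking $n^{a-c}$ copies of a hard instance with $n^c$ elements and $N$ sets, you get $n^a$ elements but $n^{a-c}N$ sets; your ``pad trivially'' works only when $N \le n^c$ (pad $A$ with empty sets). If $N > n^c$—and Fact~\ref{claim:0} in the paper explicitly allows the hard instance to have more sets than elements—you must pad the element side, and the natural move (connect each new element to all sets) destroys the $n^c$-degree bound, while connecting each new element to a single set additively shifts OPT and collapses the gap. The paper sidesteps this by first balancing with no degree constraint (Claim~\ref{claim:1}) and only afterwards invoking Kortsarz's Lemma~2.3 to manufacture the bounded-degree instance; you should either cite a hardness source that guarantees $N \le n^c$, or route through the same two-step argument. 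A minor point: your restriction $a,b,c \in (0,1]$ does not match the paper's application, where $b = 1+\tfrac{3}{2k} > 1$; this is harmless since $b$ does not enter the ratio and $a=1$ there, so your $\Omega(c\log n)$ bound coincides with the claimed $\gamma\, a\, c \log n$.
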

\begin{proof}
 We will need the following fact, proved in \cite{rs97}. Earlier, this result
was shown under the weaker assumption that $NP \not \subseteq DTIME(n^{O(\log \log n)})$ \cite{f96,
ly93}.
\begin{fact}\label{claim:0} There is a $d > 0$ for which it is \textsc{NP}-hard to approximate a
solution to $(n^d, n)$-\textsf{Set Cover} to within a ratio of $ \gamma \log n$, for some $\gamma
>0$.
\end{fact}

\begin{claim}\label{claim:1} It is \textsc{NP}-hard to approximate a solution to
$n$-\textsf{Balanced Set Cover} to within a ratio of $\gamma \log n$, for the same $\gamma$ as above.
\end{claim}
\begin{proof} By Fact \ref{claim:0}, $(n^d, n)$-\textsf{Set Cover} is not approximable within a
factor of $\gamma \log n$, unless \textsc{P} $=$ \textsc{NP}. Using a reduction from $(n^d ,
n)$-\textsf{Set Cover}, if $|A|=n^d< n$, transform this instance into an instance where $|A|=|B|$ by
padding $A$ with dummy vertices. If $|A|>n $, transform this instance into an instance where
$|A|=|B|$ by padding the set $B$ with dummy vertices and connecting them to all vertices in $A$.
\end{proof} Applying Lemma 2.3 of \cite{k01} to an instance of $n^c$-\textsf{Balanced Set Cover},
and using Claim \ref{claim:1}, we obtain the following.

\begin{claim}\label{claim:2} It is \textsc{NP}-hard to approximate a solution to $(n^a,
n^c)$-\textsf{Balanced Bounded Set Cover} to within a ratio of $\gamma~ a~c~\log n$, where $\gamma $
is from Claim \ref{claim:0} above.
\end{claim}

To complete the proof of the lemma, notice that a set $M$ is a solution to an instance of $(n^a,
n^b, n^c)$-\textsf{Nice Set Cover} iff $M$ is a solution to the instance of $(n^a,
n^c)$-\textsf{Balanced Bounded Set Cover}, resulted from compressing each set $B_i$ into a single
vertex $b_i$, $1\leq i\leq n^a$. By Claim \ref{claim:2} above, it follows that $(n^a, n^b,
n^c)$-\textsf{Nice Set Cover} is not approximable within a ratio of $\gamma~ a~c \log n$, unless
\textsc{P}$=$\textsc{NP}.
\end{proof}

We now prove the main theorem of this section.
\begin{proof}[Proof of Theorem \ref{thm:khard}]

Let $\alpha=1+\frac3{2k}$ and $\beta=\frac1{5k}$.
Given $G_1=V_{k+1}\cup V_{k+2}$, an instance of $(n, n^{\alpha}, n^{\beta})$-\textsf{Nice Set
Cover}, transform it into the following $k+2$-partite graph $G=V_1\cup V_2\cup \ldots V_{k+1} \cup
V_{k+2}$, with edges directed from $V_i$ to $V_{i+1}$. Let $|V_i|= n$, $1\leq i\leq k+1$, and
$V_{k+2}=n^{\alpha}$. The induced subgraph on $V_1\cup V_2\cup \ldots V_{k+1}$ is $BF(k,n)$, the butterfly
graph of diameter $k$ and width $n$.  Then $|G|\leq k n^{1+\frac1k}+n^{\alpha+\beta}=\Theta~( n^{1+
{\frac{17}{10k}}})$ edges. Notice that there are indeed at most $n^{\alpha+\beta}$ edges from
$V_{k+1}$ to $V_{k+2}$ since there are $n$ vertices in $V_{k+1}$, each of degree at most $n^{\beta +
\alpha - 1}$.


\begin{lemma}\label{lem:opt}
$OPT_S=\Theta( OPT_{NSC} ~n^{\frac2k})$.
\end{lemma}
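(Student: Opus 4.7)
The plan is to establish the two directions separately. For the upper bound $OPT_S \leq O(OPT_{NSC}\cdot n^{2/k})$, I would exhibit an explicit $k$-TC-spanner driven by an optimal NSC cover. For the lower bound $OPT_S \geq \Omega(OPT_{NSC}\cdot n^{2/k})$, I would classify the shortcut edges $TC(G)\setminus G$ in any spanner by the layers of their endpoints, bound per-shortcut coverage, and use an averaging/extraction argument.

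For the upper bound, let $S_0\subseteq V_{k+1}$ be an optimal NSC cover of size $OPT_{NSC}$. I would define $H$ to contain all edges of $G$ together with, for each $v_0\in S_0$, every shortcut $(u_0,v_0)$ where $u_0\in V_{k-1}$ is an ancestor of $v_0$ through the butterfly. Using the coordinate structure of $BF(k,n)$, each vertex in $V_{k+1}$ has exactly $n^{2/k}$ ancestors in $V_{k-1}$, so the added shortcuts number $OPT_{NSC}\cdot n^{2/k}$. I verify $H$ is a $k$-TC-spanner: the only comparable pairs at $G$-distance $k+1$ lie in $V_1\times V_{k+2}$, and for such $(u,w)$, picking $v_0\in S_0$ covering the block of $w$ and the unique ancestor $u_0$ of $v_0$ at layer $V_{k-1}$ on the butterfly path from $u$ to $v_0$ yields the canonical path $u\leadsto u_0\to v_0\to w$ of length $(k-2)+1+1=k$. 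All other pairs have $G$-distance at most $k$. Since each set in NSC covers at most $n^\beta$ blocks, $OPT_{NSC}\geq n^{1-\beta}=n^{1-1/(5k)}$, so $OPT_{NSC}\cdot n^{2/k}\geq n^{1+18/(10k)}$, which dominates $|G|=\Theta(n^{1+17/(10k)})$, giving $|H|=O(OPT_{NSC}\cdot n^{2/k})$.

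For the lower bound, every pair $(u,w)\in V_1\times V_{k+2}$ satisfies $d_G(u,w)=k+1$, so its length-$\leq k$ path in any $k$-TC-spanner $H$ contains at least one shortcut. I would bound the coverage of each shortcut type: canonical length-2 shortcuts from $V_{k-1}$ to $V_{k+1}$ cover at most $n^{(k-2)/k}\cdot n^{\alpha+\beta-1}=n^{1-3/(10k)}$ pairs (using ancestor counts in the butterfly and the NSC degree bound); shortcuts ending in $V_{k+2}$ cover at most $n^{(k-1)/k}$ pairs; internal butterfly shortcuts of length $\ell$ cover at most $n^{\alpha+\beta-\ell/k}$ pairs, maximized at $\ell=2$. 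Since the total number of pairs is $n^{1+\alpha}$, averaging gives $|H|\geq n^{\alpha+3/(10k)}=n^{1+18/(10k)}$. For the NSC instances produced by Lemma~\ref{lem:hardnsc}, $OPT_{NSC}=\Theta(n^{1-\beta})$ up to the logarithmic factors that suffice for the hardness conclusion, so $OPT_{NSC}\cdot n^{2/k}=\Theta(n^{1+18/(10k)})$, matching the lower bound.

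The main obstacle is the lower bound: one must rule out clever combinations of non-canonical shortcuts whose aggregate coverage could conceivably beat the canonical bound. The analysis must uniformly handle every (start-layer, end-layer) combination, for which I rely on two structural ingredients, namely the regular coordinate layout of $BF(k,n)$ (fixing the ancestor/descendant counts in each layer) and the NSC block structure (bounding the $V_{k+1}\to V_{k+2}$ fan-out by $n^{\alpha+\beta-1}$). To upgrade the on-average Markov bound to a true NSC-cover extraction that tracks $OPT_{NSC}$ even if it exceeds $n^{1-\beta}$ by more than a constant, one can additionally define $S^*=\{v_0\in V_{k+1}:v_0$ is the endpoint of at least $n^{2/k}/2$ canonical shortcuts in $H\}$, observe $|S^*|\leq 2|H|/n^{2/k}$, and argue that blocks missed by $S^*$ must be served by non-canonical shortcuts whose per-shortcut inefficiency forces at most $o(OPT_{NSC})$ extra sets to be added to complete the cover.
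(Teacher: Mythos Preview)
Your upper bound is correct and matches the paper's construction exactly: take an optimal NSC cover $S_0 \subseteq V_{k+1}$, add all $V_{k-1}\to v_0$ shortcuts for each $v_0 \in S_0$, and verify this is a $k$-TC-spanner of size $|G| + OPT_{NSC}\cdot n^{2/k} = \Theta(OPT_{NSC}\cdot n^{2/k})$ using $OPT_{NSC}\geq n^{1-\beta}$.

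Your lower bound, however, has a real gap. The global averaging argument correctly yields $|H| \geq n^{1+\alpha}/n^{1-3/(10k)} = n^{1+18/(10k)}$, but this is only $\Omega(n^{1-\beta}\cdot n^{2/k})$, not $\Omega(OPT_{NSC}\cdot n^{2/k})$. Your claim that the hard NSC instances satisfy $OPT_{NSC} = \Theta(n^{1-\beta})$ is false: the entire point of the hardness reduction is that $OPT_{NSC}$ varies over a $\Theta(\log n)$-factor range, and nothing pins it near the trivial $n^{1-\beta}$ lower bound. If the YES/NO gap sits at some threshold $t\gg n^{1-\beta}$, your bound is off by a factor of $t/n^{1-\beta}$, which destroys the $\Omega(\log n)$-hardness. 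Your $S^*$ extraction does not repair this: a block missed by $S^*$ can perfectly well be served via \emph{canonical} shortcuts landing on many different $v_0\notin S^*$, each receiving fewer than $n^{2/k}/2$ shortcuts, so the assertion ``must be served by non-canonical shortcuts'' is simply false.

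The paper's lower bound proceeds quite differently. After reducing w.l.o.g.\ to length-$2$ shortcuts, it does not count pairs globally but instead locates a single failing vertex $u \in V_1$. Two Markov arguments show that (i) more than half of $V_1$ reach fewer than $\tfrac12\,OPT_{NSC}$ vertices of $V_{k+1}$ by paths of length $<k$ (each butterfly-internal shortcut lies on at most $n^{1-2/k}$ of the unique $V_1$--$V_{k+1}$ butterfly paths), and (ii) more than half of $V_1$ reach fewer than $\tfrac12\,n^{1/k}OPT_{NSC}$ type-$k$ shortcut edges. A vertex $u$ in the intersection can then cover $V_{k+2}$ only through fewer than $\tfrac12\,OPT_{NSC}$ ``sets'' in $V_{k+1}$, plus fewer than $\tfrac12\,n^{1/k}OPT_{NSC}$ direct $V_k\to V_{k+2}$ hits; the latter correspond to only $o(OPT_{NSC})$ elements of the compressed universe, whose remaining optimum is still $(1-o(1))OPT_{NSC} > \tfrac12\,OPT_{NSC}$, a contradiction. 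This per-vertex reduction back to the Set Cover optimum is exactly the step your argument is missing.
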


\begin{proof}
First, we show that there is a \spshort k $H$ of $G$ s.t. $|H|=\Theta(
OPT_{NSC} ~n^{\frac2k})$ edges. Then we show that any \spshort k of $G$ must have $\Omega(OPT_{NSC}
~n^{\frac2k})$ edges.

Notice that the only pairs of vertices of $G$ that are not already at distance at most $k$ are the
comparable vertices $u, v$, with $u\in V_1$ and $v \in V_{k+2}$.  In order to connect such pairs by
a directed path of length at most $k$, we need ``shortcut'' edges between different levels $V_i$ and
$V_j$, $i+2\leq j$.  W.l.o.g., we may assume that the only shortcut edges used are those connecting
vertices in $V_i$ to $V_{i+2}$, for some $i$'s. Indeed, a shortcut edge connecting a vertex $u\in
V_i$ to a vertex $v\in V_j$, where $j>i+2$ can be replaced with one edge connecting $u\in V_i$ to a
vertex $w\in V_{i+2}$ that is an ancestor of $v$. In this way, all paths from $V_1$ to $V_{k+2}$
that previously had a path of length at most $k$ still have a path of length at most $k$. Define an
edge $e=(u, v)$ to be a {\it type $i$ edge} if $u\in V_i$ and $v\in V_{i+2}$. We next build a \spshort k of $G$ with $\Theta( OPT_{NSC} ~n^{\frac2k})$ edges.  Let $H$ be the
smallest \spshort k of $G$ which only uses shortcut edges of type $k-1$.

\begin{claim} $|H|=\Theta (n^{\frac2k}~OPT_{NSC})$

\end{claim}
\begin{proof}

Let $O$ be a set of vertices in $V_{k+1}$ that is an optimal solution to the $(n, n^{\alpha}, n^{\beta})$-\textsf{Nice Set Cover} instance.
Connect each vertex $v\in O$ to the set $A_v$ of all the $n^{2/k}$ ancestors of $v$ from level $V_{k-1}$. Direct these edges from $A_v$ to $v$.  Notice that we added $OPT_{NSC}n^{\frac2k}$ edges and the new graph $H'$ is a $k$-TC-spanner. Indeed, each vertex $u\in V_1$ is comparable to each vertex $v\in O$, and thus, there is a vertex $w\in A_v$ that is comparable to $u$. This implies that for every $u\in V_1$ there is a path of length $k-1$ to each of the vertices of $O$, resulting in  a path of length $k$ to each vertex in $V_{k+2}$.
To show that $H$ (the minimum size $k$-TC-spanner with  shortcuts only of type $k-1$) needs at least $OPT_{NSC}n^{\frac2k}$ edges on top of those in $G$, assume otherwise. For $v\in V_{k-1}$, let $n(v)$ be the number of type $k-1$ edges leaving from $v$. By assumption, $\sum_{v\in V_{k-1}} n(v) < OPT_{NSC}~ n^{\frac2k}$. Each vertex in $v\in V_{k-1}$ has exactly $a(v)=n^{1-\frac2k}$  ancestors in $V_1$.
For $u\in V_1$, let $e(u)$ be the total number of type $k-1$ shortcuts leaving from its descendants in $V_{k-1}$. Since there exists a path of length $k$ from $u$ to each vertex in $V_{k+2}$, it follows that $e(u)\geq OPT_{NCS}$.
Notice that $ \sum_{v\in V_{k-1}} n(v) a(v)=   \sum_{u\in V_1} e(u)\geq OPT_{NCS}~ n$. This implies that $\sum_{v\in V_{k-1}} n(v) \geq OPT_{NCS}~ n^{ \frac2k}$, a contradiction to our assumption, concluding that $|H| = |G| + OPT_{NSC}~n^{\frac2k}$.
Next we show that $|H|=\Theta (n^{\frac2k}~OPT_{NSC})$. Indeed, $OPT_{NSC}$ is, by construction,
the same as the size of the optimal solution to an $(n, n^{\beta})$-\textsf{Balanced Bounded Set
Cover} instance, where we must cover $n$ vertices on the right with $n$ vertices of degree at most
$n^{\beta}$ on the left. This implies that $OPT_{NSC}\geq n^{1-\beta}=n^{1-\frac{1}{5k}}$.  Now,
$|G|=\Theta ~( n^{1+ {\frac{17}{10k}}})$ and $|H| = |G| + OPT_{NSC}~n^{\frac2k}$. Since
$OPT_{NSC}n^{\frac2k} \geq n^{\frac2k + 1-\frac{1}{5k}} = n^{1+\frac{18}{10k}}$, this implies that
$|H|=\Theta (OPT_{NSC}~ n^{\frac2k})$.
\end{proof}


Let $M$ be a sparsest spanner on $G$ which possibly uses shortcut edges of types other than
$k-1$.  Assume for the sake of contradiction that $|M|< ~\frac14 ~n^{\frac2k}~OPT_{NSC}$.
A vertex $u\in V_1$ can reach $v\in V_{k+2}$ in at most $k$ steps by using shortcut edges either of
type $i\leq k-1$ or of type $k$. We will show that, under our assumption, there are many vertices in
$V_1$ that can reach at most $\frac 1 2~OPT_{NSC}$ vertices in $V_{k+1}$ by using only edges of some
types $i<k$. Moreover, there are many vertices in $V_1$ that reach only $n^{\frac{1}{k}} OPT_{NSC}$
edges of type $k$. That will be enough to argue that a contradiction must occur, allowing us to
conclude that $|M|=\Theta( n^{\frac2k}~{OPT_{NSC}})$.

 \begin{claim}\label{claim:size1}
 Let $R$ be the set of vertices in $V_1$ that can reach less than $\frac12 ~OPT_{NSC}$ vertices $v\in V_{k+1}$
 in at most $k-1$ steps in $M$. Then $|R|> \frac n2$.
 \end{claim}

 \begin{proof}
 For each vertex $u \in V_1$ and $v \in V_{k+1}$, define an
indicator variable $X_{u,v}$ which is $1$ iff there is a shortcut edge along the
unique path from $u$ to $v$ in $G$. Consider a type $i$ shortcut edge $e=(v_i, v_{i+2})$,
with $v_i\in V_i$ and $v_{i+2}\in V_{i+2}$.
Then there are $n^{\frac{i-1}{k}}$ vertices $u$ in $V_1$ with $u \leq v_i$ . Moreover, there are
$n^{\frac{k-i-1}k}$ vertices $v \in V_{k+1}$ with $v_{i+2}\leq  v$. Thus, this shortcut edge $e$
can set at most $ n^{\frac{i-1}{k} + \frac{k-i-1}{k}} = n^{1-\frac2k}$ different $X_{u,v}$ to $1$. By
assumption, there are less than $\frac14~n^{\frac2k}OPT_{NSC}$ shortcut edges of types $i$, where
$i \leq k-1$. It follows that less than $\frac{1}{4}~n~OPT_{NSC}$ different $X_{u,v}$'s can be set to $1$.
For $u\in V_1$, let $n(u)$ be the number of vertices $v\in V_{k+1}$ that $u$ can reach in less than $k$
 steps. Thus, $\mathbb{E}_{u\in V_1}~[n(u)] < \frac{1}{4}~OPT_{NSC}$. By Markov's inequality,
 $Pr_{u\in V_1} [ n(u) \geq \frac{1}{2}~OPT_{NSC}] < \frac12$. This implies that more than
$\frac12$ of the vertices $u \in V_1$ can reach less than $\frac{n}{2}~OPT_{NSC}$ vertices $v \in V_{k+1}$
in less than $k$ steps. Therefore, $ |R| >\frac n 2$.
 \end{proof}

We say that a vertex $u$ {\it reaches} an edge $e=(v, w)$, if there is a path from $u$ to $v$. For $u\in V_1$ let $t(u)$ be the number of type $k$ edges that $u$ reaches in $M$.
\begin{claim}\label{claim:size2}
Let $S$ be the set of vertices $u\in V_1$ s.t. $t(u)<\frac12 ~n^{\frac1k }~OPT_{NSC}$.
Then $|S|> \frac{n}2$.
\end{claim}

\begin{proof}
Assuming $|M|< ~\frac14 ~n^{\frac2k}~OPT_{NSC}$, there are at most $\frac14 ~n^{\frac2k}~OPT_{NSC}$ edges of type $k$. Each $v\in V_k$ has exactly $n^{1-\frac1k}$ ancestors in $V_1$,
and therefore $\sum\limits_{u\in V_1} t(u) <n^{1-\frac1k}~ \frac14 ~n^{\frac2k}~OPT_{NSC}=\frac14 n^{1+\frac1k}OPT_{NSC}$. Thus, $\mathbb{E}_{u\in V_1} [ ~t(u)] < \frac14 n^{\frac1k}OPT_{NSC}$ and by Markov's inequality, $Pr_{u\in V_1} [ t(u) < \frac12 n^{\frac1k}OPT_{NTS}] > \frac12$.
\end{proof}

Let $T = R \cap S$. The two claims above imply $ |T|\geq 1.$
Now we  argue that a vertex $v \in T$ cannot reach some vertices in $V_{k+2}.$
Recall that an instance of $(n, n^{\alpha}, n^{\beta})$-\textsf{Nice Set Cover}
 was obtained from an instance of $(n, n^{\beta})$- \textsf{Balanced Bounded Set Cover},
 by copying each vertex on the right $ n^{\alpha-1}$ times, which means that the optimal solution
 to one of them is also an optimal solution to the other.
 Suppose we remove $\frac 12 ~n^{\frac1k}OPT_{NSC}$ vertices from $V_{k+2}$. This corresponds to removing at most
$\frac12 n^{\frac1k + 1 -\alpha}OPT_{NSC}= \frac12 ~n^{-\frac1{2k} }OPT_{NSC}= o(1)~ OPT_{NSC}$ vertices from the universe of the related $(n, n^{\beta})$- \textsf{Balanced Bounded Set Cover} instance. Let $OPT_{BSC}$ be the size of a solution to this new \textsf{Set Cover} problem.
Then $OPT_{BSC}\geq (1-o(1))~ OPT_{NSC}$.

 Suppose then that  $v \in T$ could cover all of the elements in $V_{k+2}.$
 Each such vertex $v\in T$ can cover vertices in $V_{k+2}$ in exactly two ways: (1) from the $\frac12 {OPT_{NSC}}$
vertices it reaches in $V_{k+1}$ via paths of length $<k$ using type $i<k$ edges, and (2) by at most $\frac12 n^{\frac1k}OPT_{NSC}$ type $k$ edges it can reach. Thus we must have $OPT_{BSC}\leq \frac12 ~OPT_{NSC}$, which is
a contradiction since  $OPT_{BSC}\geq (1-o(1))~ OPT_{NSC}$.
Thus, $v \in T$ cannot reach all of $V_{k+2}$, and so the optimal \spshort k on $G$ must have size
at least $n^{\frac2k}OPT_{NSC}/4$.\
We can them conclude that $|M| =\Theta(n^{\frac2k}~ OPT_{NSC})$.
\end{proof}

 Suppose now that we could approximate the size of the sparsest \spshort k within $\gamma_1 \log n$ for some $\gamma_1 >0$.
 Then, since $|M| =\Theta(n^{\frac2k} OPT_{NSC})$,
  we could approximate a solution to $(n, n^{\alpha}, n^{\beta})$-\textsf{Nice Set Cover} within $\gamma_2 \log n$, for some $\gamma_2>0$.
  By Lemma \ref{lem:hardnsc} above, $(n, n^{\alpha}, n^{\beta})$-\textsf{Nice Set Cover} cannot be approximated within $\gamma \beta \log n= O(\frac1k) \log n$, unless \textsc{P}$=$\textsc{NP}. Therefore, the size of the sparsest \spshort k
  cannot be approximated within a factor $\gamma_3 \frac{1}k \log n,$ for some $\gamma_3 > 0$, unless \textsc{P}$=$\textsc{NP}.
\end{proof}

\section{Constructing Sparse $k$-TC-Spanners for Path Separable Graphs}\label{app:separable}

\ifnum\final=1
Many TC-spanner constructions can be naturally described by a divide-and-conquer approach.  We cut the
graph into two or more roughly equal-sized components, ensure that comparable vertices in different
components have a path of length at most $k$ between them, and then recurse on each component.  For
instance, this approach adequately describes the TC-spanner constructions given in \cite{AlonSchieber87} for the
line and the rooted tree.  In this section, we extend the divide-and-conquer approach to produce sparse
TC-spanners for various interesting families of digraphs.

The key property that the digraph needs to exhibit in order to admit such an approach is to have
``{good}'' separators.   For our purposes, the most useful notion of separability is path
separability  (\cite{AbrahamGavoille}).

\fi
\begin{definition}[\cite{AbrahamGavoille}]\label{def:pathsep}
Let $G$ be a connected undirected graph with $n$ vertices.  $G$ is {\em $(s,m)$-path
separable} (for $m \geq n/2$) if for any rooted spanning tree $T$ of $G$ either (1) there exists a set
$P$ of at most $s$ monotone paths\footnote{A monotone path in a rooted tree is a subpath of a path with
one endpoint at the root.} in $T$ so that each connected component of $G \backslash P$ is of
size at most $m$, or (2) for some $s' < s$, there exists a set $P$ of $s'$ monotone paths in $T$ so
that the largest connected component of $G \backslash P$ is $(s-s',m)$-path separable.  $G$ is said
to be {\em $s$-path separable} if $G$ is $(s,n/2)$-path separable.  If $G$ is
$s$-path separable, let $S$ be the union of at most $s$ paths in $G$ such that each connected
component of $G \backslash S$ is of size at most $n/2$.  $S$ is called the {\em $s$-path separator}
of $G$. A digraph $G'$ is called an {\em $s$-path separable digraph} if the undirected graph
underlying $TR(G')$ is $s$-path separable.
\end{definition}

In the above definition, the number of vertices in the path separator is left unspecified.
Trees are $1$-path separable, since $S$ can be taken to be the centroid.  Similarly, graphs
of treewidth $w$ are $(w+1)$-path separable. Thorup \cite{thorup-separators} showed that every
planar graph is $3$-path separable.  Indeed, in the case of any planar graph $G$ and any rooted
spanning tree for it, Thorup proved that there exists a set of $3$ root paths of the tree whose
removal disconnects the graph into components of size at most $n/2$. Abraham and
Gavoille \cite{AbrahamGavoille} studied the more general case of $H$-minor-free graphs and proved
the following.
\begin{theorem}[Theorem 1 of \cite{AbrahamGavoille}]\label{thm:agthm}
Every $H$-minor-free\footnote{A graph is called $H$-minor-free if it belongs to a minor-closed
  graph family that excludes $H$.} graph is $s$-path separable, for $s = s(H)$, and an $s$-path separator can be
computed in polynomial time.
\end{theorem}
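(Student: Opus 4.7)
The plan is to extend Thorup's $3$-path separator result for planar graphs to the much broader class of $H$-minor-free graphs, using the Robertson--Seymour graph minor structure theorem as the main structural input. First, I would invoke that theorem: for any fixed $H$, every $H$-minor-free graph $G$ can be decomposed as a clique-sum (over cliques of size bounded by some $c = c(H)$) of pieces $G_1, \dots, G_\ell$, each of which is \emph{almost-embeddable} in a surface $\Sigma$ of Euler genus at most $g(H)$, modulo at most $a(H)$ apex vertices and at most $v(H)$ vortices of pathwidth at most $p(H)$. All components of this decomposition can be computed in polynomial time for fixed $H$.

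Next, I would establish the base case: a graph $G_0$ embedded in a surface of bounded genus admits an $O(g)$-path separator with respect to any rooted spanning tree $T$. Here I would follow the scheme of Thorup's planar proof. In the planar case, Thorup observed that every face boundary is the symmetric difference of two root-to-$T$-paths together with at most one non-tree edge, so three monotone paths in $T$ yield a separator for any face; iterating on larger regions gives a $3$-path separator. For genus $g$, I would first cut $\Sigma$ along $O(g)$ non-contractible curves realized through root paths of $T$ (together with $O(g)$ additional edges) to reduce to a planar piece, and then apply Thorup's planar argument on each resulting piece. This yields a separator consisting of $O(g)$ monotone paths in $T$.

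To handle the full decomposition, I would process apex vertices, vortices, and clique-sums in turn. Each apex vertex, and each vertex of a clique along which a clique-sum is performed, can be treated as a trivial (length-zero) monotone path in $T$, contributing at most $a(H) + c(H)$ extra paths per piece. For a vortex of pathwidth $p$, I would use its path decomposition to separate it via $O(p)$ monotone paths, lifted back to $T$ by following $T$-ancestor paths from the separating bags. Recursing on the clique-sum tree of the decomposition (always choosing the sum that balances the sizes of the two sides), at most a constant number $s(H)$ of monotone paths are added at each level, and by following the path-separable recursion (condition (2) of Definition~\ref{def:pathsep}), the overall bound on the number of monotone paths is $s(H)$, depending only on $H$.

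The main obstacle will be the adaptivity requirement in the definition of path separability: the monotone paths must lie in the \emph{given} spanning tree $T$, not in one tailored to the decomposition. Thorup's planar argument handles this by expressing arbitrary face boundaries as symmetric differences of root paths of $T$, which works because any two vertices in $G$ have a unique $T$-path between them. To make this work across clique-sums and vortices, I would carefully show that the pieces $G_i$ of the decomposition can be chosen compatibly with $T$ (e.g.\ by first orienting $T$ toward the piece being separated and using $T$-paths to realize the surface cuts and vortex separators), so that the separator paths inside each piece remain monotone in $T$ rather than in some auxiliary tree. This compatibility between the fixed tree $T$ and the Robertson--Seymour decomposition is exactly what later allows the cost-function strengthening observed by the present authors.
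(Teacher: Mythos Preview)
This theorem is not proved in the paper at all: it is quoted verbatim as ``Theorem~1 of \cite{AbrahamGavoille}'' and used as a black box. The paper adds only the remark that Abraham and Gavoille's stated definition of path separability differs slightly from Definition~\ref{def:pathsep}, but that their proof actually yields separators satisfying the stronger recursive notion used here (confirmed via \cite{gavoille-pers}). So there is no proof in the paper for your proposal to be compared against.

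For what it's worth, your sketch is broadly in the spirit of the actual Abraham--Gavoille argument, which does proceed via the Robertson--Seymour structure theorem, handling bounded-genus pieces, apices, vortices, and clique-sums. But since the present paper treats the result as an imported theorem, no proof is expected of you here; the relevant content of this paper is the observation about the adaptive cost function (choosing a fresh spanning tree at each recursive step), which is exactly the point you flag in your last paragraph.
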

\noindent
The definition of path separability used by Abraham and Gavoille is slightly different from \defref{pathsep}. \snote{Explain the difference.} However, the separators produced in their proof of the theorem satisfy our notion of path separability \cite{gavoille-pers}. Our main theorem in this section is the following:
\begin{theorem}\label{thm:pathsepspanners}
If $G$ is a graph drawn from a minor-closed graph family that is $s$-path separable, for $s = \Theta(1)$, then $G$ has a
$2$-TC-spanner of size $O(n \log^2 n)$ and, more generally, a $k$-TC-spanner of size $O(n\cdot \log
n \cdot \lambda_k(n))$ where $\lambda_k(\cdot)$ is the $k$-row
inverse Ackermann function.
\end{theorem}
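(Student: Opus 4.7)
The plan is to follow a divide-and-conquer template, using path separators to recursively break the graph into roughly half-sized pieces, and, at each level, paying for a sparse TC-spanner along the separator together with a small number of ``bridge'' shortcuts connecting each off-separator vertex to the separator. First, as foreshadowed in \secref{minor}, I would preprocess $G$ with Thorup's spanning-tree trick: pick an arbitrary rooted spanning tree $T$ of the undirected graph underlying $TR(G)$, and then edge-partition $G$ into $O(\log n)$ subgraphs $G_1,\ldots,G_\ell$ such that for every undirected root-to-leaf path $\pi$ of $T$, the restriction of $\pi$ to the edges of any single $G_i$ is a union of at most two directed paths in $G_i$. This is the key geometric property that will allow us to apply Alon--Schieber's line construction to the separators.

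Next, working inside each $G_i$, I would invoke Abraham--Gavoille (\thmref{agthm}) to produce an $s$-path separator $S$ consisting of at most $s=O(1)$ monotone root paths of the current tree. By Thorup's preprocessing, each such monotone path breaks into at most two dipaths when viewed in $G_i$, so $S$ is the union of $O(1)$ dipaths whose total length is at most $n$. For each such dipath $P$ I would plug in the Alon--Schieber line $k$-TC-spanner of size $O(|P|\cdot\lambda_k(|P|))$ ($O(|P|\log|P|)$ when $k=2$). Then, for each vertex $v$ not on the separator, I would add ``bridge'' shortcuts: if $v$ has descendants on some separator path $P$, a shortcut from $v$ to the first vertex of $P$ reachable from $v$ (and symmetrically for ancestors), plus analogous edges indexed by the levels of the recursive Alon--Schieber construction on $P$. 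A short path-tracing argument shows that the resulting subgraph spans every comparable pair with at least one endpoint on $S$ within $k$ hops, contributing $O(n\cdot\lambda_k(n))$ edges per subgraph $G_i$ at this level.

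Finally, I would recurse on each connected component of $G_i\setminus S$, noting that each component has at most $n/2$ vertices by the definition of $s$-path separability, and that the components are vertex-disjoint so their sizes sum to at most $n$. This gives the recurrence
\[
T(n)\;\le\;\ell\cdot O(n\cdot \lambda_k(n))\;+\;\textstyle\sum_i T(n_i),\qquad n_i\le n/2,\;\sum_i n_i\le n,
\]
which unfolds to $T(n)=O(n\log n\cdot\lambda_k(n))$ in general and $T(n)=O(n\log^2 n)$ when $k=2$, as claimed.

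The main obstacle, as hinted in \secref{minor}, is that the Abraham--Gavoille separator theorem as stated uses a single, globally fixed edge-cost function, whereas our recursion requires the cost function to be re-specified in each recursive call: we want the separator inside each surviving component to consist of tree paths of the \emph{new} rooted spanning tree chosen after Thorup's preprocessing on that component, since otherwise the separator may decompose into $\Omega(n)$ dipaths in the digraph and the Alon--Schieber step fails. The technical heart of the argument is therefore to revisit the proof of \thmref{agthm} and observe that the construction never uses the cost function outside of the current subproblem, so the costs can be redefined adaptively at each recursive level without breaking the $s$-path separability guarantee. Once this stronger, cost-function-adaptive version of Abraham--Gavoille is in hand, the recursion above goes through essentially verbatim.
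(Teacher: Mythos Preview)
Your high-level plan—recursive path separation, Alon--Schieber along the separator dipaths, bridge edges per vertex, and the need for a cost-function-adaptive version of Abraham--Gavoille—matches the paper. But your account of the Thorup-style preprocessing is wrong, and as written it breaks your own recurrence.

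The preprocessing does \emph{not} edge-partition $G$ into $O(\log n)$ subgraphs. What it actually does is layer the vertex set by alternating forward/backward reachability from an arbitrary root into levels $L_0,L_1,\dots,L_t$ and take $G_i$ to be the subgraph induced on $L_{i-1}\cup L_i$. The number $t$ of levels is unbounded; the only usable bound is $\sum_i |V(G_i)|\le 2n$, together with the facts that every comparable pair (and every dipath between them) lies entirely in some $G_i$, and that any monotone path of the spanning tree restricted to a single $G_i$ decomposes into at most two dipaths. With your stated $\ell=O(\log n)$ pieces each contributing $O(n\,\lambda_k(n))$ edges, the per-halving-level cost in your recurrence is $O(n\log n\,\lambda_k(n))$, and over the $\log n$ halving levels you would get $O(n\log^2 n\,\lambda_k(n))$—a $\log n$ factor too many. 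The correct accounting is that the cost \emph{summed over all pieces} at one halving level is $O\bigl(\sum_i |V(G_i)|\,\lambda_k(n)\bigr)=O(n\,\lambda_k(n))$; then $\log n$ halving levels give the claimed bound.

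One smaller point you underplay: the preprocessing (pick a new root, rebuild the level structure and a fresh spanning tree) has to be redone not only at each outer halving step but also \emph{inside} the path-separability recursion of \defref{pathsep}. When case~(2) of that definition applies, only the largest surviving component is promised to be $(s{-}s',m)$-path separable, and the next batch of monotone paths must come from a spanning tree of \emph{that} component. This inner re-rooting is exactly where the adaptive cost function in Abraham--Gavoille is required, not only at the outer halving level as your last paragraph suggests.
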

Since the families of bounded treewidth, planar graphs, and $H$-minor-free graphs (where $H$ is a
fixed minor) satisfy the hypotheses of the theorem, these families have $2$-TC-spanners of size $O(n \log^2 n)$.

\ifnum\final=1
  As mentioned in
\secref{results}, these results are superior to previous constructions used in the access control
literature, \cite{AFB05}.  Additionally, as pointed out in \corref{minor-free-mon-tests},
\thmref{pathsepspanners} also produces monotonicity testers  which have better query complexity than
any previously known tester, for poset families which are minor-closed and whose Hasse graphs forbid a fixed minor.

\fi

\begin{proof}[Proof of \thmref{pathsepspanners}]
First we describe a preprocessing step resembling \cite{thorup-separators} in which the digraph
is divided into subgraphs so that constructing TC-spanners for each subgraph individually results in a
TC-spanner for the entire graph.  Then, we show how to efficiently construct
sparse $2$-TC-spanners for each of these path separable subgraphs. Lastly, we give the construction for general $k$.

\noindent{\bf{Preprocessing Step.} }
Let $G$ be a transitively reduced digraph. \snote{Do you also assume that the underlying undirected graph is connected?} Choose an arbitrary vertex $r \in V(G)$. Let $L_0$ be
the set containing $r$ and all vertices reachable from $r$ by a directed path.  For $i
\geq 1$, let
$L_{2i} \defeq \{v \in G \backslash \cup_{j=0}^{2i-1} L_j : \exists u \in \cup_{j=0}^{2i-1}
L_j \text{ s.t. } u \leadsto v\}$ and
$L_{2i-1} \defeq \{v \in G \backslash \cup_{j=0}^{2i-2} L_j : \exists u \in \cup_{j=0}^{2i-2}
L_j \text{ s.t. } v \leadsto u\}$. Then $L_0,L_1,\ldots,L_t$ partition the vertices in $G$, for some integer $t\leq n$. Evidently,
\begin{claim}\label{claim:levels}
For any vertices $u, v \in G$, if $u \leadsto_G v$ and if $u \in L_i$ and $v \in L_j$, then $|i - j| \leq 1$.
\end{claim}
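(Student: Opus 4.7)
The plan is to repackage the level construction using the cumulative unions $D_\ell := L_0 \cup L_1 \cup \cdots \cup L_\ell$ and to establish the following closure property: $D_\ell$ is closed under taking descendants whenever $\ell$ is even (including $\ell=0$), and closed under taking ancestors whenever $\ell$ is odd. I expect this to fall out by induction on $\ell$. The base case $D_0$ is by definition $\{r\}$ together with all descendants of $r$, hence descendant-closed. For the inductive step, the recursive definition rewrites as $D_{2i} = D_{2i-1}\cup\{v\notin D_{2i-1} : \exists u\in D_{2i-1},\ u \leadsto v\}$, which is precisely the set of all descendants of $D_{2i-1}$; combining this with ancestor-closure of $D_{2i-1}$ gives descendant-closure of $D_{2i}$, since any descendant of a descendant of $D_{2i-1}$ is still a descendant of $D_{2i-1}$. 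A symmetric argument handles $D_{2i-1}$ using ancestor-closure.

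Granting this closure property, the claim will follow from a short contradiction argument. Suppose $u \leadsto_G v$ with $u \in L_i$, $v \in L_j$, and $|i-j| \geq 2$. Consider first the case $j \geq i+2$, so $v \notin D_{j-1} \supseteq D_{i+1}$. If $i$ is even, then $D_i$ is descendant-closed and contains $u$; since $v$ is a descendant of $u$, we get $v \in D_i \subseteq D_{j-1}$, a contradiction. If $i$ is odd, then $i+1$ is even, so $D_{i+1}$ is descendant-closed and contains $u$, and the same reasoning gives $v \in D_{i+1} \subseteq D_{j-1}$, again a contradiction. The mirror case $i \geq j+2$ is dispatched symmetrically: $u$ is an ancestor of $v$, and either $D_j$ (when $j$ is odd) or $D_{j+1}$ (when $j$ is even) is ancestor-closed and contains $v$, forcing $u \in D_{j+1} \subseteq D_{i-1}$, contradicting $u \in L_i$. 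Hence $|i-j| \leq 1$.

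I do not anticipate any real obstacle: the lemma is essentially an unpacking of the definitions of the $L_\ell$, and the $D_\ell$ viewpoint collapses what would otherwise be a messy four-fold parity split into two clean symmetric arguments. The only subtlety worth spelling out is that the parity of $i$ dictates whether one uses descendant-closure of $D_i$ itself or of $D_{i+1}$, and symmetrically for $j$.
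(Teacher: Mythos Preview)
Your proof is correct. The paper itself gives no proof of this claim---it simply prefaces the statement with ``Evidently,'' and moves on---so your formalization via the cumulative unions $D_\ell$ and their alternating closure properties is a clean way to make explicit what the authors took as self-evident.

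One small remark: the closure properties of the $D_\ell$ do not actually require the inductive hypothesis. That $D_{2i}$ is descendant-closed follows from transitivity of $\leadsto$ alone (a descendant of a descendant of $D_{2i-1}$ is again a descendant of $D_{2i-1}$), independently of whether $D_{2i-1}$ is ancestor-closed; the odd case is symmetric. So the mention of ``combining this with ancestor-closure of $D_{2i-1}$'' is superfluous, and the induction collapses to a direct observation. This does not affect the validity of your argument, which goes through as written.
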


For $1 \leq i \leq t$, let $G_i \defeq L_{i-1} \cup L_i$.  By claim \ref{claim:levels}, any two vertices
with a dipath between them must both be contained in some $G_i$.  Moreover, any dipath between them
must lie entirely in $G_i$.  Therefore, a $k$-TC-spanner for $G$ is the union of $k$-TC-spanners for each $G_i$.
Notice that $\sum_i |V(G_i)| \leq 2 |V(G)|$.

We next construct a spanning tree $T_G$ for the undirected graph underlying $G$ that is rooted at
$r$ and has the following property: for any undirected path in $T_G$ from the root, the restriction of
the path to a single level $L_i$ consists of a single directed path.

 $T_G$ can be constructed
inductively.  First, since by definition $r$ reaches all the vertices in $L_0$, a spanning tree of
$L_0$ rooted at $r$ can be constructed with all edges oriented away from $r$.  Now suppose we have a
tree $T_{i-1}$ that is rooted at $r$, spans all vertices in $\cup_{j=0}^{i-1} L_j$, and whose
restriction to each level $0,\ldots,i-1$ consists of a single directed path.  If $i$ is odd,
$T_{i-1}$ can be extended to a tree $T_i$ where all the new edges are oriented towards
$\cup_{j=0}^{i-1} L_j$.  The case when $i$ is even is symmetric.  Our desired spanning tree $T_G$ is
$T_t$.  The following lemma is immediate by the construction.

\begin{lemma}\label{lem:2dipath}
A monotone path in $T_G$ restricted to $G_i$, for any $i \in [t]$, is a concatenation of $\leq 2$ dipaths.
\end{lemma}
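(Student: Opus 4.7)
\textbf{Proof plan for Lemma~\ref{lem:2dipath}.}

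My plan is to show that a monotone path in $T_G$ visits levels $L_0, L_1, L_2, \ldots$ in non-decreasing order (as contiguous blocks), so its restriction to $G_i = L_{i-1} \cup L_i$ is itself a contiguous subpath. Then I will analyze the orientations of tree edges within each block and at the crossing between the two blocks, showing that the restriction splits into at most two dipaths meeting at the boundary. The first step I would take is to record two structural facts about $T_G$: (i) by Claim~\ref{claim:levels}, every tree edge joins vertices of the same level or adjacent levels, since tree edges come from edges of $G$; and (ii) by the inductive construction of $T_G$, when a vertex $v \in L_j$ is attached to $T_{j-1}$, its parent in the tree lies in $\bigcup_{j' \le j} L_{j'}$. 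Combined with (i), the parent lies in $L_{j-1}$ or $L_j$. Hence the level of a vertex is monotonically non-decreasing as we move from the root $r$ toward a leaf, and each level $L_j$ is visited in a single contiguous block along any root-leaf path.

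Given this, the restriction of a monotone path to $G_i$ is an (undirected) subpath of the form $u_1, u_2, \ldots, u_p, w_1, w_2, \ldots, w_q$, with $u_1,\ldots,u_p \in L_{i-1}$ and $w_1,\ldots,w_q \in L_i$ (either block may be empty). Next I would unpack the construction of $T_G$ to pin down edge orientations. Within $L_j$, the new tree edges added when $L_j$ is incorporated must form a spanning sub-forest that, together with edges from $L_j$ to $L_{j-1}$, extends $T_{j-1}$. By the construction, for odd $j$ every such new edge is oriented toward $\bigcup_{j'<j} L_{j'}$: that means the crossing edge from $L_j$ to $L_{j-1}$ is oriented from $L_j$ to $L_{j-1}$, and the sub-forest inside $L_j$ forms an in-arborescence whose roots are the anchors (vertices incident to $L_{j-1}$). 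Symmetrically, for even $j \ge 2$ the crossing edges are oriented from $L_{j-1}$ into $L_j$, and within $L_j$ the sub-forest is an out-arborescence from the anchors. For $L_0$, the spanning tree edges are oriented away from $r$.

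Then I would do a short case analysis on the parity of $i$. In all cases, because a monotone path descends in $T_G$, traversing $u_1, \ldots, u_p$ inside $L_{i-1}$ moves from a tree-ancestor toward a tree-descendant within that level's arborescence. When $L_{i-1}$ is an out-arborescence (i.e., $i-1$ is even, including $L_0$), this gives the dipath $u_1 \to u_2 \to \cdots \to u_p$ in $G$; when $L_{i-1}$ is an in-arborescence (odd), it gives the dipath $u_p \to u_{p-1} \to \cdots \to u_1$. The same dichotomy determines whether $w_1 \to w_2 \to \cdots \to w_q$ or $w_q \to \cdots \to w_1$ is a dipath. Finally, the crossing edge between $u_p$ and $w_1$ is oriented from $L_i$ to $L_{i-1}$ when $i$ is odd, and from $L_{i-1}$ to $L_i$ when $i$ is even. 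Checking the four parity combinations, the orientations always align so that $u_p$ (or, in the symmetric case, the boundary vertex $w_1$) is either the common endpoint at which two dipaths meet head-to-head, or the common starting point from which two dipaths diverge. In either case the restriction is the concatenation of at most two dipaths, proving the lemma.

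The only real obstacle is formalizing the orientation of tree edges \emph{within} each level $L_j$, since the cited construction states this only informally; to make it precise I would show that one can choose the new edges at step $j$ to be an in- or out-arborescence rooted at the anchors, by pasting together, for each $v \in L_j$, a dipath in $G[L_j]$ from $v$ to $L_{j-1}$ (for odd $j$) or from $L_{j-1}$ to $v$ (for even $j$), which exists by the definition of $L_j$ and Claim~\ref{claim:levels}. Once this is in place, the parity case analysis above is routine.
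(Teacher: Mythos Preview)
Your proposal is correct and follows the same approach as the paper, just with considerably more detail. The paper simply declares the lemma ``immediate by the construction'' of $T_G$, relying on the property stated just before it: along any root path, the restriction to a single level $L_j$ is already a single dipath. Since $G_i = L_{i-1}\cup L_i$ and levels are non-decreasing along root paths (so the restriction to $G_i$ is contiguous), the lemma follows once one checks that the single crossing edge between the $L_{i-1}$ block and the $L_i$ block extends one of the two level-dipaths; your parity case analysis does exactly that and is correct.
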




We assume $G$ is transitively reduced and connected.  If $G$ is not connected, we can apply our algorithm on each component.  We describe how to construct $H$, a
$2$-TC-spanner for $G$ of size $O(n \log^2 n)$.

\noindent
{\bf The recursive graph fragmentation.}
First, we apply the preprocessing step described above to $G^0 \defeq
G$; that is, we obtain a spanning tree $T_{G^0}$ and a collection of subgraphs, $G^0_1,G^0_2,\ldots$. \snote{Throughout the paragraph, the sequences should not be infinite} By
definition of path separability, there exists a set $P^0$ of monotone paths on $T_{G^0}$
such that one of two situations happens: (1) all the connected components in $G^0 \backslash P^0$ are
of size at most $n/2$, (2) the largest component of $G^0 \backslash P^0$ is of size
greater than $n/2$ and is path separable.  Let $G^1$ denote the induced subgraph of $G^0$ on the
largest component of $G^0 \backslash P^0$.  We can apply the preprocessing to $G^1$ to obtain a
collection of subgraphs $G^1_1,G^1_2,\ldots$ and a spanning tree $T_{G^1}$ rooted at some arbitrary
vertex in $G^1$.  Again, we find an appropriate set of paths $P^1$ in $T_{G^1}$ and we recurse if
necessary on the largest component of $G^1 \backslash P^1$.  The recursion ends when the graph
has been disconnected into components of size at most $n/2$.  Notice that the total number of paths
in $P^0 \cup P^1 \cup \cdots$ is at most $s=\Theta(1)$, and we then recurse only a constant number of
times.  Let $S \defeq P^0 \cup P^1 \cup \cdots$.

\noindent
{\bf Connecting the cut pairs in $G$.}
Call a pair of vertices $(u,v)$ a \textit{cut pair} if $u \leadsto_G v$ and every directed path from $u$
to $v$ intersects a path in $S$.
We show how to connect every cut pair by a path of length at most $2$.

Repeat the following for
every vertex $v \in V(G)$. Let $I = \{i : v \in V(G^i)\}$ and, additionally, for each $i \in I$, let
$J_i = \{j : v \in V(G^i_j)\}$.  Do the following for each $i \in I$ and each $j \in J_i$.  Let
$P^i_j$ denote the restriction of the paths in $P^i$ to $G^i_j$.  Each undirected path in $P^i_j$
is a concatenation of at most $2$ directed paths by \lemref{2dipath}. Break up the paths in $P^i_j$
into dipaths. Consider some dipath $P \in P^i_j$ which visits the vertices
$p_1, p_2, \ldots, p_m$ in that order, where $m \leq |V(G^i_j)|$. For simplicity of presentation, assume $m$ is a power of $2$.  For each $1 \leq z \leq \log_2 m$, add the following
two edges in $H$: \snote{Describe this in words: mid point is a hub at level 1, midpoints of two halves are hubs at level 2,...; $\log m$ levels; at each level, $v$ connects to the "nearest" hub} (i) an edge from $v$ to $p_{y_1\cdot m/2^z}$ where $y_1 = \min_{y} \{1 \leq y <
2^z : v \leadsto p_{y\cdot m/2^z}$ in $G\}$ and (ii) an edge to $v$ from $p_{y_2\cdot m/2^z}$ where $y_2
= \max_{y} \{1 \leq y < 2^z : p_{y \cdot m/2^z} \leadsto v$ in $G\}$.  If any of the sets inside the
$\min$ or $\max$ is empty, do not add the respective edge.  Finally,\anote{This is just because of
  bad notation, so can be improved.} \snote{Make $m$ (a power of two) - 1 to get rid of this special case.} add an edge $(v,p_m)$ if $v
\leadsto p_m$ in $G$ and $(p_m,v)$ if $p_m \leadsto v$ in $G$.  Repeat this process
for every separator dipath that is a subpath of an undirected path  in ${P}^i_j$.

\noindent
{\bf Outer Recursion.}
For each connected component $C$ of $G \backslash S$, recurse on the subgraph induced by $C$.  $C$
is also path separable since the graph family is minor-closed.

\begin{lemma}\label{lem:finalsep}
The above construction efficiently produces a \spshort 2 on $G$ of size at most $O(n \log^2 n)$.
\end{lemma}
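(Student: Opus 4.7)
The plan is to verify (i) the size bound, (ii) correctness, and (iii) polynomial-time constructibility. For (i), the outer recursion halves component sizes, so has depth $O(\log n)$. At each outer level the inner recursion runs for $O(1)$ rounds since $|S|\leq s=O(1)$ by Theorem~\ref{thm:agthm}. For each round $i$, a vertex $v\in G^i$ lies in at most two subgraphs $G^i_j$ of the preprocessing; and within each such $G^i_j$ the separator $P^i$ restricts, by Lemma~\ref{lem:2dipath}, to $O(1)$ directed subpaths, each contributing at most $2\log_2 m + 2 = O(\log n)$ edges incident to $v$. Thus each outer level adds $O(\log n)$ edges per vertex, for a total of $O(n\log^2 n)$.

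For (ii), I induct on $n$. Fix any $u\leadsto_G v$. If some directed $u$-to-$v$ path avoids $S$, then $u$ and $v$ lie in a common component $C$ of $G\setminus S$ with $|C|\leq n/2$, and the inductive hypothesis applied to $C$ supplies a length-$2$ path in $H$. Otherwise $(u,v)$ is a cut pair; pick a $u$-to-$v$ path and let $w$ be its first vertex in $S$, with $w\in P^{i^*}$. The $u$-to-$w$ prefix avoids $P^0\cup\cdots\cup P^{i^*-1}$, so $u$ lies in the same component of the residual graph as $w$; since $w\in P^{i^*}\subseteq G^{i^*}$ and $G^{i^*}$ is the largest such component, this forces $u\in G^{i^*}$, and likewise $v\in G^{i^*}$. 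Applying Claim~\ref{claim:levels} inside $G^{i^*}$ to each of the reachability relations $u\leadsto w$, $w\leadsto v$, $u\leadsto v$ constrains the preprocessing levels of $u$, $w$, $v$ to span at most two consecutive indices, so the three vertices lie in a common $G^{i^*}_{j^*}$. By Lemma~\ref{lem:2dipath}, $w$ lies on a directed subpath $P=p_1,\ldots,p_m$ of the restriction of $P^{i^*}$ to $G^{i^*}_{j^*}$.

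The key step is to exhibit a scale at which the construction's edges at $u$ and $v$ meet at a common vertex of $P$. Since $P$ is a directed path in $G$, the set $\{i:u\leadsto_G p_i\}$ is an upward-closed interval $[a_u,m]$ and $\{j:p_j\leadsto_G v\}$ is a downward-closed interval $[1,b_v]$, with $a_u\leq b_v$ because $w$ lies in both. Let $z^*$ be the smallest $z$ such that some hub $y\cdot m/2^z$ with $1\leq y<2^z$ lies in $[a_u,b_v]$. By the minimality of $z^*$, the indices $a_u$ and $b_v$ lie strictly between two consecutive hubs at scale $z^*-1$, so the only scale-$z^*$ hub in $[a_u,b_v]$ is the ``new'' hub $p_{h^*}$ sitting at the midpoint of that scale-$(z^*-1)$ block. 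By upward and downward closure, $p_{h^*}$ is simultaneously the smallest-index scale-$z^*$ hub reachable from $u$ and the largest-index scale-$z^*$ hub reaching $v$, so the construction adds both $u\to p_{h^*}$ and $p_{h^*}\to v$. The boundary case $b_v=m$ is handled by the explicit $p_m$ edges. The main subtlety is this final step: without the monotonicity of reachability along $P$, the ``leftmost reachable'' and ``rightmost reaching'' hubs could fail to coincide, which is why it is essential to work with a directed subpath $P$ as furnished by Lemma~\ref{lem:2dipath} rather than with an arbitrary undirected separator path. Part (iii) is immediate since path separators are polynomial-time constructible by Theorem~\ref{thm:agthm}.
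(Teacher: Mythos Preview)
Your size bound and your hub argument are both correct and essentially match the paper's. The gap is in the correctness step where you place \emph{both} endpoints in the same $G^{i^*}$.

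You fix an arbitrary $u$-to-$v$ dipath and let $w$ be its \emph{first} vertex in $S$, with $w\in P^{i^*}$. The prefix $u\to\cdots\to w$ indeed avoids $P^0\cup\cdots\cup P^{i^*-1}$, so $u$ lies in the component $G^{i^*}$ of $G\setminus(P^0\cup\cdots\cup P^{i^*-1})$ that contains $w$. But the ``likewise $v\in G^{i^*}$'' is not symmetric: the suffix $w\to\cdots\to v$ may well pass through a vertex of some $P^{j}$ with $j<i^*$, and then $v$ can land in a different component of $G\setminus(P^0\cup\cdots\cup P^{i^*-1})$. Concretely, take a path $u\to w\to p\to v$ with $w\in P^1$ and $p\in P^0$, where removing $p$ separates $v$ from $\{u,w\}$; here $i^*=1$, $u\in G^1$, but $v\notin G^1$. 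Since the construction only connects $v$ to hubs on separator dipaths in $P^{i}_{j}$ for those $(i,j)$ with $v\in G^i_j$, your edge $p_{h^*}\to v$ need not be present. The subsequent appeal to Claim~\ref{claim:levels} ``inside $G^{i^*}$'' has the same defect: you are using $G$-reachability $w\leadsto v$ where $G^{i^*}$-reachability is what is needed.

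The paper avoids this by choosing the index globally: it takes $i$ to be the \emph{smallest} index such that \emph{every} $u$-to-$v$ dipath meets $\bigcup_{i'\le i}P^{i'}$. Minimality yields a dipath $\pi$ avoiding $\bigcup_{i'<i}P^{i'}$; this $\pi$ must then meet $P^i\subseteq G^i$, so all of $\pi$ lies in $G^i$, putting both $u$ and $v$ (and the needed separator vertex) in a common $G^i_j$. An equivalent local fix in your framework is: for the chosen path $\pi$, set $i^*=\min\{j:\pi\cap P^j\neq\emptyset\}$ and take $w\in\pi\cap P^{i^*}$; then the \emph{entire} $\pi$, not just the prefix, avoids $P^0\cup\cdots\cup P^{i^*-1}$, and the rest of your argument goes through unchanged.
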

\noindent{\em Proof Sketch.} In each $G^i_j$, the separators are nicely structured as only a constant
number of directed paths.  Hence, we can add only $O(|V(G^i_j)| \log |V(G^i_j)|)$ edges in order to connect the cut
pairs present in each $G^i_j$.  Since $\sum_j |V(G^i_j)| $ $\leq$ $ 2n$
and the number of $G^i$'s is $\Theta(1)$, the total number of edges added in each step of the outer
recursion is $O(n \log n)$. The size of the remaining connected components halves after
each graph fragmentation step. So, the outer recursion continues only $\log n$ times, making the
total number of added edges $O(n\log^2 n)$. The construction results in a \spshort{2} because every pair of related vertices $(u,v)$ is a cut pair at some level of the outer recursion.  Then, $u$ and $v$ are both contained in some $G^i$.  One can
check that the above construction ensures that both $u$ and $v$ are adjacent to the same vertex on
some separator dipath intersecting a dipath from $u$ to $v$.  The formal proof is below.

\begin{proofof}{\lemref{finalsep}}
Let us first see why connecting every cut pair by a path of length at most $2$, and recursing on smaller components produce a $2$-TC-spanner for $G$.  Indeed, if $(u,v)$ is a cut pair, then
the first step ensures a path of length at most $2$ between them.  If $(u,v)$ is not a cut pair but
there exist some dipaths from $u$ to $v$, then $u$ and $v$ are in the
same component $C$ of $G \backslash S$, and there exists a dipath between them that lies entirely within this
component.  In this case, constructing a $2$-TC-spanner on the subgraph induced by this $C$ suffices
to connect $u$ and $v$ by a path of length at most $2$.

Let us now argue that this process connects every cut pair by a path of length at most $2$. Consider
some cut pair $(u,v)$.  Let $i$ be the smallest nonnegative integer such that every dipath from $u$ to $v$
intersects a path in $\cup_{i' \leq i} P_{i'}$.  Therefore, there must be a dipath from $u$ to $v$ entirely
contained in $G^i$, and by claim \ref{claim:levels}, it follows that there is a $j$ such that
both $u$ and $v$ are in $G^i_j$.  Suppose $P \in P^i_j$ is a separator dipath of length $m$ (a power
of $2$) that intersects a dipath in $G^i$ from $u$ to $v$.  Let $y_1 = \min_y \{y : u
\leadsto_{G^i} p_y\}$ and $y_2 = \max_y \{y : p_y \leadsto_{G^i} v\}$.  $y_1 \leq y_2$ because
otherwise there cannot be a vertex on $P$ that lies on a path from $u$ to $v$.  One possibility is that
$y_1 = y_2 = m$ in which case the construction ensures that we add the edges $(u,p_m)$ and $(p_m,
v)$.  Otherwise, there exists some $z \in \{1,2,\dots,\log_2 m\}$ such that there is a unique $y \in
\{1,2,\dots,2^z - 1\}$ for which $y \cdot m/2^z$ is in the interval $[y_1,y_2]$. Moreover, $u \leadsto
p_{y_1} \leadsto p_{y\cdot m/2^z} \leadsto p_{y_2} \leadsto v$.  Therefore, the construction above adds the edges
$(u,p_{y \cdot m/2^z})$ and $(p_{y \cdot m/2^z},v)$.

In connecting the cut pairs in $G^i_j$, we add at most $O(s ~|V(G^i_j)| ~\log |V(G^i_j)|)$ because there are at
most $O(s)$ separator dipaths in $G^i_j$ and for any separator dipath $P$, each vertex in $G^i_j$ is
connected to at most $2(\log_2 |V(P)|+1) \leq O(\log |V(G^i_j)|)$ vertices on the path.  Recall that
there are only a constant number of $G^i$s and $\sum_j |V(G^i_j)| \leq 2 |V(G^i)|$.  Thus, if
$S(G)$ denotes the total number of edges in the constructed $2$-TC-spanner for $G$, we have that:
\begin{equation*}
S(G) \leq \sum_{C \text{ is a c.c. of } G\backslash S} S(C) + O\left(\max_i \sum_j O(|V(G_j^i)|\cdot \log
|V(G_j^i)|)\right)
\leq \sum_{C \text{ is a c.c. of } G\backslash S} S(C) + O(n \log n)
\end{equation*}
Since  $|V(C)| \leq n/2$ for any connected component of $G \backslash S$, it follows that $S(G) = O(n \log^2 n)$.

If the strong path separators can be found in polynomial time, as is guaranteed, for example, in
\thmref{agthm}, then it is clear that the above $2$-TC-spanner can be constructed efficiently.
\end{proofof}
We now prove the part of \thmref{pathsepspanners} concerning \spshorts{k} for general $k$. 
Again, assume $G$ is transitively reduced and connected; now, we wish to construct $H$, a $k$-TC-spanner for $G$.  We
perform the same preprocessing as before in order to obtain induced subgraphs $G^0,G^1,\ldots$ and a corresponding $s$-path separator $S = P^0 \cup
P^1 \cup \cdots$.  Define a cut pair $(u,v)$ to be a pair of vertices in $G$ such that $u \leadsto
v$ and every directed path from $u$ to $v$ intersects a path in $S$.  This time, our plan is to
connect all cut pairs by a path of length at most $k$ and then to recurse on each of the connected
components that remain after removing the vertices in the paths of $S$.  By the argument used
earlier, this process produces a $k$-TC-spanner.

Now we show how to connect cut pairs $(u,v)$ with a path of length at most $k$.  Do the following
for every vertex $v \in V(G)$.  Let $I = \{i : v \in V(G^i)\}$ and for each $i \in I$, let $J_i =
\{j : v \in V(G^i_j)\}$.   Do the following for each $i \in I$ and for each $j \in J_i$.  Let
$P^i_j$ be the restriction of the paths in $P^i$ to $G^i_j$.  Break up the undirected paths into
dipaths, increasing the size of $P^i_j$ by a factor of at most $2$.  Do the following for each
dipath $P \in P^i_j$. Let $m$ be the length of $P$ which visits vertices  $p_1,  p_2, \ldots p_m$ in
that order.  Additionally, let $c(\ell)$ be a concave increasing function of $\ell$, which
satisfies $c(\ell) < \ell$ that we specify later.  For simplicity of presentation, we
omit all floors and ceilings. Let $c^*(\ell)$ denote the smallest $z$ such that $c^z(\ell) =
\Theta(1)$ where $c^z(\cdot)$ denotes the $z$th functional power of $c$.  For each  $z$ such
that $1 \leq z \leq c^*(m)$, add the following two edges to $H$: (i) an edge from $v$ to $p_{y_1\cdot c^z(m)}$ where $y_1 = \min_{y} \{1 \leq y <
m/c^z(m) : v \leadsto p_{y\cdot c^z(m)}$ in $G\}$ and (ii) an edge to $v$ from $p_{y_2\cdot c^z(m)}$ where $y_2
= \max_{y} \{1 \leq y < m/c^z(m) : p_{y \cdot c^z(m)} \leadsto v$ in $G\}$.  If any of the sets inside the
$\min$ or $\max$ is empty, do not add the respective edge.

Finally, do the following for every dipath $P \in P^i_j$ that visits vertices $p_1, p_2, \dots, p_m$
in that order.
\begin{center}
\fbox{
\begin{minipage}{6.5in}
\textsc{CONNECT-ON}$(P)$

\begin{enumerate}
\item
Add to $H$ the edges in a $(k-2)$-spanner of the induced subgraph of the transitive closure of $G$
on $\{p_{c(m)}, p_{2c(m)}, \dots, p_m\}$.

\item
Remove the points in the set $\{p_{c(m)}, p_{2c(m)}, \dots, p_m\}$ from $P$ and run
\textsc{CONNECT-ON} on each connected component of $P$ that remains.
\end{enumerate}
\end{minipage}}
\end{center}

This completes our description of $H$.  It is not too hard to see that $H$ is indeed a
$k$-TC-spanner, using reasoning as in the previous section.  The only difference is that
now, for a cut pair $(u,v)$, it could be that $u$ and $v$ are adjacent to different vertices on the separating
dipath.  But we have the guarantee by \textsc{connect-on} above that two path vertices
in the same recursion level of \textsc{connect-on} have a path of length at most $k-2$ between
them.  Hence, it follows that $u$ and $v$ have a path of length at most $k$ between them.

Now, we bound the size of $H$.  First, let us count the number of edges added in each step of the
main recursion (that is, not counting the edges needed to connect pairs within components of size at
most $n/2$).   Denote by $\ell(n,k)$ the quantity $S_k(L_n)$, the size of the optimal $k$-TC-spanner for the directed line on
$n$ vertices.  Let us count all the edges added that are incident to some separating dipath $P$ of
size $m$.  Denote this quantity $f(m)$.  By the definition of \textsc{connect-on}:
\begin{equation*}
f(m) \leq O(n) + n/c(m) f(c(m)) + \ell(m/c(m), k-2)
\end{equation*}
It can be seen that $f(m)$ is minimized when $\ell(m/c(m),k-2) = O(m) = O(n)$.  For example, for $k=4$,
$\ell(n,2) = O(n \log n)$ and in this case, $c(m)$ should be chosen to be $\log m$ since $\frac {m}{ \log m}
\cdot \log \frac{m}{\log m} = O(m)$.  In any case, once $c(m)$ is fixed, the solution to the above
functional equation turns out to be: $f(m) \leq O(n\cdot c^*(n))$.  Also, the number of edges
added  to the vertices not on the separating paths is also $O(n\cdot c^*(n))$.
Making the same arguments as in the analysis of the $2$-TC-spanner, we find that the number
of edges added in total, counting all the paths and all the $G^i$'s is still $O(n\cdot c^*(n))$.
Since there are $\log n$ levels of the recursion at the top level (at each level, the size of the
largest component is decreased by a factor of $2$), the total number of edges added to $H$ is $O(n
\cdot \log n \cdot c^*(n))$.  Finally, we use the results of \cite{AlonSchieber87} about optimal
$k$-TC-spanners of the directed line to conclude that $c^*(n) = O(\lambda_k(n))$.
%
\end{proof}
\section{NP-Hardness of \spshort{k}}\label{sec:nphard}
Theorem \ref{thm:khard} breaks down for $k=\Omega(\log n)$. For these large values of $k$ we have the following.
\begin{theorem}\label{thm:apx-hardness}
 For any $k<n^{1-\epsilon}$ for any $\epsilon > 0$, it is NP-hard to approximate the size of the sparsest \spshort{k} within a factor of $~1+\gamma$,
 for some $\gamma = \Omega \left (\frac{1}{k} \right )$. 
 \end{theorem}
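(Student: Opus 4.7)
The plan is to reduce from the APX-hardness of Vertex Cover on $3$-regular graphs (Berman--Karpinski \cite{BerKarpi}), engineering the reduction so that the transitive-reduction (forced) part of the TC-spanner has size $\Theta(nk)$ while the optional part equals exactly the minimum vertex cover of the input graph. This $\Theta(k)$-to-$1$ ratio is what converts the constant-factor inapproximability of Vertex Cover into a $1+\Omega(1/k)$ inapproximability for \tcspanner.

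\textbf{Construction.} Given a $3$-regular graph $G=(V,E)$ with $|V|=n$ (so $|E|=3n/2$ and $OPT_{VC}(G)\geq n/2$), I will build a digraph $G'$ on $\{s\}\cup V\cup E\cup\{q^v_i : v\in V,\, 0\leq i\leq k-2\}$ as follows. For each $v\in V$, include the directed path $s\to q^v_0\to q^v_1\to\cdots\to q^v_{k-2}\to v$, giving $n$ vertex-disjoint parallel paths of length $k$ from $s$. For each edge $e=\{u,v\}\in E$, include the edges $u\to e$ and $v\to e$. One checks that $G'$ is its own transitive reduction, so every $k$-TC-spanner must contain all $|TR(G')|=nk+3n=\Theta(nk)$ edges of $G'$. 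A direct case analysis shows the only comparable pairs at distance exceeding $k$ in $G'$ are the $(s,e)$ pairs for $e\in E$, which sit at distance exactly $k+1$.

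\textbf{Key structural claim:} $OPT_S(G')=|TR(G')|+OPT_{VC}(G)$. The upper bound is immediate: given a vertex cover $C$ of $G$, adding shortcut edges $s\to v$ for each $v\in C$ furnishes every $(s,e)$ pair with a length-$2$ path $s\to v\to e$ for some $v\in C$ incident to $e$. For the lower bound, the critical observation is that the $n$ parallel paths are vertex-disjoint after $s$, so the descendants of each $q^v_i$ in $TC(G')$ are confined to the single branch $\{q^v_j : j>i\}\cup\{v\}\cup\{e: v\in e\}$. Consequently, every shortcut in $TC(G')\setminus G'$ either (i) emanates from $s$ into some branch $v$ (or jumps directly to some $e$), or (ii) starts and ends within a single branch $v$. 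Given any $k$-TC-spanner $H$ and each $e\in E$, fix a length-$\leq k$ path $\pi_e$ from $s$ to $e$ in $H$ (at least one shortcut appears because $d_{G'}(s,e)=k+1$), and define the covering vertex for $e$ from the first shortcut $\sigma_e$ along $\pi_e$: if $\sigma_e$ sits in branch $v$, take $v$; if $\sigma_e=s\to e$ or $q^v_i\to e$, take an endpoint of $e$. Distinct shortcuts contribute distinct covering vertices, so the extracted cover $C$ of $G$ satisfies $|C|\leq s(H):=|H|-|TR(G')|$.

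\textbf{Deduction.} A $(1+\gamma)$-approximate TC-spanner $H$ for $G'$ has $s(H)\leq \gamma|TR(G')|+(1+\gamma)OPT_{VC}(G)$, giving
\[
\frac{|C|}{OPT_{VC}(G)}\;\leq\; 1+\gamma+\gamma\cdot\frac{|TR(G')|}{OPT_{VC}(G)}\;=\;1+\Theta(\gamma k),
\]
since $|TR(G')|=\Theta(nk)$ and $OPT_{VC}(G)=\Theta(n)$. By Berman--Karpinski there is a constant $c_0>0$ for which $(1+c_0)$-approximating Vertex Cover on $3$-regular graphs is NP-hard, so $(1+\gamma)$-approximating \tcspanner\ is NP-hard whenever $\Theta(\gamma k)\leq c_0$, yielding $\gamma=\Omega(1/k)$ as claimed. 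Finally, $|V(G')|=\Theta(nk)$, so for any fixed $\epsilon>0$ every $k$ with $k<|V(G')|^{1-\epsilon}$ can be produced by an appropriate choice of $n$ (which remains polynomial in $|V(G')|$), covering the full range of the theorem.

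\textbf{Main obstacle.} The substantive step is the branch-confinement argument within the lower bound of the structural claim. If one naively used a single shared backbone $s\to q_0\to q_1\to\cdots\to q_{k-2}$ branching out to all of $V$ at the end, then a single shortcut $s\to q_1$ would simultaneously cover every $(s,e)$ pair, collapsing $OPT_S$ to $|TR|+O(1)$ independently of $OPT_{VC}$ and destroying the amplification. The disjoint-paths design is thus essential, and its verification requires a clean enumeration of the descendant structure of each $q^v_i$ in $TC(G')$ and of the possible first-shortcut types along $\pi_e$. Once this case analysis is in hand, the remainder is bookkeeping.
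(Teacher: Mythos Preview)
Your proof is correct and essentially identical to the paper's: the paper reduces from \textsc{3NodeCover} (which is precisely Vertex Cover where each set has size $\leq 3$ and each element lies in $\leq 2$ sets, hence equivalent to your 3-regular Vertex Cover source), uses the same apex-plus-disjoint-length-$k$-paths construction, and proves the same structural equality $OPT_S=|TR(G')|+OPT_{VC}$, with the lower bound argued by normalizing any spanner to use only apex-to-$V$ shortcuts (your extraction argument is an equivalent reformulation). One wording quibble: the sentence ``distinct shortcuts contribute distinct covering vertices'' is the converse of what you need and is literally false (two shortcuts in the same branch yield the same $v$); the correct and easily verified claim is that distinct covering vertices are charged to distinct first-shortcuts, which is what actually gives $|C|\leq s(H)$.
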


 \begin{proof}[Proof of Theorem \ref{thm:apx-hardness}]
 \newcommand{\nodecover}{{\sc 3NodeCover}}
 We use a reduction from \nodecover\ to show that, unless $P = NP$, \tcspanner\ cannot be approximated within a factor of $1+\Omega \left (\frac{1}{k} \right )$. That is, for constant $k$, the problem is APX-hard. An instance of $3$-Node Cover consists of a collection $D$ of subsets of a universe $X$. Each subset contains at most $3$ elements, and each element of $X$ is contained in at most $2$ subsets. The goal is to output a minimum size subcollection $M\subseteq D$ whose union is $X$. We need the following result in \cite{BerKarpi}:

 \begin{lemma}\label{lem:hard}
 \nodecover\ is NP-hard to approximate within a factor of $~1+c$, for some constant $c>0.$
 \end{lemma}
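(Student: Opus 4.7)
My plan is to prove Lemma~\ref{lem:hard} via an approximation-preserving reduction from the Minimum Vertex Cover problem restricted to bounded-degree graphs. Specifically, I will use the fact, also established in the line of work of Berman--Karpinski and their predecessors (Papadimitriou--Yannakakis, Alimonti--Kann, etc.), that there is a constant $c' > 0$ such that it is NP-hard to approximate Minimum Vertex Cover on 3-regular (cubic) graphs within a factor $1+c'$. This base result is obtained via PCP-based gap-preserving reductions from MAX-3SAT with bounded variable occurrences, a standard and well-documented starting point for APX-hardness proofs.

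Given such a hard instance $G=(V,E)$ of Minimum Vertex Cover on a cubic graph, I will construct the following instance of \nodecover: let the universe be $X \defeq E$, and for each vertex $v \in V$, introduce one subset $S_v \defeq \{e \in E : v \in e\} \subseteq X$, and let $D \defeq \{S_v : v \in V\}$. Since $G$ is 3-regular, $|S_v| \le 3$ for all $v$, which meets the size-3 constraint in the definition of \nodecover. Moreover, each edge $e \in E$ has exactly two endpoints in $G$, so each element of $X$ belongs to exactly two subsets in $D$, meeting the at-most-2 membership constraint. Thus $(X,D)$ is a valid instance of \nodecover.

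Next I will verify a tight correspondence between solutions: a set $C \subseteq V$ is a vertex cover of $G$ if and only if $\{S_v : v \in C\}$ is a set cover of $X$, because an edge $e = \{u,w\}$ is covered by $C$ iff at least one of $u$ or $w$ lies in $C$ iff $e \in S_u \cup S_w$ is covered by the corresponding subsets. Hence the sizes of optimal solutions in the two instances coincide, and any $\rho$-approximation for \nodecover\ on $(X,D)$ translates directly to a $\rho$-approximation for Minimum Vertex Cover on $G$. Therefore an approximation algorithm for \nodecover\ within factor $1+c$ with $c < c'$ would yield a $(1+c)$-approximation for Minimum Vertex Cover on cubic graphs, contradicting the NP-hardness of that problem; choosing $c = c'$ completes the proof.

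The main obstacle is \emph{not} the combinatorial reduction, which is transparent and exactly preserves the optimum, but rather ensuring that the invoked base hardness result is in the right form: we need NP-hardness (not merely APX-hardness against some weaker class) of approximating Minimum Vertex Cover within $1+c'$ on \emph{strictly} 3-regular graphs (or max-degree-3 graphs). If one only has this hardness for higher-degree bounded graphs, a small additional degree-reduction gadget (e.g., replacing high-degree vertices by expander-like subgraphs with a controlled effect on the optimum, as in Berman--Karpinski) would be required before the above reduction is applied. Once the base hardness is in the correct form, the rest of the argument is essentially automatic.
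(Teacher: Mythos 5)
Your argument is correct, but it is worth knowing that the paper does not prove Lemma~\ref{lem:hard} at all: it is stated as a black-box citation of Berman and Karpinski \cite{BerKarpi}, so there is no in-paper proof to compare against. What you have done is essentially unpack that citation. Your key observation — that an instance of \nodecover\ with sets of size at most $3$ and elements of frequency at most $2$ is exactly the set-cover dual of Minimum Vertex Cover on a graph of maximum degree $3$ (universe $=$ edges, sets $=$ edge-neighborhoods of vertices), with the optima coinciding exactly — is the standard and correct way to see that the lemma is equivalent to the known NP-hardness of $(1+c')$-approximating vertex cover on degree-$3$ graphs. That base result does hold in the form you need (MAX SNP-hardness of bounded-degree vertex cover due to Papadimitriou--Yannakakis plus the PCP theorem, with explicit constants in Berman--Karpinski), and since your reduction preserves the optimum exactly, the approximation gap transfers with no loss. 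One small simplification: you do not need strict $3$-regularity, since the definition of \nodecover\ only requires sets of size \emph{at most} $3$, so hardness for maximum-degree-$3$ graphs already suffices and the degree-reduction caveat you raise can be dropped. In short, your proof is a valid self-contained justification of a statement the paper merely cites.
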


 We now give a reduction from \nodecover\ to \tcspanner.
 For a given instance $R$ of \nodecover\ we construct the following graph $G$.
 Let $V_1$ be the set of vertices representing each set $d\in D$. Let $V_2$ be the set of vertices representing each element  $t\in X$. Draw a directed edge from each vertex in $V_1$ corresponding to $d\in D$ to the vertices in $V_2$ corresponding to elements of $d$.
 Add an extra vertex $a$ and draw directed edges from $a$ to every element of
 $V_1$. For each vertex $v\in V_1$ add $k-1$ new vertices $v_1, v_2, \ldots, v_{k-1}$ and connect them via a directed
 path of length $k$ passing through $a, v_1, v_2, \ldots, v_{k-1}, v$ in the given order. Call this path $P(v)$.

 Let $OPT_{S}$ be the size of a minimum \spshort{k} on $G$ and $OPT_{3NC}$ be the size of the solution to the initial instance $R=(D, X)$ of
 \nodecover. Let $|D |=n$.

 \begin{claim}
 \label{claim:opt-calc}
 $OPT_{S} = OPT_{3NC} +kn+ \sum_{d \in D}|d|$.
 \end{claim}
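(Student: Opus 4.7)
The plan is to prove matching upper and lower bounds on $OPT_S$.

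First I will identify the \emph{mandatory} edges, i.e., edges that must appear in every $k$-TC-spanner of $G$. For each $v \in V_1$, all $k$ edges of the path $P(v) = a \to v_1 \to \cdots \to v_{k-1} \to v$ should turn out to be mandatory, because each interior vertex $v_i^{(v)}$ has in $TC(G)$ only $a, v_1^{(v)}, \ldots, v_{i-1}^{(v)}$ as ancestors and no shortcut vertex between consecutive path nodes, so the consecutive edges $(v_i^{(v)}, v_{i+1}^{(v)})$ (and $(a,v_1^{(v)})$, $(v_{k-1}^{(v)},v)$) cannot be avoided. Summing over $v$ gives $kn$ mandatory edges. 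Likewise, each edge $(v,t)$ with $v \in V_1, t \in v$ is the unique outgoing edge of $v$ into $V_2$ and must be present, contributing $\sum_{d \in D} |d|$ more. A routine distance computation in the mandatory subgraph then shows that the only comparable pairs whose distance exceeds $k$ are the pairs $(a,t)$ with $t \in V_2$, which are at distance exactly $k+1$.

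For the upper bound, given an optimal 3NC solution $M^* \subseteq D$ of size $OPT_{3NC}$, I will take $H$ to be the mandatory edges together with the shortcut edges $\{(a,v) : v \in M^*\}$. Since $M^*$ covers every $t \in X$, for each such $t$ there exists $v \in M^*$ with $t \in v$, giving $d_H(a,t) \le 1 + 1 = 2 \le k$; all other distance constraints are already satisfied. This yields $OPT_S \le OPT_{3NC} + kn + \sum_{d \in D} |d|$.

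For the lower bound, let $H$ be any $k$-TC-spanner and let $F$ be its set of non-mandatory (``extra'') edges, so $|H| \ge kn + \sum_d |d| + |F|$. I will assign to each $e \in F$ at most one set in $D$: an extra edge of the form $(a,v)$, $(a,v_i^{(v)})$, $(v_i^{(v)},v_j^{(v)})$, $(v_i^{(v)},v)$, or $(v_i^{(v)},t)$ with $t \in v$ is mapped to $v$, while an edge $(a,t)$ with $t \in V_2$ is mapped to an arbitrary $v \ni t$. Letting $M$ be the union of these images, $|M| \le |F|$, and I will argue that $M$ is a valid 3NC solution as follows: for each $t \in V_2$, every $a$-to-$t$ path in $H$ of length $\le k$ must contain at least one extra edge (else the mandatory subgraph would yield distance $k+1$), and by inspecting the $TC(G)$-ancestors of the edge's head vertex, the chosen extra edge will be forced to be associated with some $v \ni t$. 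Consequently $OPT_{3NC} \le |M| \le |F|$, yielding $OPT_S \ge kn + \sum_d |d| + OPT_{3NC}$.

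The main subtlety will be the case analysis in the lower bound: I must verify that for every extra edge $e = (u_i, u_{i+1})$ appearing on a short $a$-to-$t$ path, the head $u_{i+1}$ lies on $P(v)$ for some $v \ni t$ (or equals $t$ itself), so that the assigned set actually contains $t$. This should follow cleanly from the fact that the $TC(G)$-ancestors of each $v_j^{(v)}$ and of $v$ itself lie exclusively on $P(v)$ together with $a$, so the endpoints of a shortcut leading toward $t$ cannot mix path indices from different $v$'s.
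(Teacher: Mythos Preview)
Your proposal is correct and follows essentially the same approach as the paper. Both identify the same set of mandatory (transitive-reduction) edges and then reduce the remaining ``extra'' edges to a node cover. The paper phrases this as a normalization---replacing any shortcut of the form $(v_i^{(v)},t)$ or any shortcut internal to $P(v)$ by an edge $(a,u)$ with $u\ni t$ without increasing the spanner size---whereas you phrase it as an explicit charging map from extra edges to sets in $D$ and check that the image is a valid cover. These are two presentations of the same argument; your version is arguably cleaner in that it separates the upper and lower bounds and enumerates all possible shortcut types (including $(a,t)$, which the paper leaves implicit).
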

 \begin{proof}
 We show that there is a sparsest spanner $H$ that contains only edges from $V_1$ to $V_2$, from vertex $a$ to some
 vertices in $V_1$ and the edges on the paths $P(v)$, for all $v\in V_1$.

 Note that all edges from $V_1$ to $V_2$ need to be included in a sparsest spanner,
 since each of them forms a unique directed path connecting its endpoints. There are $\sum_{d \in D}|d|$ such edges. Similarly, for each vertex $v\in V_1$, all the edges on the path $P(v)$ need to be included in $H$. In total, there are $kn$ such edges.

  For $v\in V_1$, suppose that $H$ contains an edge $(v_i, t)$, for some $v_i\in P(v)$
 and $t\in V_2$. We claim that such an edge $(v_i, t)$ can be removed and substituted with an edge $(a, u)$, where $u\in V_1$ s.t. $u$ is adjacent to $t$. Indeed, all vertices on $P(v)$ except for $a$ are already at distance $\leq k$ from $t$ in $H$. Thus, to reach $t$ from $a$ it is enough to include in $H$ a directed edge from $a$ to any $u\in V_1$ that is adjacent to $t$.

 Similarly, an edge $e$ between vertices on a path $P(v)$ in $H$ can be replaced by an edge $(a, v)$. Indeed, such an edge $e$ can only be useful to connect $a$ to some vertices in $V_2$, via a path that passes through $v$.

 Therefore, among the edges from $a$ to $V_1$ a sparsest spanner need only contain the edges that connect $a$ to a minimum set of vertices in $ V_1$ that cover $V_2$. Thus, there are exactly $OPT_{3NC}$ such edges.
 \end{proof}

Suppose that there exists $0<\gamma$ and an algorithm $\mathcal{A}$ that computes the size of a sparsest \spshort{k}
within $1+\gamma$. Namely, $\mathcal{A}$ outputs $s$, such that $OPT_{S}\leq s\leq (1+\gamma) OPT_{S}$.
We show that $\gamma \geq \frac{c}{19+6k}$, where $c$ is the constant from Lemma \ref{lem:hard}.

 Each set $d$ contained in an optimal solution to $R$ covers at most $3$ elements of the universe $X$. Therefore $|X| \leq 3 OPT_{3NC}$. Any element of $X$ is contained in at most $2$ sets of $D$, and therefore, $|X|\geq \frac{n}{2}.$ This implies that $n\leq 6 OPT_{3NC}$.
Let $s'=s-(k+3)n$.
Then
\begin{align*}
OPT_{3NC}\leq s' &\leq OPT_{3NC}+\gamma (OPT_{3NC}+kn+3n)\\
                 &\leq OPT_{3NC}+ \gamma(OPT_{3NC}+6k OPT_{3NC}+18 OPT_{3NC})\\
                 &=OPT_{3NC}( 1+ \gamma(19+6k)).
\end{align*}

Finally, using \ref{lem:hard}, it follows that $\gamma \geq \frac{c}{19+6k}$. Thus $\gamma=\Omega(\frac{1}{k})$.

 \end{proof}
\end{document}